\documentclass[11pt]{article}
\usepackage[margin=1in]{geometry}
\usepackage{times}

\usepackage{float}
\usepackage{amsfonts}
\usepackage{amsmath,amsthm,amssymb}
\usepackage{latexsym}
\usepackage{epic}
\usepackage{epsfig}
\usepackage{hyperref}
\usepackage{multirow}
\usepackage{hhline}
\usepackage{verbatim}
\usepackage[justification=centering]{caption}
\usepackage{enumitem}
\usepackage{color}
\allowdisplaybreaks[1]
 \usepackage[numbers]{natbib}
\usepackage{algorithm}
\usepackage{algorithmic}

\def\rev{\qopname\relax n{\mathbf{Rev}}}
\def\wel{\qopname\relax n{\mathbf{Wel}}}

 \newtheorem{theorem}{Theorem}[section]

\newtheorem{lemma}[theorem]{Lemma}
\newtheorem{proposition}[theorem]{Proposition}
\newtheorem{claim}[theorem]{Claim}

\newtheorem{remark}[theorem]{Remark}

\newtheorem{example}[theorem]{Example}

\newtheorem*{conjecture*}{Conjecture}
\newtheoremstyle{nonindented}{1ex}{1ex}{}{}{\bfseries}{.}{.5em}{}
\newtheoremstyle{indented}{1ex}{1ex}{\itshape\addtolength{\leftskip}{0.6cm}\addtolength{\rightskip}{0.6cm}}{}{\bfseries}{.}{.5em}{}
\theoremstyle{nonindented}
\theoremstyle{indented}
\theoremstyle{plain}

\newcommand{\bv}{\mathbf}

\newcommand{\set}[1]{\left\{ #1 \right\}}

\newcommand{\floor}[1]{\lfloor {#1} \rfloor}

\renewcommand{\tilde}{\widetilde}
\renewcommand{\bar}{\overline}

\DeclareMathOperator{\poly}{poly}



\def\min{\qopname\relax n{min}}
\def\max{\qopname\relax n{max}}
\def\maxs{\qopname\relax n{max2}}

\def\Pr{\qopname\relax n{\mathbf{Pr}}}
\def\Ex{\qopname\relax n{\mathbf{E}}}

\newcommand{\RR}{\mathbb{R}}

\def\R{\mathcal{R}}
\def\S{\mathcal{S}}
\def\O{\mathcal{O}}

\def\V{\mathcal{V}}
\def\X{\mathcal{X}}

\def\eps{\epsilon}




\newcommand{\INPUT}{\item[\textbf{Input:}]}
\newcommand{\OUTPUT}{\item[\textbf{Output:}]}
\newcommand{\PARAMETER}{\item[\textbf{Parameter:}]}


\newcommand{\maxi}[1]{\mbox{maximize} & {#1 } & \\}

\newcommand{\st}{\mbox{subject to} }
\newcommand{\con}[1]{&#1 & \\}
\newcommand{\qcon}[2]{&#1, & \mbox{for } #2.  \\}
\newenvironment{lp}{\begin{equation}  \begin{array}{lll}}{\end{array}\end{equation}}
\newenvironment{lp*}{\begin{equation*}  \begin{array}{lll}}{\end{array}\end{equation*}}


 \begin{document}
 	
 \title{Targeting and Signaling in Ad Auctions}
 
 \author{
 	Ashwinkumar Badanidiyuru \\
 	Google\\
 	{\tt ashwinkumarbv@google.com}
 	\and
 	Kshipra Bhawalkar  \\
 	Google\\
 	{\tt kshipra@google.com}
 	\and
 	Haifeng Xu\thanks{Part of this work was conducted while Haifeng Xu was an intern at Google Research, Mountain View, CA. We would like to thank Inbal Talgam-Cohen and Konstantin Zabarnyi for pointing out an inaccuracy in the proof of Lemma \ref{lem:BayesAlgo:full} in an earlier version of the paper.} \\
 	University of Southern California\\
 	{\tt haifengx@usc.edu}	
 }

 \date{}

\begin{titlepage}
	\clearpage\maketitle
	\thispagestyle{empty}

\begin{abstract}
	Modern ad auctions allow advertisers to target more specific segments of the user population. Unfortunately, this is not always in the best interest of the ad platform -- partially hiding some information could be more beneficial for the platform's revenue. In this paper, we examine the following basic question in the context of second-price ad auctions: how should an ad platform optimally reveal information about the ad opportunity to the advertisers in order to maximize revenue?  We consider a model in which bidders' valuations depend on a random state of the ad opportunity. Different from previous work, we focus on a more practical, and challenging, situation where the space of possible realizations of ad opportunities is \emph{extremely large}. 
	We thus focus on developing algorithms whose  running time is polynomial in the number of bidders, but is  \emph{independent} of the number of ad opportunity realizations. 
		
We assume that the auctioneer can commit to a \emph{signaling scheme} to  reveal noisy information about the realized state of the ad opportunity, and examine the auctioneer's algorithmic question of designing the optimal signaling scheme. We first consider that the auctioneer is restricted to send a \emph{public} signal to all bidders. As a warm-up,  we start with a basic (though less realistic) setting in which the auctioneer knows the bidders' valuations, and show that an $\eps$-optimal scheme can be implemented in time polynomial in the number of bidders and $1/\eps$. We then move to a well-motivated Bayesian valuation setting in which the auctioneer and bidders both have private information, and present two results. First, we exhibit a characterization result regarding approximately optimal schemes and  prove that any constant-approximate public signaling scheme must use exponentially many signals. Second, we present a ``simple" public signaling scheme that serves as a constant approximation under mild assumptions. 
	
Finally, we initiate an exploration on the power of being able to send different signals \emph{privately} to different bidders.  In the basic setting where the auctioneer knows bidders' valuations, we exhibit a polynomial-time private scheme that extracts \emph{almost full surplus} even in the worst Bayes Nash equilibrium. This illustrates the surprising power of private signaling schemes in extracting revenue.

\end{abstract}

\end{titlepage}

\newpage 
\section{Introduction}

%
Every single day, through various online advertising platforms,  advertisers' ads are seen by billions of web users. Each ad opportunity, usually associated with a randomly arrived web user, can be depicted by various characteristics of the web user. A particular advantage of online advertising is that it can allow advertisers to target specific user populations. To do so, the platform (i.e., the \emph{auctioneer}) can selectively reveal some characteristics of  the web user to advertisers (i.e., \emph{bidders}). For example, the function of allowing advertisers to set different keywords and bids for different ad campaigns, as permitted by most ad platforms,  is essentially equivalent to revealing partial information to advertisers before their bidding. Features like these have motivated the following basic question in ad auctions: how much information regarding the web user should the platform reveal to advertisers? This is particularly important given increasingly refined information about web users known by today's ad platforms like Facebook or Google. In this paper, we examine this question under the objective of maximizing the auctioneer's \emph{revenue}.

Like \cite{Emek12,Miltersen12}, we focus on the single-item second-price auction and assume that each advertiser's  valuation for the ad opportunity depends on a common random \emph{state of nature}. A natural interpretation of the state of nature is the characterization of the randomly arrived web user, which affects advertisers' valuations for  the ad opportunity. The state of nature is drawn from a common-knowledge prior distribution, however only the auctioneer can observe its realization. We model the auctioneer's information revelation policy as a \emph{signaling scheme}, which is a randomized map from the realized state of nature to a set of signals. There are two natural signaling models: \emph{public} and \emph{private} signaling schemes. In a public signaling scheme, the auctioneer is constrained to send a public signal to bidders, therefore each bidder performs a Bayesian update and reaches a common posterior belief regarding the state of nature.  When the auctioneer is allowed to send different signals to different bidders, the signaling scheme is  private. 

We consider both the \emph{known valuation} and \emph{Bayesian valuation} settings. In the known valuation setting, the auctioneer knows bidders' values at any given realized state. 
Though admittedly less realistic, the known valuation setting is a natural first step for studying signaling in auctions and has been examined by several previous works \cite{Emek12,Miltersen12,Hummel15}.
In the (more realistic) Bayesian valuation setting, we assume that each bidder's value depends on a random private bidder type that is unknown to the auctioneer. To account for bidders' strategic behaviors in the auction, we adopt the dominant-strategy truth-telling equilibrium  whenever it exists, and otherwise,  the worst Bayes Nash equilibrium (particularly, in private signaling schemes).  We remark that the signaling schemes considered in  \cite{Emek12,Miltersen12} are all public. 

When designing signaling schemes for modern ad auctions, a primary computational challenge is the large diversity of web users that an ad opportunity might correspond to. Web users vary in their interests, demographics, browsing history, etc. In the advertising system, such user information is usually captured by thousands of features and each of their (exponentially many) configurations could possibly be realized. Translated to the model, this means that the space of possible realizations of the state of nature can be \emph{extremely large}.  
One key difference between our work and previous work on signaling in auctions is that we take into account this challenge of extremely large number of states of nature, and restrict our attention to algorithms with running time \emph{independent} of this parameter.  


We note that signaling naturally relates to  \emph{targeting} in ad auctions. In fact, as we will elaborate later, the Bayesian valuation setting we consider is motivated by the ad auction feature called \emph{re-targeting}.  A signaling scheme essentially specifies the extent to which each advertiser can target the web users.   Unsurprisingly, it is usually not  in the best interest of the auctioneer to reveal full information to advertisers: over refined targeting can result in fewer advertisers competing for the ad opportunity and possible a large gap between the highest and second-highest bid  \cite{Hummel15,Korula2016}. This is detrimental to the revenue in second price auctions. On the other hand, completely obscuring user information is also suboptimal since advertisers will bid conservatively due to their uncertainty regarding the value of the ad opportunity. Our goal is thus to design optimal  signaling schemes that  properly balance the amount of competition and bidders' valuations in the auction.   Next, we elaborate our results in detail.\footnote{Note that though our work is primarily motivated by ad auctions, our results are applicable to any second-price auction with large number of possible item realizations.} 

\subsection{Our Models and Results}
We first consider the design of optimal public signaling schemes. As a warm-up, we start with the known-valuation setting and show that a simple Monte-Carlo algorithm implements an $\eps$-optimal public scheme and runs in $\poly(n, \frac{1}{\eps})$ time where $n$ is the number of bidders. Note that  this setting is also studied by  \cite{Emek12,Miltersen12}, however the running time of their algorithms depends polynomially  on the number of states of nature.  

Next, we move to a natural Bayesian-valuation model motivated by \emph{re-targeting}. Re-targeting is a feature in ad auctions that allows each advertiser to target a set of web users (e.g., those who visited the advertiser's website) and use a different bid for users from that set.\footnote{Re-targeting is allowed in several ad platforms including Google's and Facebook's. Each advertiser will specify a targeted set of web users (usually using their user IDs) in the platform system. When a web user arrives, the platform informs each bidder whether the user is in their targeted set or not. Then each bidder submits a bid. We refer curious readers to the help pages from various companies for more details, e.g., \url{https://support.google.com/adwords/answer/1704368?hl=en}.} In this setting, the state of nature is a \emph{vector} of binary feature $\theta_i \in \{0,1\}$ for each bidder $i$,  indicating whether the web user is the bidder's targeted ones or not. Therefore, the number of possible states is exponential in the number of bidders. We assume that  bidder $i$'s valuation is determined by $\theta_i$ and a private bidder type that is i.i.d. across bidders. Note however, $\theta_i$'s may be \emph{correlated} since some bidders may target similar items.  For this Bayesian-valuation model, we first prove that any public signaling scheme that uses sub-exponentially many signals can \emph{not} guarantee any constant fraction of the optimal revenue. 
We note that, though \emph{not} a measure of computational complexity, the number of signals can be viewed as a natural complexity measure of the signaling scheme, which is analogous to the menu-size complexity studied in the optimal mechanism design literature \cite{Hart2013,Wang2014,Babaioff2017}. Complementing this result, we  show a  signaling scheme that serves as a constant approximation under mild assumptions. Our scheme is ``simple" to implement and tracks the following conventional wisdom. In particular, it reveals full information to bidders whenever there is  enough competition in the auction and otherwise, it pools bidders to form competition by selectively obscuring information.  

Finally, we initiate an exploration on the power of private signaling schemes, which to the best of our knowledge has not been studied in the context of auctions.  
We first exhibit an example showing that the optimal private scheme can outperform the optimal public scheme \emph{arbitrarily} in terms of the auctioneer's revenue.  This motivates the algorithmic question of computing the optimal private signaling scheme. 
Here, we restrict our attention to the known-valuation setting, which is basic but  already reveals sufficient non-triviality due to intricate equilibrium bidding behaviors and the equilibrium selection issue.  
We present  a polynomial-time private signaling scheme that extracts almost the \emph{full surplus} even in the \emph{worst} Bayes Nash equilibrium. This illustrates the surprising power of private signaling schemes in extracting revenue.


\subsection{Additional Related Work}
The study of information asymmetry in auctions dates back as early as 1980s \cite{Rothkopf1969,Milgrom1979,Milgrom1982b,Milgrom1982}.  In a seminal paper, Milgrom and Weber \cite{Milgrom1982} prove that in common-value auctions with bidders of symmetric random signals, among all public signaling schemes, revealing full information maximizes the auctioneer's revenue in most commonly seen auction formats (i.e., descending, ascending, second-price and first-price auctions). However, later work has observed that such full transparency is not optimal in general, and the auctioneer could increase revenue by revealing partial information to bidders \cite{Perry1999,Campbell2000,Feinberg2005,Syrgkanis2013}. This has motivated a surge of recent interest in the algorithmic question of computing the optimal signaling scheme that maximizes revenue in a given auction format.  \citet{fu2012} prove that revealing full information is optimal, among all public schemes, in Myerson's optimal auction. However, the optimal scheme is usually a non-trivial one and involves randomization in second price auctions \cite{Emek12,Miltersen12}. Recently, \citet{Daskalakis2016}  characterize the optimal joint design of the signaling scheme and auction mechanism. 

To our knowledge, \citet{Emek12,Miltersen12} are the first to study optimal signaling design in the known-valuation setting. They observe  a linear program for computing the optimal public scheme in time polynomial in the number of bidders and states of nature. However, this is not satisfactory for ad auctions since the space of all possible states of nature is usually extremely large. It is generally impractical to solve such a large LP quickly.  Though not explicitly stated, \citet{Hummel15} also study the known-valuation setting with exponentially many states of nature. However, they do not consider the design of optimal signaling but instead compare the revenue between two simple schemes, i.e., revealing no information or full information. 

\citet{Emek12} also study the Bayesian valuation setting, though in a more idealized model. Specifically, they assume that given a state of nature, the auctioneer's uncertainty regarding the bidders's valuation is captured by a distribution over polynomially many value profiles. This is somewhat restrictive since the space of value profiles typically increases exponentially in the number of bidders.  However, even under this simplification, it is shown that computing the optimal public signaling scheme is NP-hard. In a follow-up work, \citet{Cheng2015} derive a PTAS for the Bayesian model of \cite{Emek12}. Their algorithm crucially relies on the fact that there are polynomially many states of nature, thus does not apply to our setting.  

All the mentioned algorithmic work so far focuses on public signaling. Recent work of \cite{Arieli2016,Babichenko2016b,Dughmi2017} studies  optimal private signaling for agents with no inter-dependence. Therefore, given a signal, each agent makes a decision that is independent of others. The problem usually becomes more intricate when there are strategic interactions among agents (e.g., in an auction). \citet{Bergemann2016} characterize the distributions of players' action profiles at equilibrium under an asymmetric information structure  as the set of Bayes correlated equilibria. To our knowledge, there has not been much progress on  the design of  optimal private signaling in such  strategic settings. \citet{Taneva2015} characterizes the optimal private signaling in a two-agent two-action game.  \citet{Vasserman2015} consider private signaling in non-atomic routing, and focus on analyzing the cost decrease via private signaling in a graph of parallel links.

\section{Preliminaries}
\label{sec:prelim}
\subsection{The Model}
We focus on the single-item second price auction, which is widely adopted by ad exchange platforms for selling display ads, video ads, etc. Let $n$ denote the total number of bidders (``he" or ``them"). Vector $v = v(\theta) \in \RR_+^n$ denotes the profile of bidders' valuations which depends on a random \emph{state of nature}  $\theta$. 
We assume that the state of nature is a-priori unknown to the bidders, and drawn from a common-knowledge prior distribution $\lambda$ supported on set $\Theta$. The auctioneer (``she"), on the other hand, can observe the realized $\theta$.  For ease of presentation, we assume $\Theta$ is finite, though this is not crucial for our results.  In ad auctions, $\theta$ could encode the information of a web page viewer (e.g., demographic information, browsing history, etc.), which  is known to the auctioneer.  We call this the {\bf Known-Valuation Setting} (KVS) since in this case, the auctioneer is certain about each bidder's value after observing $\theta$. We use $v_i(\theta)$ to denote bidder $i$'s value and $\V_i = \{ v_i(\theta): \theta \in \Theta \}$ is its support. Naturally,  $ v(\theta) \in \V = \prod_{i=1}^n \V_i$ for any $\theta$. 
We assume the set $\Theta$ to be \emph{extremely large}, and will focus on algorithms with running time independent of $|\Theta|$. 
However, any algorithm can query the distribution $\lambda$ for the probability $\lambda_{\theta}$ of a state $\theta$ and can also sample from $\lambda$. 

We also study the {\bf Bayesian-Valuation Setting} (BVS), in which bidders' values have two-sided uncertainties. Motivated by the \emph{re-targeting} feature in ad auctions, we focus on a basic model in which the state of nature consists of a \emph{binary} feature $\theta_i \in \{0,1 \}$ for each bidder $i$, indicating whether the item is among bidder $i$'s targeted ones or not. Therefore, $ \theta  = (\theta_1,...,\theta_n) \in \{0,1 \}^n = \Theta$. We assume bidder $i$'s value $v_i = v(\theta_i, t_i)$ is a function of $\theta_i$ and  a private bidder type  $t_i $, which bidder $i$ is privy to, but is unknown to the auctioneer. Instead, the auctioneer only knows the prior distribution of $t_i$. 
We assume that the bidder type $t_i$'s  are independent and identically distributed (i.i.d.).  Therefore, $v_i = v(1,t_i)$ follows the same distribution for each $i$, denoted as $H(v)$, and  $v_i = v(0,t_i)$ follows another distribution $L(v)$. We also call $H [L]$ the \emph{high} [\emph{low}] distribution for convenience, since it corresponds to the value distribution of targeted [un-targeted] items. W.l.o.g., assume $\Ex_{v \sim H} [v] > \Ex_{v \sim L} [v] $. Notice however, we allow the binary feature $\theta_i$'s  to be correlated  since some bidders may target similar items. If, moreover, $\{ \theta_i \}_{i \in [n]}$ are also i.i.d., we say the bidders  are \emph{symmetric}.

\subsection{Signaling Schemes}
We assume that the auctioneer can commit to a \emph{signaling scheme} that  reveals noisy information about the realized $\theta$. A signaling scheme $\varphi$ is a \emph{randomized} map  from  the states in $\Theta$ to a set of \emph{signals}  $ \Sigma$. We use $\varphi(\theta,\sigma)$ to denote the probability of selecting the signal $\sigma \in \Sigma$ given the state $\theta$. Notice that $\varphi$ is public knowledge. An algorithm implements signaling scheme $\varphi$ if it takes as input a state of nature $\theta$, and outputs $\sigma$ with probability $\varphi(\theta,\sigma)$.  Our goal is to design efficient and (approximately) optimal \emph{revenue-maximizing} signaling schemes. When $\varphi$ yields expected revenue within an additive [multiplicative] $\epsilon$ of the best possible, we say it is $\epsilon$-optimal [$\epsilon$-approximate] in the additive [multiplicative] sense. We consider  two natural  signaling models. 


{\bf Public Signaling Schemes}. In a \emph{public} signaling scheme, the auctioneer is only allowed to send a public signal, so that each bidder infers the same information. Let $\alpha_{\sigma} = \sum_{\theta \in \Theta} \lambda_{\theta} \varphi(\theta,\sigma)$ denote the probability of sending the public signal $\sigma$.  Upon receiving $\sigma$, bidders infer a common posterior distribution of $\theta$ by Bayesian updating. In particular,  $\Pr(\theta|\sigma) = \lambda_{\theta} \varphi(\theta,\sigma)/\alpha_{\sigma}$. In the known-valuation setting, we use $$v(\sigma) = \Ex_{\theta | \sigma} [v(\theta)]= \frac{1}{\alpha_\sigma} \sum_{\theta} \lambda_{\theta} \varphi(\theta,\sigma) \cdot  v(\theta)$$ to denote the expected value profile conditioned on signal $\sigma$, and  $ v_i(\sigma)$ is bidder $i$'s conditional expected value. Observe that, conditioned on receiving the public signal  $\sigma$, bidding $v_i(\sigma)$ is the dominant strategy for each bidder $i$ (see, e.g., \cite{Emek12}) in the second price auction. 
Our goal is to maximize the auctioneer's expected revenue at the (unique) dominant-strategy truth-telling equilibrium, expressed as follows in the known-valuation setting: 
\begin{equation*}
	\mbox{KVS: } \quad \rev(\varphi) = \sum_{\sigma \in \Sigma} \alpha_{\sigma} \cdot \maxs_{i} \,  [v_i(\sigma)] = \sum_{\sigma \in \Sigma} \maxs_{i}  \, \big[ \sum_{\theta} \lambda_{\theta} \varphi(\theta, \sigma) \cdot v_i(\theta) \big] \qquad 
\end{equation*}
where the function $\maxs$ returns the second largest value of a given set or vector. The following lemma shows that an optimal public signaling does not need to use more than $n(n-1)$ signals in the known-valuation setting. The proof follows a simple revelation-principal-type argument.   
\begin{lemma}\label{lem:revelation}(Adapted from \cite{Kamenica2011,Emek12})
	There always exists an optimal \emph{public} signaling scheme in the \emph{known-valuation} setting  that uses at most $n(n-1)$ signals, with signal $\sigma^{ij}$ ($i \not = j \in [n]$) resulting in bidder $i$ $[j]$ as the bidder with the highest [second-highest] value. 
\end{lemma}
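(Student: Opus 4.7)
The plan is a standard merging argument: starting from any optimal public signaling scheme $\varphi$, I will collapse signals that induce the ``same'' best-response pattern in the auction, and show that this preserves the auctioneer's revenue while bringing the signal count down to at most $n(n-1)$.

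First, I would classify each signal $\sigma$ of $\varphi$ by the identity of the top two bidders under the induced posterior. Concretely, since in the KVS the dominant-strategy bid under signal $\sigma$ is $v_i(\sigma) = \Ex_{\theta|\sigma}[v_i(\theta)]$, the revenue contribution of $\sigma$ depends only on which index attains $\max_i v_i(\sigma)$ and which attains $\maxs_i v_i(\sigma)$. Fix an arbitrary tie-breaking rule on $[n]$ and let $\ell(\sigma) = (i,j)$ be the resulting (highest, second-highest) pair. Partition $\Sigma = \bigcup_{i\neq j} S^{ij}$ where $S^{ij} = \{\sigma : \ell(\sigma) = (i,j)\}$. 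I then define a new scheme $\varphi'$ over the signal set $\{\sigma^{ij}: i\neq j\}$ by setting $\varphi'(\theta,\sigma^{ij}) = \sum_{\sigma \in S^{ij}} \varphi(\theta,\sigma)$, so that $\alpha_{\sigma^{ij}} = \sum_{\sigma \in S^{ij}} \alpha_\sigma$ and the posterior expected value profile becomes the convex combination
\[
v_k(\sigma^{ij}) \;=\; \frac{1}{\alpha_{\sigma^{ij}}} \sum_{\sigma \in S^{ij}} \alpha_\sigma \cdot v_k(\sigma), \qquad k \in [n].
\]

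The key step is to verify that this averaging preserves both the ranking and the revenue. Since $v_i(\sigma) \geq v_k(\sigma)$ for every $\sigma \in S^{ij}$ and every $k$, convexity gives $v_i(\sigma^{ij}) \geq v_k(\sigma^{ij})$, so bidder $i$ remains a top bidder under $\sigma^{ij}$; the same argument applied to indices $k \neq i$ shows that $v_j(\sigma^{ij})$ remains second-highest. Consequently,
\[
\alpha_{\sigma^{ij}} \cdot \maxs_k v_k(\sigma^{ij}) \;=\; \alpha_{\sigma^{ij}} \cdot v_j(\sigma^{ij}) \;=\; \sum_{\sigma \in S^{ij}} \alpha_\sigma \cdot v_j(\sigma) \;=\; \sum_{\sigma \in S^{ij}} \alpha_\sigma \cdot \maxs_k v_k(\sigma),
\]
and summing over all pairs $(i,j)$ yields $\rev(\varphi') = \rev(\varphi)$. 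Since $|\{(i,j): i \neq j\}| = n(n-1)$, $\varphi'$ is the desired scheme.

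The only subtlety, and the one place I would be careful, is the tie-breaking in the definition of $\ell(\sigma)$: if several bidders tie for the top two slots under some $\sigma$, one must pick a consistent rule so that the re-labeling is well-defined; the inequalities above only use the chosen representatives, so any fixed rule works. Everything else is bookkeeping, and no further structural property of $\varphi$ is needed beyond the linearity of the Bayesian update, which is why the same argument appears (for related settings) in \cite{Kamenica2011,Emek12}.
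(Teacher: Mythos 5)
Your proposal is correct and is precisely the revelation-principle-type argument the paper gestures at (the paper states the lemma without spelling out the proof, attributing the idea to Kamenica--Gentzkow and Emek et al.). Merging all signals that induce the same (highest, second-highest) pair, and using linearity of the posterior expectation to show the pair is preserved and the per-class revenue contribution is additive, is exactly the intended argument, including your correct handling of ties via a fixed tie-breaking rule.
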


In the Bayesian-valuation setting, the description of a public signaling scheme remains the same. Therefore, the expected revenue is  expressed similarly, but with an additional expectation over the randomness of bidder's private types.
\begin{equation*}
	\mbox{BVS: } \qquad \rev(\varphi) = \Ex_{t_1,...,t_n} \bigg[  \sum_{\sigma \in \Sigma} \maxs_{i}  \, \big[ \sum_{\theta} \lambda_{\theta} \varphi(\theta, \sigma) \cdot v(\theta_i, t_i) \big] \bigg]\qquad 
\end{equation*}

We remark that Lemma \ref{lem:revelation} does not hold in the Bayesian-valuation setting. Actually, in sharp contrast to Lemma \ref{lem:revelation},  we prove in Section \ref{sec:Bayes} that in BVS,  any public signaling scheme using sub-exponentially many signals cannot guarantee even a constant fraction of the optimal revenue.  

Note that  among all public signaling schemes, the one that maximizes the \emph{social welfare} is to reveal full information to bidders, in both known-valuation and Bayesian-valuation settings \cite{Dughmi2014}. Moreover, the optimal welfare is always an upper bound for the optimal revenue of public signaling. 
We remark that  though the signaling model assumes that each bidder knows the prior distribution and signaling scheme, in practice bidder $i$ only needs to know his conditional valuation $v_i(\sigma) = \Ex_{\theta|\sigma} v_i(\theta)$ on signal $\sigma$ in order to figure out his optimal bid.  $v_i(\sigma)$ can be estimated within good precision by  samples (i.e., interacting with the system for enough times). 

{\bf Private Signaling Schemes}. In a private signaling scheme, the auctioneer is allowed to send different (possibly correlated) signals to different bidders. That is, bidders are asymmetrically informed. After receiving  the private signal, each bidder updates his own belief regarding $\theta$. Obviously, the public signaling scheme is a special class of private signaling schemes.  We will restrict our attention to the \emph{known-valuation setting} when  studying private signaling schemes. An important observation regarding private signaling is that bidding the conditional expected value is generally not an equilibrium strategy for bidders. Moreover, the phenomenon of winner's curse may arise when bidders are asymmetrically informed, as illustrated in the following example.   

\begin{example}[Non-Truthfulness and Winner's Curse in Private Signaling]\label{ex:curse}
	Consider a second-price auction with $2$ bidders in the known-valuation setting. There are two possible value profiles, as listed in the following table together with corresponding probabilities.
	\begin{center}
		\begin{tabular}{|c|cc|}
			\hline
			& bidder 1 & bidder 2  \\
			\hline
			\emph{prob} $0.5$ & $1$ & $2$ \\
			\hline
			\emph{prob} $0.5$  & $7$  & $7$ \\
			\hline
		\end{tabular}
	\end{center}
	
	
	Consider the private signaling scheme that reveals the realized profile (i.e., full information) to bidder $2$ but reveals no information to bidder $1$. Since bidder $2$ knows her value exactly, bidding the true value is the dominant strategy for bidder $2$. Given no information, bidder $1$ has an expected value of $4$, which however is not  her equilibrium bid. In fact, bidder $1$ will not bid any value between $2$ and $7$ since in this case she can only win at the less-valued situation -- the winner's curse.  The equilibrium bid for bidder $1$ turns out to be any bid under $2$ and there are infinitely many equilibria under this private signaling scheme. 
\end{example}

Example \ref{ex:curse} reveals several challenges for computing the optimal private signaling scheme. The first is the equilibrium selection problem due to the existence of multiple equilibria,  as illustrated by Example \ref{ex:curse}. There is usually not a natural equilibrium selection rule in private signaling.  Second, even if we know which equilibrium to adopt, computing a desired equilibrium is non-trivial in second-price auctions with  asymmetrically informed bidders. In fact, most prior work is only able to characterize equilibria under restrictive assumptions, e.g., two bidders, common valuation \cite{Hausch1987,Fang2002,Syrgkanis2013}.  

\section{Warm-up: Public Signaling in the Known-Valuation Setting}
\label{sec:KnownPublic}
We start with the design of the optimal public signaling scheme in the  known-valuation setting, and prove the following theorem.

\begin{theorem}\label{thm:public}
	Assume $v(\theta) \in [0,1]^n$ for all $\theta$. An (additive) $\epsilon$-optimal signaling scheme can be implemented in $\poly(n,\frac{1}{\epsilon})$ time. In particular, the running time is independent of $|\Theta|$. 
\end{theorem}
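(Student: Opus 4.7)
The plan is a Monte-Carlo sample-average approach: draw $N = \poly(n, 1/\eps)$ independent states from $\lambda$, solve a compact LP on the empirical distribution, and drive a polynomial-time signaling algorithm from its solution. The key enabler is Lemma~\ref{lem:revelation}, which restricts attention to at most $n(n-1)$ signals $\{\sigma^{ij}\}_{i \neq j}$.

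First, I would rewrite the signaling LP in expectation-form. The objective becomes $\sum_{(i,j)} \Ex_{\theta \sim \lambda}[\varphi(\theta, \sigma^{ij})\,v_j(\theta)]$, and the conditions that $i$ is highest and $j$ is second-highest in expectation under $\sigma^{ij}$ become $O(n^3)$ linear inequalities of the form $\Ex_{\theta \sim \lambda}[\varphi(\theta, \sigma^{ij})(v_i(\theta) - v_k(\theta))] \geq 0$ (and analogous second-place conditions on $j$), alongside a per-state probability-simplex constraint. Crucially, every non-simplex constraint and the objective are expectations over $\theta \sim \lambda$ of quantities bounded in $[-1,1]$, which opens the door to sample-average approximation.

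Next, I would sample $\theta_1,\ldots,\theta_N \sim \lambda$ iid and replace every expectation by its empirical mean. The resulting empirical LP has $N \cdot n(n-1)$ variables and $N + O(n^3)$ constraints, solvable in $\poly(n, 1/\eps)$ time. The implemented algorithm then operates online: given a realized state $\theta$, it draws $N-1$ fresh samples from $\lambda$, adjoins $\theta$ to form a sample of size $N$, solves the empirical LP, and emits a signal drawn from the solution's distribution at $\theta$. This defines a well-posed randomized signaling scheme whose per-call cost is $\poly(n, 1/\eps)$ and whose running time is independent of $|\Theta|$.

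The analysis proceeds by uniform concentration. Taking $N = \tilde{O}(n^3/\eps^2)$, Hoeffding's inequality plus a union bound over the $O(n^3)$ ranking inequalities and the objective guarantees that all relevant empirical expectations are within $\eps/\poly(n)$ of their true values with high probability. This yields two facts: (i) the true optimal scheme $\varphi^*$ is nearly feasible for the empirical LP and attains empirical value at least $\rev(\varphi^*) - O(\eps)$, so the empirical optimum satisfies $\widehat{\rev} \geq \rev(\varphi^*) - O(\eps)$; and (ii) a near-empirical-optimum solution has true-distribution revenue at least $\widehat{\rev} - O(\eps)$, after absorbing the small ranking-constraint violations by mixing in an $O(\eps)$-fraction of a ``safe'' signal (e.g., a full-information signal) that re-establishes the required ordering on expected values without hurting revenue by more than $O(\eps)$. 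The main obstacle is that the algorithm does not commit to one fixed scheme -- its behavior is a marginalization over the random sample, so one cannot directly analyze ``the'' scheme on $\lambda$; I would handle this by bounding the expected revenue of this random scheme in terms of the expected empirical optimum, and then invoking the concentration step above to link that quantity back to $\rev(\varphi^*)$.
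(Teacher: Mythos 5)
Your framework --- restrict to $n(n-1)$ signals via Lemma~\ref{lem:revelation}, take $\poly(n,1/\eps)$ samples, solve a compact LP on the empirical distribution, and signal for the realized $\theta$ by adjoining it to the sample --- is exactly the paper's plan. Two things need repair. A minor one in your step (i): you cannot plug a ``nearly feasible'' $\varphi^*$ into an \emph{unrelaxed} empirical LP to lower-bound its optimum; the paper relaxes the two ranking constraints by $\frac{\eps}{2n^2}$, which strictly exceeds the Hoeffding error and thus makes $\varphi^*$ restricted to the sample genuinely feasible with high probability, at which point your objective-concentration argument goes through.

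The real gap is in your step (ii). After relaxing (or after the sampling error you acknowledge), the true conditional expected values under an output signal $\sigma^{ij}$ need not have $i$ on top and $j$ second, so the LP objective no longer literally computes revenue. Your fix --- ``mix in an $O(\eps)$-fraction of a safe (full-information) signal to re-establish the required ordering'' --- does not accomplish anything: mixing in a different signal reallocates probability mass to new signals but leaves the posterior under $\sigma^{ij}$, and hence the ordering of $v_i(\sigma^{ij}), v_j(\sigma^{ij}), v_k(\sigma^{ij})$, completely unchanged. In effect you are reaching for the generic bi-criteria repair, which the paper explicitly contrasts with its single-criteria result. The paper's observation (Lemma~\ref{lem:public:sum}) is that \emph{no} repair is needed: second-price revenue under any posterior is always at least $\min(a,b)$ for the expected values $a,b$ of \emph{any} fixed pair of bidders, regardless of whether they are actually the top two. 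Since the relaxed constraints give $v_i(\sigma^{ij}) \geq v_j(\sigma^{ij}) - \frac{\eps}{2n^2}$, the revenue contributed by $\sigma^{ij}$ is at least $\min\bigl(v_i(\sigma^{ij}), v_j(\sigma^{ij})\bigr) \geq v_j(\sigma^{ij}) - \frac{\eps}{2n^2}$, which is the LP objective term minus a small slack; summing over the $n(n-1)$ signals loses only $\eps/2$. This revenue-specific inequality is precisely what upgrades the sampling argument to a single-criteria additive $\eps$-guarantee, and without it your analysis does not close.
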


The proof of  Theorem \ref{thm:public} combines and generalizes two ideas from previous literature. We present a sketch here, while defer the full proof to Appendix \ref{appendix:KnownPublic}. First, as observed in \cite{Emek12,Miltersen12},  there is a $\poly(n,|\Theta|)$-sized linear program that explicitly computes the optimal public signaling scheme. Specifically,  by Lemma \ref{lem:revelation},  we can w.l.o.g. assume $\Sigma = \{ \sigma^{ij}\}_{i\not = j \in [n]}$ in which $\sigma^{ij}$ results in bidder $i$ [$j$] to  have the highest [second-highest] expected value. Then Linear Program  \eqref{lp:public} with variables $\varphi (\theta,\sigma)$ computes the optimal public signaling scheme. However, any naive approach for solving LP \eqref{lp:public} requires  $\poly(n,|\Theta|)$ time, which is prohibitive when $|\Theta|$ is extremely large.  To overcome this challenge, we employ a similar Monte-Carlo sampling algorithm as proposed by \citet{Dughmi2016}. In particular, given as input a state of nature $\theta$, we take $\poly(n,\frac{1}{\eps})$ many samples from $\lambda$, and let $\tilde{\lambda}$ be the empirical distribution over these samples plus the given state $\theta$.  The algorithm then solves the linear program by instead  using  $\tilde{\lambda}$ as the prior and  relaxing the first two constraints of LP \eqref{lp:public} by $1/\poly(1/ \epsilon, n)$ (since the samples cannot exactly preserve these constraints), and signals for $\theta$ as suggested by the solution of the LP.

\begin{lp}\label{lp:public}
	\maxi{\sum_{\sigma^{ij} \in \Sigma}  \sum_{ \theta \in \Theta} \lambda_{\theta} \varphi(\theta,\sigma^{ij}) v_j(\theta) }
	\st 
	\qcon{\sum_{\theta \in \Theta} \lambda_{\theta} \varphi(\theta,\sigma^{ij}) v_i(\theta) \geq \sum_{\theta \in \Theta} \lambda_{\theta} \varphi(\theta,\sigma^{ij}) v_j(\theta) }{j \not = i}
	\qcon{\sum_{\theta \in \Theta} \lambda_{\theta} \varphi(\theta,\sigma^{ij}) v_j(\theta) \geq \sum_{\theta \in \Theta} \lambda_{\theta} \varphi(\theta,\sigma^{ij}) v_k(\theta) }{k \not = i,j ; \, j \not = i}
	\qcon{\sum_{\sigma^{ij} \in \Sigma} \varphi (\theta,\sigma) = 1}{\theta \in \Theta}
	\qcon{\varphi (\theta,\sigma^{ij}) \geq 0}{\theta \in \Theta, \sigma^{ij} \in \Sigma}
\end{lp} 	

We remark that though the idea for our Monte-Carlo algorithm is similar to that in  \cite{Dughmi2016}, the analysis there does not directly apply to our setting. Particularly, the algorithm  in \cite{Dughmi2016} gives a \emph{bi-criteria} FPTAS for the Bayesian persuasion problem (i.e., $\eps$-loss in both optimality and incentive compatibility), and they prove that the bi-criteria loss is inevitable in their setting due to information-theoretic reasons. However, our algorithm is a \emph{single-criteria} FPTAS  and its analysis relies on the special property of the revenue objective.

\section{Public Signaling in the Bayesian-Valuation Setting}\label{sec:Bayes}
In this section, we turn our attention to the  more realistic Bayesian valuation setting, and consider the design of optimal public signaling schemes.  We focus on a basic yet well-motivated model, i.e., $ \theta  = \{ \theta_1,...,\theta_n\} \in  \{0,1 \}^n$ and bidder $i$'s private information is captured by a type $t_i$ which is i.i.d. across bidders. Note however, $\theta_i$'s can be correlated.
When $\theta_i = 1$,  bidder $i$'s value $v_i = v(1, t_i)$ follows a common (high) distribution $H(v)$ for every $i$. Similarly, $v_i = v(0,t_i)$ follows the (low) distribution $L(v)$. 
We assume w.l.o.g  $\Ex_{v \sim H} [v] > \Ex_{v \sim L} [v] $.  If, moreover, $\theta_i$'s are i.i.d., we say that bidders are \emph{symmetric}.  

\subsection{Lower Bounding the Number of Signals Required by Approximately Optimal Schemes}
We first bound the minimum number of signals required by any approximately optimal public signaling scheme. This is an important feature not only because it is a basic property regarding the optimal scheme but also it serves as a natural complexity measure of the optimal scheme, which is analogous to the menu-size complexity  studied in the recent  mechanism design literature \cite{Hart2013,Wang2014,Babaioff2017}.  Several previous works have established polynomial-size \emph{upper bounds} on the number of signals required by the optimal scheme in various settings  \cite{Kamenica2011,Emek12,Miltersen12}. 
In contrast, we prove that in our Bayesian-valuation setting,   it is generally necessary to use \emph{exponentially} many signals, even when only a constant-approximate public signaling scheme  is sought.


\begin{theorem}\label{thm:NumSignal}
	In the Bayesian-valuation setting with  $n$ symmetric bidders, any public signaling scheme using $2^{o(n^{1/4})}$ signals cannot guarantee any constant (multiplicative) fraction of the optimal revenue. 
\end{theorem}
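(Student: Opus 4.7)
The plan is to exhibit an explicit symmetric instance on which any public signaling scheme with $k = 2^{o(n^{1/4})}$ signals extracts only a vanishing fraction of the optimal revenue. Fix $M = \lfloor n^{1/4}\rfloor$ and let each $\theta_i$ be i.i.d.\ Bernoulli with parameter $p = 1/M$. Let $L$ be deterministic at $0$ and let $H$ place mass $1/M$ on the value $M$ and mass $1-1/M$ on the value $0$, so that a bidder's realized value is $v_i = M$ with probability $p/M = 1/M^2$ and $0$ otherwise. With full information revealed, the number of bidders with $v_i = M$ is $\mathrm{Binomial}(n, 1/M^2)$, which has mean $\sqrt{n}$ and is $\ge 2$ with probability $1 - o(1)$ by Chernoff; hence the optimal revenue is at least $(1-o(1))M = \Omega(n^{1/4})$.

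Next I would upper bound the revenue of any $k$-signal scheme $\varphi$. Fix a signal $\sigma$, write $q_i(\sigma) = \Pr(\theta_i = 1 \mid \sigma)$, and for a constant $c \in (0,1)$ define $A_c(\sigma) = \{i : q_i(\sigma) \ge c\}$. Since $L = 0$, bidder $i$'s dominant bid equals $q_i(\sigma)\, v(1, t_i) \in \{0, q_i(\sigma) M\}$, and the high type occurs independently across bidders with probability $1/M$. For the second-highest bid to exceed $cM$, at least two bidders in $A_c(\sigma)$ must draw the high type, an event of probability at most $|A_c(\sigma)|^2/(2M^2)$. Combined with the trivial bound $\rev(\sigma) \le M$ this gives
\[
\rev(\sigma) \;\le\; cM + \frac{|A_c(\sigma)|^2}{2M}.
\]
Setting $c = \epsilon/2$, any signal with $\rev(\sigma) \ge \epsilon M$ must satisfy $|A_{\epsilon/2}(\sigma)| \ge M\sqrt{\epsilon}$.

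The core step is an information-theoretic bound exploiting independence of the $\theta_i$'s under the symmetric prior. A $k$-signal scheme has mutual information $I(\sigma;\theta) \le \log k$, and by subadditivity of mutual information under independence of coordinates,
\[
\sum_{i=1}^n I(\sigma; \theta_i) \;=\; \sum_{i=1}^n \Ex_\sigma\!\bigl[D_{KL}(q_i(\sigma) \,\|\, p)\bigr] \;\le\; I(\sigma; \theta) \;\le\; \log k.
\]
A direct computation shows $D_{KL}(q \,\|\, 1/M) \ge (c/2)\log M$ whenever $q \ge c$ and $M$ is large enough. Restricting the sum to $i \in A_c(\sigma)$ and taking expectations yields $\Ex_\sigma[|A_c(\sigma)|] \le (2/c)\log k/\log M$, so Markov's inequality bounds the probability of a ``good'' signal (one with $\rev(\sigma) \ge \epsilon M$) by $C(\epsilon)\log k/(M \log M)$ for a constant $C(\epsilon)$ depending only on $\epsilon$.

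Aggregating over all signals, $\rev(\varphi) \le M \cdot C(\epsilon)\log k/(M \log M) + \epsilon M = C(\epsilon)\log k/\log M + \epsilon M$. When $\log k = o(n^{1/4})$ and $\log M = \Theta(\log n)$, the first term is $o(M/\log n) = o(M)$, so $\rev(\varphi)/\mathrm{OPT} \le \epsilon + o(1)$ for every fixed $\epsilon > 0$. Taking $\epsilon \to 0^+$ rules out any constant-factor approximation. The main obstacle is the information-theoretic step: one must carefully leverage the independence of the $\theta_i$'s under the symmetric prior in order to split total mutual information across coordinates, and then convert the per-bidder KL lower bound into usable control over the size $|A_c(\sigma)|$ of the high-posterior set, which in turn governs second-price revenue because the two-point structure of $H$ forces the second-maximum bid to depend on having two simultaneously high-posterior bidders who also draw the high type.
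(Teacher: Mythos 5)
Your proof is correct, and it takes a genuinely different route from the paper's. The paper's argument first upper-bounds \emph{welfare} rather than revenue, invokes convexity of the welfare objective to reduce WLOG to a \emph{deterministic} partition $\{S_1,\dots,S_N\}$ of $\{0,1\}^n$, and then proves a combinatorial key lemma via a Hoeffding-style estimate: for any part $S$, either its conditional welfare is $O(\epsilon)$ or $\Pr(S) \le e^{-\Theta(n^{1/4})}$. Its hard instance keeps all values in $\{0,1\}$, with $\theta_i$ i.i.d.\ Bernoulli($\epsilon$) for a small constant $\epsilon$ and $H$ a Bernoulli($1/\sqrt{n}$) distribution. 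You instead scale the high value up to $M = n^{1/4}$ and couple $p = 1/M$ with a $1/M$ high-type probability, which puts $\mathrm{OPT} = (1-o(1))M$, and then bound revenue \emph{directly} for arbitrary (possibly randomized) schemes via the information-theoretic inequality $\sum_i \Ex_\sigma D_{KL}(q_i(\sigma)\,\|\,p) \le I(\sigma;\theta) \le \log k$, exploiting independence of the $\theta_i$'s. The per-signal revenue bound $\rev(\sigma) \le cM + |A_c(\sigma)|^2/(2M)$ and the lower bound $D_{KL}(q\,\|\,1/M) \ge (c/2)\log M$ for $q\ge c$ are both correct, and Markov then controls the mass of ``good'' signals. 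Your approach avoids the detour through welfare and the deterministic-reduction step entirely, and the mutual-information accounting is arguably more transparent about why $n^{1/4}$ appears (it balances $\mathrm{OPT}\approx M$ against the $|A_c|\gtrsim M$ threshold for a high-revenue signal). In fact, your calculation can be pushed to $M = n^{1/2-\delta}$ for any $\delta>0$, giving a threshold of $2^{o(n^{1/2-\delta})}$ signals, which is strictly stronger than the theorem's $2^{o(n^{1/4})}$; the paper's combinatorial route is more tied to the specific $n^{1/4}$ exponent.
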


We remark that Theorem \ref{thm:NumSignal} is \emph{not}  a computational hardness result, though it does indicate some challenges in designing the optimal public signaling scheme for the BVS setting. The proof of Theorem \ref{thm:NumSignal} is rather involved. We defer the full proof  to Appendix \ref{appendix:NumSignalsA}, and only present a sketch here. 

The proof examines the following instance in the symmetric bidder setting: there are $n$ bidders and each bidder's valuation follows either a (high) Bernoulli distribution with success probability $\frac{1}{\sqrt{n}}$ (when $\theta_i = 1$) or a (low) point distribution at $0$  (when $ \theta_i = 0$); $\theta_i$'s are i.i.d. with $\Pr(\theta_i = 1) = \eps$ for a small constant $\eps$. 

The proof is divided into two steps. We first show that the revenue of the optimal public signaling scheme is at least $1 - o(1)$. Intuitively, this is because when revealing full information, with high probability there are $\Omega (\eps n)$ bidders having the high value distribution and such competition results in a revenue close to $1$.  Second, we argue that the revenue of any public signaling scheme using  $2^{o(n^{1/4})}$ signals must be far less than $1$.   The main technical challenge we overcome is in the second step: given an \emph{arbitrary} public signaling scheme using at most $N$ signals, we look to upper bound its revenue in terms of $N$. 

In fact, instead of upper bounding the revenue, we upper bound the welfare of an arbitrary public signaling scheme using at most $N$ signals since the revenue is then trivially upper bounded by the welfare.  We first observe that among the public signaling schemes with at most $N$ signals  that achieve the maximum welfare, there is always a deterministic one that deterministically maps a state to a signal. Such a deterministic scheme can be viewed as a partition of $\Theta = 2^{[n]}$ into $N$ subsets, denoted as $\{S_i \}_{i=1}^N$. Let $\Pr(S_i) = \sum_{\theta \in S_i} \lambda_{\theta}$ denote the total probability mass of the states in set $S_i$. We then prove the following key lemma (informal): for any $S_i$, either the welfare conditioned on $S_i$ is $O(\eps)$ or $\Pr(S_i) $ is exponentially small (specifically, $ \Pr(S_i) \leq 2^{-\Theta(n^{1/4})}$).  This lemma implies that any deterministic scheme with $2^{o(n^{1/4})}$ signals must have $O(\epsilon)$ welfare.  The proof of the lemma involves intricate combinatorial analysis. At a high level, we view each state $\theta$ as a binary vector and let $\bar{\theta}$ denotes the expectation of the $\theta$'s in $S_i$. Note that each entry of $\bar{\theta}$ has expectation $\epsilon$ a-priori. We show that if $m$ is the number of entries in $\bar{\theta}$ that are at least $2\eps$, then $\Pr(S_i) \leq e^{-2 \sqrt{m}}$. Intuitively, this is because the sum of the expectation of these $m$ entries is $\eps m$, so the probability that the sum is at least $2 \eps m$ should be small. However, the concrete argument is more involved since we have to argue for an arbitrary deterministic set $S_i$ while not a sampled one. Since the high distribution is a Bernoulli with success probability $1/\sqrt{n}$, if $m = o(\sqrt{n})$, the welfare from $\bar{\theta}$ would be $O(\eps)$; otherwise $m = \Omega(\sqrt{n})$ implying $\Pr(S_i) \leq 2^{-\Theta(n^{1/4})}$. This yields a proof sketch of the key lemma. 

\subsection{A Constant Approximate Scheme} 
In this section, we present a ``simple" scheme that serves as a constant approximation under mild assumptions. We start by showing that revealing full information -- which indeed uses exponentially many signals -- can be arbitrarily far from optimality. 

\begin{example}[Failure of the Full Information Scheme]\label{ex:failFull}
	Consider the following simple example: the value of each bidder either follows the (high) uniform distribution in $[0,1]$ (denoted as $U[0,1]$) or follows the (low) point distribution at $0$. The state of nature is a uniformly drawn bidder whose value follows $U[0,1]$ while all other bidders have valuation $0$. Only the auctioneer knows which bidder has a high value distribution.  If the auctioneer reveals full information, the revenue is $0$ since only one bidder has a non-zero valuation. On the other hand, consider the following signaling scheme: send a \emph{pooling signal} $\sigma_{2i-1,2i}$ for $i \in {1,2,...,\floor{n/2}}$ when  either bidder $2i-1$ or $2i$ has the high value distribution. That is, the scheme pools bidder $2i-1$ and bidder $2i$ to create competition.  Now from bidders' perspective, whenever the pooling signal $\sigma_{2i-1,2i}$ is sent, bidder $2i-1$ and $2i$ know that  their valuations are drawn from $U([0,1])$ with probability $0.5$ and are $0$ otherwise. As a result, bidder $2i-1$ and $2i$ have expected value distribution $U([0,0.5])$.  Some calculation reveals that the expected revenue conditioned on signal $\sigma_{2i-1,2i}$ is $1/6$, which is arbitrarily better than $0$.  
\end{example}

Revisiting Example \ref{ex:failFull}, we note that the failure of the full information scheme there is due to the lack of competition. In particular, by revealing full information, the realized state includes  only one non-zero-valued bidder, resulting in $0 $ revenue. We call these the \emph{tail states}, meaning the extreme situation with little competition in the auction. More specifically, we call the states in $\Theta_1 = \{\theta: |\theta| = 1 \}$ the tail states.  The high-revenue scheme described in Example \ref{ex:failFull} pools bidders to create competition at these tail states. Obviously, this is not always possible for any prior distribution $\lambda$. For example,  if  $\lambda$ is a point distribution supported at the  tail state $\theta = (1,0,0,...,0)$, then we cannot pool $\theta$ with any other tail states. 

We thus first characterize when such tail-pooling is possible. Since the state of nature $\theta$ is in $\{0,1 \}^n$, for convenience we sometimes view $\theta$ as a \emph{set of bidders} specified by entries of value $1$. Therefore, each tail state can be treated as a bidder and $\Theta_1$  can be treated as the set of all bidders $ [n]$. We would like to pair the tail states  in  $\Theta_{1}$ (possibly randomly)   in a way such that each bidder in a pair has \emph{equal} probability mass (like the scheme in Example \ref{ex:failFull}).  Such a \emph{pooling scheme} can be captured by a randomized map $\pi : \Theta_1 \to \Theta_1$ which maps any given state $\theta \in \Theta_1$ to the state that it will be paired with. Let $\pi(\theta, \theta')$ denote the probability of mapping $\theta$ to $\theta'$. The \emph{feasible} pooling scheme $\pi$'s are captured by the following linear system.
\begin{lp}\label{lp:balance}
	\qcon{\sum_{\theta' \in \Theta_1} \pi(\theta, \theta')= 1}{i \in \Theta_{1} }
	\qcon{ \lambda_{\theta} \cdot \pi(\theta, \theta') = \lambda_{\theta'} \cdot \pi(\theta', \theta) }{ \theta,  \theta'  \in \Theta_1} 
	\qcon{\pi(\theta, \theta') \geq 0 \, \, \,  \mbox{ and }  \, \, \,  \pi(\theta, \theta) = 0}{\theta ,\theta'  \in \Theta_1 }
\end{lp}    

Particularly, the second constraint guarantees that conditioned on that $\theta, \theta'$ are pooled together, they will have equal probability mass. As pointed out earlier, linear system \eqref{lp:balance}  is not always feasible and its feasibility depends on the portion of $\lambda$ on set $\Theta_1$. The following lemma shows that under a minor assumption on $\lambda$, linear system \eqref{lp:balance}  is guaranteed to be feasible. The proof of Lemma \ref{lem:tail_balance} can be found in Appendix \ref{appendix:BayesAlgLem}. 
\begin{lemma}\label{lem:tail_balance}
	Linear system \eqref{lp:balance} is feasible if and only if for any $\theta \in \Theta_{1}$, $\lambda_{\theta} \leq \sum_{\theta' \in \Theta_{1}: \theta' \not = \theta} \lambda_{\theta'}$. 
\end{lemma}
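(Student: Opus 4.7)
The plan is to prove both directions. Necessity follows from a symmetry argument on the joint measure induced by $\pi$. For sufficiency I will give an explicit construction via an ``antipodal folding'' on a circle, which turns the hypothesis directly into the geometric disjointness that the diagonal-zero constraint demands.

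For necessity, suppose $\pi$ is feasible for \eqref{lp:balance} and define the symmetric joint measure $\mu(\theta,\theta') := \lambda_\theta \pi(\theta,\theta')$ on $\Theta_1 \times \Theta_1$. The second LP constraint makes $\mu$ symmetric, the third gives $\mu(\theta,\theta)=0$, and the first gives $\sum_{\theta'}\mu(\theta,\theta')=\lambda_\theta$. Hence $\lambda_\theta = \sum_{\theta' \neq \theta}\mu(\theta,\theta') = \sum_{\theta' \neq \theta}\mu(\theta',\theta)$ by symmetry, and each summand satisfies $\mu(\theta',\theta) \leq \sum_{\theta''}\mu(\theta',\theta'') = \lambda_{\theta'}$, which yields the desired inequality $\lambda_\theta \leq \sum_{\theta' \neq \theta}\lambda_{\theta'}$.

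For sufficiency, let $S := \sum_{\theta \in \Theta_1}\lambda_\theta$. The hypothesis is equivalent to the single condition $\lambda_\theta \leq S/2$ for every $\theta \in \Theta_1$. Arrange the tail states as disjoint consecutive arcs $\{I_\theta\}_{\theta \in \Theta_1}$ of length $\lambda_\theta$ on a circle of circumference $S$, and let $\theta(y)$ denote the state whose arc contains $y$ (mod $S$). Define a joint measure $\mu$ on $\Theta_1 \times \Theta_1$ as the pushforward of the uniform distribution on the circle under the map $x \mapsto (\theta(x),\theta(x+S/2))$. Then (i)~$\mu$ is automatically symmetric because $x \mapsto x+S/2$ is a measure-preserving involution on the circle; (ii)~each marginal of $\mu$ equals $\lambda$ restricted to $\Theta_1$; and (iii)~$\mu(\theta,\theta)=0$ because, under the hypothesis, the antipodal arc $I_\theta + S/2$ is disjoint from $I_\theta$. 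Setting $\pi(\theta,\theta') := \mu(\theta,\theta')/\lambda_\theta$ whenever $\lambda_\theta > 0$ (and taking any distribution supported off $\theta$ when $\lambda_\theta = 0$, which is consistent with the second LP constraint because both sides then vanish) gives a feasible solution to \eqref{lp:balance}.

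The main obstacle is step (iii), where one must verify that $I_\theta$ and $I_\theta + S/2$ are genuinely disjoint arcs on the circle: since $I_\theta$ has length $\lambda_\theta \leq S/2$, a rotation by $S/2$ moves it an arc-distance of $S/2 - \lambda_\theta \geq 0$ away on both sides, so no overlap occurs---and this is exactly where the hypothesis is used. Everything else is a direct bookkeeping calculation from the definition of the pushforward.
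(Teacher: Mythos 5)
Your proof is correct, and your sufficiency argument is genuinely different from the paper's. For necessity, your symmetric joint measure $\mu(\theta,\theta') = \lambda_\theta \pi(\theta,\theta')$ is essentially the same observation the paper makes by summing the constraint $\lambda_\theta \pi(\theta,\theta') = \lambda_{\theta'}\pi(\theta',\theta)$ over $\theta'$ (you could in fact skip the symmetry step entirely: $\lambda_\theta = \sum_{\theta'\neq\theta}\lambda_{\theta'}\pi(\theta',\theta) \leq \sum_{\theta'\neq\theta}\lambda_{\theta'}$ directly). For sufficiency, the paper proceeds algorithmically: first a reduction to the case $p_1 = p_2$ by splitting the excess mass $p_1 - p_2$ of the largest state across all other states, then a greedy peeling that pairs off the smallest remaining state with the next smallest until three states remain, for which an explicit closed-form split is given. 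Your antipodal-folding construction on a circle of circumference $S$ replaces this discrete bookkeeping with a single geometric observation: after laying the arcs down, the measure $\mu(\theta,\theta') = |I_\theta \cap (I_{\theta'} - S/2)|$ is symmetric because rotation by $S/2$ is a measure-preserving involution, has marginals $\lambda$, and vanishes on the diagonal precisely because $\lambda_\theta \leq S/2$ keeps $I_\theta$ disjoint from its antipode (overlap at a single endpoint when $\lambda_\theta = S/2$ is measure zero, so this is still fine). Both constructions are polynomial-time; the paper's greedy version is perhaps easier to code directly, while yours isolates exactly where the hypothesis $\lambda_\theta \leq S/2$ enters and is cleaner conceptually. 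You do handle the $\lambda_\theta = 0$ corner case correctly by noting that the pushforward already forces $\mu(\theta',\theta) = 0$ for all $\theta'$, so any off-diagonal distribution for $\pi(\theta,\cdot)$ is consistent with constraint two.
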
   

In other words, there exists a feasible pooling scheme if no state in $\Theta_{1}$ has probability mass that is larger than the sum of \emph{all other} states' probability masses. In this case, we say $\lambda$ is tail-balanced. Particularly, $\lambda$ is \emph{tail-balanced} if  $\lambda_{\theta} \leq \sum_{\theta' \in \Theta_{1}: \theta' \not = \theta} \lambda_{\theta'}$ for any $\theta \in \Theta_{1}$. It is easy to see that a feasible pooling scheme (if exists)  can be computed efficiently.   Utilizing any feasible pooling scheme, we consider the signaling scheme that simply reveals $\theta$ \emph{unless} $\theta \in \Theta_1$ in which case it pools $\theta$ with another $\theta' \in \Theta_1$ using a pooling scheme. The details are in Algorithm \ref{alg:publicBayes}, which we term the \emph{tail-pooling signaling scheme}.

\begin{algorithm}
	\begin{algorithmic}[1]
		\PARAMETER any feasible pooling scheme $\pi$. 
		\INPUT State of nature $\theta \in \{ 0,1\}^n$.  
		\OUTPUT Signal $\sigma$ 
		
		\vspace{1mm}
		
		\IF{$|\theta|_1 \geq 2 $ or $|\theta|_1 = 0$} 
		\STATE $\sigma = \theta$;
		\ELSE
		\STATE Pooling tail states: sample $\theta'  \in \Theta_{1}$ with probability $\pi(\theta, \theta')$, and let $\sigma = \{ \theta,\theta'\}$; 
		\ENDIF      
		\RETURN $\sigma$.
	\end{algorithmic}
	\caption{Tail-Pooling Signaling Scheme for  Bayesian Valuation Setting}
	\label{alg:publicBayes}
\end{algorithm}


The output of Algorithm \ref{alg:publicBayes} simply either reveals the true state (when $\sigma = \theta$) or reveals a set of two states (when $\sigma = \{\theta, \theta' \}$). Conditioned on  $\sigma = \{\theta, \theta' \}$, the state $\theta$ and $\theta'$ show up both with probability $1/2$, as imposed by the second constraint in linear system \eqref{lp:balance}. 
The following theorem shows that under mild assumptions, Algorithm \ref{alg:publicBayes} serves as a constant approximation to the optimal public signaling scheme.

\begin{theorem}\label{thm:unknownAlgo}
	Assume the high and low distributions satisfy the monotone hazard rate (MHR) condition and $\lambda$ is tail-balanced.  When $n \geq 22$, Algorithm \ref{alg:publicBayes} achieves revenue that is at least  $\frac{1}{8}$ fraction of the \emph{optimal welfare}, thus is a  $\frac{1}{8}$-approximate revenue-maximizing signaling scheme. 
\end{theorem}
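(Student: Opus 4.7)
The plan is to lower bound Algorithm~\ref{alg:publicBayes}'s revenue separately on three events determined by $|\theta|_1$, compare each to the corresponding piece of the optimal welfare, and then combine. Partition $\Theta = \{0,1\}^n$ into $\Theta_0, \Theta_1, \Theta_{\geq 2}$ according to $|\theta|_1 = 0, 1, \geq 2$, and let $P_k = \Pr[|\theta|_1 = k]$ and $W_k = \Ex[\max_i v(\theta_i, t_i) \mid |\theta|_1 = k]$. By bidder exchangeability within each class and $\max(a,b)\le a+b$, we have $W_0 = \Ex[\max_{i \in [n]} L_i]$, $W_1 \leq \Ex_{v \sim H}[v] + \Ex[\max_{i \in [n-1]} L_i]$, and $W_k \leq \Ex[\max_{i \in [k]} H_i] + \Ex[\max_{i \in [n-k]} L_i]$ for $k \geq 2$, where $H_i, L_i$ denote iid draws. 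The optimal welfare then satisfies $\mathrm{OPT} \leq \sum_k P_k W_k$.

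On the full-information events ($|\theta|_1 \neq 1$), the algorithm outputs $\sigma = \theta$, so every bidder learns $\theta_i$ and bids the realized $v(\theta_i, t_i)$. For $|\theta|_1 = 0$, revenue equals $\Ex[\maxs_{i \in [n]} L_i]$, which under MHR and $n \geq 22$ is a constant fraction of $W_0$. For $|\theta|_1 = k \geq 2$, since the second-highest of a set is monotone under set inclusion, revenue is at least both $\Ex[\maxs_{i \in [k]} H_i]$ (well-defined as $k \geq 2$) and $\Ex[\maxs_{i \in [n-k]} L_i]$; averaging the two lower bounds and invoking the MHR ratio between $\Ex[\maxs_m\cdot]$ and $\Ex[\max_m\cdot]$ gives a constant fraction of $W_k$.

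The pooling case $|\theta|_1 = 1$ is the delicate one. Given signal $\sigma = \{e_i, e_j\}$, the balance constraint in linear system~\eqref{lp:balance} forces $\Pr[\theta = e_i \mid \sigma] = \Pr[\theta = e_j \mid \sigma] = \tfrac{1}{2}$, so each pooled bidder $i, j$ bids $B_i = \tfrac{1}{2}(v(1,t_i) + v(0,t_i)) \geq \tfrac{1}{2}\, v(1, t_i)$, while every other bidder $k$ deduces $\theta_k = 0$ and bids $v(0, t_k) \sim L$. We combine two complementary lower bounds on the second-highest bid: (a) $\min(B_i, B_j) \geq \tfrac{1}{2}\min(v(1,t_i), v(1,t_j))$, whose expectation is a constant times $\Ex_{v \sim H}[v]$ via the MHR inequality $\Ex[\min(H_1, H_2)] \geq c\,\Ex_{v \sim H}[v]$; and (b) the second-highest among the $n-2$ iid $L$-bidders, which by MHR is a constant times $\Ex[\max_{i \in [n-1]} L_i]$. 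Averaging yields conditional revenue on $\Theta_1$ at least a constant times $\Ex_{v \sim H}[v] + \Ex[\max_{i \in [n-1]} L_i] \geq W_1$.

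Summing over $k$ weighted by $P_k$ gives Algorithm~\ref{alg:publicBayes}'s revenue as a constant fraction of $\mathrm{OPT}$; the hypothesis $n \geq 22$ is used precisely to push the MHR ratio $\Ex[\maxs_n D]/\Ex[\max_n D]$ high enough that the composed constants clear $\tfrac{1}{8}$. The main obstacle is bookkeeping: identifying the right MHR inequalities (in particular a sharp bound on $\Ex[\maxs_n D]/\Ex[\max_n D]$ and a universal lower bound on $\Ex[\min(H_1, H_2)]/\Ex_{v \sim H}[v]$) and composing them carefully so that the pooling bound absorbs both the $H$- and $L$-contributions of $W_1$ without the overall constant slipping past $8$.
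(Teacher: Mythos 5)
Your plan follows the paper's route almost exactly: decompose by $|\theta|$, compare revenue to welfare state-by-state via $\max(a,b)\le a+b$, and use MHR inequalities between $\Ex[\maxs_m\cdot]$ and $\Ex[\max_m\cdot]$ obtained by integration by parts (the paper's Lemma~\ref{lem:BayesAlgo:bound1} and its consequences). Your pooling case is essentially identical to the paper's Lemma~\ref{lem:BayesAlgo:pool}: lower-bound the second-highest bid by $\tfrac12\min(v(1,t_i),v(1,t_j))$ on one hand and by the second-highest of the $n-2$ $L$-bidders on the other, average, and compare against $\Ex_H[v]+\wel(A_{n-1};L)$, landing on $1/8$.

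The one genuine gap is in the full-information case $|\theta|=k\ge 2$. Your averaging step tacitly requires $\Ex[\maxs_{i\in[n-k]}L_i]$ to be a constant fraction of $\Ex[\max_{i\in[n-k]}L_i]$, and this fails when $n-k=1$: the left-hand side is $0$ while the right-hand side is $\Ex_L[v]>0$, so the $L$-side of the average contributes nothing toward covering $\wel(A_{n-k};L)$. The paper (Lemma~\ref{lem:BayesAlgo:full}) avoids this by case-splitting on $k \lessgtr n-k$. When $k>n/2$ it abandons the $L$-revenue entirely and instead proves the sharper polynomial inequality $R_k(x)\ge c\,W_n(x)$ for $k<n$ (equivalently $\rev(A_k;H)\ge c\,\wel(A_n;H)$), then uses $\wel(A_n;H)\ge\wel(A_{n,k})$; this is a genuinely different MHR inequality, not derivable from the $R_k/W_k$ bound, and it is precisely where the $n\ge 22$ hypothesis is used to get $c\ge 1/6$ for $k\ge 12$. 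Alternatively, your plan could be patched without that extra inequality by observing that when $n-k\le 1$ the dangling term satisfies $\wel(A_{n-k};L)\le\Ex_L[v]<\Ex_H[v]\le\wel(A_k;H)$, so the $H$-revenue already covers it at the cost of one more factor of $2$ — but either way, the small-$(n-k)$ regime needs a dedicated argument that your outline currently omits.
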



Observe that the optimal welfare is obtained by revealing full information, thus is the expectation of the welfare at each state. Theorem \ref{thm:unknownAlgo} is proved by showing that for every state $\theta \in \Theta$, the revenue at $\theta$ in Algorithm \ref{alg:publicBayes} is at least $1/8$ fraction of the welfare at $\theta$ in the full information scheme. The following lemma summarizes the concrete bounds for different cases. Utilizing MHR properties, our proof  establishes various inequalities comparing the welfare and revenue of a standard second price auction when each bidder's value is drawn independently from the high or low value distribution. The details are relegated to Appendix \ref{appendix:BayeAlgo}. 

\begin{lemma}\label{lem:unknownAlgo}
	For any $\theta \in \Theta$, let $\rev_{pool}(\theta)$ denote the revenue of the tail-pooling signaling scheme in Algorithm \ref{alg:publicBayes} at state of nature $\theta$ and $\wel_{full}(\theta)$ denote the welfare of the full information scheme at state of nature $\theta$. Then:  
	\begin{enumerate}
		\item When $|\theta| = 0$ and $n \geq 2$, $\rev_{pool}(\theta) \geq \max \{\frac{1}{3}, \frac{\ln(n+1) - 1}{ \ln(n+1)} \} \cdot  \wel_{full}(\theta)$. 
		\item When $|\theta| \geq 2$ and $n \geq 22$, $\rev_{pool}(\theta) \geq \max \{\frac{1}{6}, \frac{\ln(|\theta|+1) - 1}{ 2\ln(|\theta|+1)} \} \cdot  \wel_{full}(\theta)$. 
		\item When $|\theta| = 1$ and $n \geq 22$, $\rev_{pool}(\theta) \geq \frac{1}{8} \cdot  \wel_{full}(\theta)$. 
	\end{enumerate}
\end{lemma}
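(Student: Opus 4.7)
The plan is to establish all three bounds by first describing, for each type of state $\theta$, exactly what signal Algorithm \ref{alg:publicBayes} emits, deriving bidders' dominant-strategy bids in the second-price auction conditioned on that signal, and then comparing the resulting expected revenue to the full-information welfare $\wel_{full}(\theta)$ via MHR-based inequalities on order statistics. In all three cases the revenue admits a clean expression as the expectation of the second-highest of an explicit collection of $n$ bids whose distribution depends only on $|\theta|$, $H$, and $L$, so the problem reduces to a collection of MHR order-statistic comparisons.

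For $|\theta| = 0$ the signal $\sigma = \theta$ lets every bidder infer $\theta_i = 0$, so all $n$ bidders bid i.i.d.\ draws from $L$; hence $\rev_{pool}(\theta) = \Ex[X_{(2)}]$ and $\wel_{full}(\theta) = \Ex[X_{(1)}]$ for $X_1,\dots,X_n \sim L$ i.i.d. I would then invoke a standard MHR order-statistic inequality of the rough form $\Ex[X_{(1)}] - \Ex[X_{(2)}] \leq \Ex[X_{(1)}]/\ln(n+1)$ to obtain the ratio $(\ln(n+1)-1)/\ln(n+1)$, and a direct union-bound argument valid for all $n \geq 2$ to obtain the $1/3$ bound. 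For $|\theta| \geq 2$ the signal again reveals $\theta$ fully, so $|\theta|$ bidders bid i.i.d.\ from $H$ and the remaining $n-|\theta|$ bid i.i.d.\ from $L$. Since $\Ex_{v\sim H}[v] > \Ex_{v\sim L}[v]$, I would split $\wel_{full}(\theta)$ into the contribution from the top two $H$-bidders (bounded via the same MHR inequality, now in $|\theta|$, which yields the logarithmic ratio) and a residual contribution from the maximum $L$-bid, which MHR tail control keeps a bounded fraction of the $H$-contribution; the extra factor of $2$ in the denominator of the stated logarithmic bound is precisely what is needed to absorb this residual.

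The case $|\theta| = 1$ is the main obstacle. Writing $\theta = \{i\}$ paired with some $\theta' = \{j\}$ via the feasible pooling scheme $\pi$, the pooled signal $\{\theta,\theta'\}$ leaves each of $i$ and $j$ with posterior probability $1/2$ of being the high-type and hence induces each to bid $\tfrac{1}{2}(v_i^H + v_i^L)$ and $\tfrac{1}{2}(v_j^H + v_j^L)$ respectively, while every other bidder $k$ infers $\theta_k = 0$ and bids $v_k^L$. The revenue is the expected second-highest of these $n$ bids, whereas $\wel_{full}(\theta) = \Ex[\max(v^H,\max_{k\neq i}v_k^L)]$. I would lower-bound $\rev_{pool}(\theta)$ by restricting to the event that both pooled bids exceed every low bid, which reduces the revenue to the expected minimum of two i.i.d.\ copies of $\tfrac{1}{2}(v^H + v^L)$. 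To compare this with $\wel_{full}(\theta)$ I would handle the two halves of the welfare separately: the $v^H$ contribution loses at most a factor of $2$ because the pooled bid is a half-mixture, while the $\max_{k\neq i} v_k^L$ term is controlled by MHR tail inequalities that, for $n \geq 22$, keep the expected low-bidder maximum within a constant multiple of $\Ex_{v\sim H}[v]$. The hypothesis $n \geq 22$ is precisely what is needed to fold all the MHR tail constants into the single target factor $1/8$, and verifying that all the constants compose correctly is the most delicate step I expect to encounter.
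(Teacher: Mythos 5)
Your high-level reduction is correct: Algorithm \ref{alg:publicBayes} reveals $\theta$ in full except when $|\theta|=1$, so the problem becomes a collection of welfare-to-revenue comparisons for second-price auctions with independent (but not identically distributed) MHR bidders. Your case $|\theta|=0$ matches the paper's Lemma \ref{lem:BayesAlgo:bound2}: with all bidders i.i.d.\ from $L$, the ratio $\Ex[X_{(2)}]/\Ex[X_{(1)}]$ is bounded via the MHR-induced comparison of the functions $R_k(x) = x + x^2/2 + \cdots + x^k/k - x^k$ and $W_k(x) = x + \cdots + x^k/k$, giving $(C_k-1)/C_k \geq \max\{1/3, (\ln(k+1)-1)/\ln(k+1)\}$. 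That part is fine.

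However, for $|\theta| \geq 2$ and $|\theta|=1$ your plan hinges on a step that does not hold: you propose to bound the contribution of the low-distribution bidders (the $\max$ of $L$-draws) by a \emph{constant multiple of the $H$-contribution}, justified by ``MHR tail control.'' MHR is a property of each distribution in isolation; it gives no relationship between the upper tail of $L$ and any quantity computed from $H$. In particular, $\Ex_{v\sim H}[v] > \Ex_{v\sim L}[v]$ does not prevent $\Ex[\max_{k\leq m} v_k^L]$ from vastly exceeding $\Ex[\max_{k\leq |\theta|} v_k^H]$ when $L$ has a long MHR tail. The paper avoids this entirely. It instead decomposes $\wel(A_{n,k}) \leq \wel(A_k;H) + \wel(A_{n-k};L)$ and then bounds \emph{each welfare term against the revenue of the same distribution}: $\rev(A_{n,k}) \geq \max\{\rev(A_k;H), \rev(A_{n-k};L)\} \geq \tfrac12(\rev(A_k;H) + \rev(A_{n-k};L))$, and applies the same-distribution MHR ratio to both $(A_k;H)$ and $(A_{n-k};L)$. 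This introduces the factor $1/2$, not your absorption of an $L$-vs-$H$ residual. Moreover, this averaging only helps when $n-k$ is large enough for the $L$-ratio to be good; when $k > n-k$ the paper switches to a different estimate, bounding $\rev(A_k;H) \geq c\cdot \wel(A_n;H)$ directly via a comparison of $R_k(x)$ against $W_n(x)$, using $n\ge 22$ so that $k\ge 12$. Your sketch does not notice this case split, and without it the $L$-welfare is not controllable by your method. The $|\theta|=1$ case inherits the same problem: the paper again bounds $\wel(A_1;H)$ by $2\rev(A_2;H)$ and $\wel(A_{n-1};L)$ by $2\rev(A_{n-2};L)$ (the latter needing $n\ge 22$), rather than comparing $L$ tails to $H$. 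To fix your proof you should replace every ``bound $L$-contribution by $H$-contribution'' step with a same-distribution revenue-versus-welfare bound plus the $\rev \geq \tfrac12(\rev_H + \rev_L)$ averaging, and add the $k > n-k$ subcase for part~2.
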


\begin{remark}
	Theorem \ref{thm:unknownAlgo} conveys an interesting conceptual message. That is, the simple ``rule of thumb" principle, which reveals full information when there is enough competition and otherwise pools several bidders to form competition, is approximately optimal. As an instantiation of this principle, Algorithm \ref{alg:publicBayes} pools the tail states consisting of a single high-valuation bidder to form competition among \emph{two} bidders. We make this choice because such a pooling scheme exists under only a minor assumption on $\lambda$, i.e., tail balance (see Lemma \ref{lem:tail_balance}). Nevertheless, from a practical perspective, if there exist pooling schemes that can form competition among more bidders by pooling the states consisting of a few high-valuation bidders, we suspect that the analogy of Algorithm \ref{alg:publicBayes} may yield more revenue in practice.       
\end{remark}

\section{Private Signaling in the Known-Valuation Setting}
\label{sec:PublicPrivate}
In this section, we consider the design of the optimal private signaling scheme in the known-valuation setting. To see the advantage of using private  schemes, we first exhibit an example showing that private signaling schemes may achieve \emph{arbitrarily} better revenue than the optimal public signaling scheme. 

\begin{example}\label{ex:AggreBid}
	Consider a second-price auction with $3$ bidders in the known-valuation setting. There are two value profiles, as listed in the following table  with corresponding probabilities.
	\begin{table}[H]
		\centering
		\begin{tabular}{|l|ccc|}
			\hline
			& bidder 1 & bidder 2 & bidder 3 \\ \hline 
			prob: $1-\epsilon$ & $2\epsilon$  & $\epsilon$ & 1 \\ \hline
			prob:  $\epsilon$ & 1 & $1 - \epsilon$ & $\epsilon$ \\
			\hline
		\end{tabular}
	\end{table}
\end{example}


\begin{proposition}\label{prop:gap}
	Assume any bidder bids her expected value when this is the dominant strategy. In Example \ref{ex:AggreBid}, there exists a private signaling scheme that achieves arbitrarily better revenue, even in the \emph{worst} Bayes Nash equilibrium, than the optimal public signaling scheme.
\end{proposition}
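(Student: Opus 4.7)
The plan is to exhibit a three-bidder private signaling scheme whose every Bayes Nash equilibrium yields revenue $\Omega(1)$, and to pair that against a tight $O(\epsilon)$ upper bound on any public scheme for Example~\ref{ex:AggreBid}; the ratio then diverges as $\epsilon\to 0$.

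For the public upper bound, I would parametrize any public signal $\sigma$ by its posterior probability $p=\Pr(\text{profile 1}\mid\sigma)$ on the first value profile and write out the three conditional expected values
\[
B_1(p)=1-p(1-2\epsilon),\quad B_2(p)=1-\epsilon-p(1-2\epsilon),\quad B_3(p)=\epsilon+p(1-\epsilon).
\]
Since $B_1-B_2\equiv\epsilon>0$, bidder~$2$ is never the top bidder, so the conditional second-price $R(p):=\maxs(B_1(p),B_2(p),B_3(p))$ is piecewise linear: $R=B_2$ on $[0,p_1]$, $R=B_3$ on $[p_1,p_2]$, and $R=B_1$ on $[p_2,1]$, with kinks at $p_1=(1-2\epsilon)/(2-3\epsilon)$ and $p_2=(1-\epsilon)/(2-3\epsilon)$. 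The optimal public revenue equals the concave closure of $R$ evaluated at the prior $p=1-\epsilon$, obtainable from LP~\eqref{lp:public} or via a Kamenica--Gentzkow argument. Since $1-\epsilon>p_2$, the prior lies in the last piece; a short verification of the supporting chords shows the closure equals $R(1-\epsilon)=B_1(1-\epsilon)=3\epsilon-2\epsilon^2$, already attained by the uninformative public scheme.

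For the private scheme, I would fully reveal the realized profile to bidders $2$ and $3$ while sending a constant (uninformative) signal to bidder $1$. Because bidders~$2$ and~$3$ then know their realized values, the proposition's dominant-strategy assumption fixes their bids at $(\epsilon,1)$ in profile~$1$ and $(1-\epsilon,\epsilon)$ in profile~$2$. Bidder~$1$ holds only the prior and chooses $b_1$ to maximize expected utility against these fixed bids. The maximum competing bid is $1$ in profile~$1$ and $1-\epsilon$ in profile~$2$, so a case split gives
\[
U(b_1)=\begin{cases} 0 & b_1\le 1-\epsilon,\\ \epsilon^2 & b_1\in(1-\epsilon,1),\\ (1-\epsilon)(2\epsilon-1)+\epsilon^2 & b_1>1,\end{cases}
\]
with the last line negative for small $\epsilon$. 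Hence every best response of bidder~$1$ lies strictly in $(1-\epsilon,1)$, and so does her bid in every BNE. The induced expected revenue is $(1-\epsilon)b_1+\epsilon(1-\epsilon)$ (bidder~$3$ wins profile~$1$ at second-price $b_1$; bidder~$1$ wins profile~$2$ at second-price $1-\epsilon$), whose infimum as $b_1\to(1-\epsilon)^+$ is $(1-\epsilon)^2+\epsilon(1-\epsilon)=1-\epsilon$; the ratio to the public bound, $(1-\epsilon)/(3\epsilon-2\epsilon^2)$, diverges as $\epsilon\to 0$.

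The main obstacle is the worst-BNE lower bound: one must rule out any equilibrium in which bidder~$1$ bids very low. This is settled by the strict gap $0<\epsilon^2$ between the first two cases of $U(\cdot)$; because bidders~$2$ and~$3$ play dominant strategies that do not react to $b_1$, bidder~$1$'s best-response set is exactly $(1-\epsilon,1)$, independent of any equilibrium selection among the other bidders. The only minor technicality is that this set is open, so $1-\epsilon$ is attained only as a limit; it becomes exact under any tie-breaking convention that makes $b_1=1-\epsilon$ a best response in profile~$2$ (e.g., ties broken for bidder~$1$), and in any case the asymptotic comparison is unaffected.
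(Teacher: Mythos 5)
Your proposal is correct and uses the same overall decomposition as the paper (an $O(\epsilon)$ upper bound for public schemes, paired with the private scheme ``full information to bidders $2,3$, nothing to bidder $1$'' achieving revenue $\geq 1-\epsilon$), and the private-side analysis matches the paper's argument in Claim~\ref{claim2:gap} almost verbatim. The one place you diverge is the public upper bound: the paper observes that $v_1(\theta) > v_2(\theta)$ in both states, so bidder $1$'s conditional expected value dominates bidder $2$'s under \emph{any} public signal; hence bidder~$1$ is always at least second-highest, the realized second price is at most bidder~$1$'s bid, and the expected revenue is at most $\Ex[v_1] = 3\epsilon - 2\epsilon^2 < 3\epsilon$ --- a one-line domination argument with no concavification needed. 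You instead parametrize posteriors by $p=\Pr(\text{profile }1\mid\sigma)$, work out the piecewise-linear second-price function $R(p)$ and its kinks, and invoke the Kamenica--Gentzkow concave-closure characterization. That is more work, and the step ``a short verification of the supporting chords shows\dots'' does deserve to be spelled out (one must check that $(p_2, R(p_2))$ lies above the chord from $(0,R(0))$ to $(1,R(1))$ and that the slope from $(0,R(0))$ to $(p_2,R(p_2))$ exceeds $-(1-2\epsilon)$, both of which hold). In exchange, you get a slightly richer statement: the exact public optimum is $3\epsilon-2\epsilon^2$ and it is attained by the uninformative scheme, whereas the paper only records the upper bound. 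Both routes are valid; the paper's is the cheaper one for this example, while yours is the more systematic recipe and would generalize more readily if bidders~$1$ and~$2$ were not comparable state-by-state. One other small point: you are right to flag the open interval $(1-\epsilon,1)$ and the tie-break at $b_1=1-\epsilon$; the paper glosses this by writing $[1-\epsilon,1)$, but as you note the asymptotics are unaffected.
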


Proposition \ref{prop:gap} follows from Claim \ref{claim1:gap} and  Claim \ref{claim2:gap}. An intuitive explanation is as follows.  In Example \ref{ex:AggreBid}, bidder $1$ values the item more than bidder $2$ at both states. We say bidder $1$ is \emph{stronger} than bidder $2$. We consider a private signaling scheme that simply reveals full information to bidder $2$ and $3$  but no information to bidder $1$. This results in bidder $1$, whose expected valuation is less than $3\eps$, to aggressively bid at least $1-\eps$ in order to win bidder $2$. This greatly increases the revenue compared to the case of public signaling schemes in which case bidders bid only their expected values.

\begin{claim}\label{claim1:gap}
	The revenue obtained by the  optimal public signaling scheme in Example \ref{ex:AggreBid} is  at most $ 3\epsilon$. 
\end{claim}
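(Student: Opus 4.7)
The approach is to exploit a \emph{pointwise} dominance relation between bidders $1$ and $2$ in Example~\ref{ex:AggreBid}. Observe that at the first value profile, bidder $1$'s value $2\epsilon$ is at least bidder $2$'s value $\epsilon$, and at the second value profile, bidder $1$'s value $1$ is at least bidder $2$'s value $1-\epsilon$. Thus at \emph{every} state of nature, $v_1(\theta) \geq v_2(\theta)$. Since conditional expectation preserves pointwise inequalities, the same inequality holds under the posterior distribution induced by any public signal $\sigma$; that is, $v_1(\sigma) \geq v_2(\sigma)$. Consequently, bidder $1$'s conditional expected value can never be the unique minimum of the three, so $\maxs_i v_i(\sigma) \leq v_1(\sigma)$ for every signal $\sigma$.

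Given this, the plan is to bound the revenue term-by-term by bidder $1$'s contribution and then invoke the tower property. By Lemma~\ref{lem:revelation}, we may assume the optimal public scheme $\varphi$ uses finitely many signals. For each signal $\sigma$, the contribution to revenue is $\alpha_\sigma \cdot \maxs_i v_i(\sigma) \leq \alpha_\sigma \cdot v_1(\sigma) = \sum_\theta \lambda_\theta \varphi(\theta,\sigma)\, v_1(\theta)$. Summing over signals and swapping orders of summation,
\[
\rev(\varphi) \;\leq\; \sum_\sigma \sum_\theta \lambda_\theta \varphi(\theta,\sigma)\, v_1(\theta) \;=\; \sum_\theta \lambda_\theta v_1(\theta) \sum_\sigma \varphi(\theta,\sigma) \;=\; \sum_\theta \lambda_\theta v_1(\theta) \;=\; (1-\epsilon)(2\epsilon) + \epsilon \cdot 1 \;=\; 3\epsilon - 2\epsilon^2 \;\leq\; 3\epsilon,
\]
where the second equality uses that the rows of $\varphi$ are probability distributions. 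This yields the claim.

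There is no serious technical obstacle: the entire argument rests on verifying that one bidder's value dominates another's at every realized state, which is immediate by inspection of the two value profiles. Conceptually, because bidder $1$ is pointwise ``stronger'' than bidder $2$, a public signal can never induce bidder $2$ to bid above bidder $1$ in expectation, so the only competition available to the auctioneer for setting a price on bidder $1$ or bidder $3$'s win is already capped by bidder $1$'s own (small) unconditional expected value of roughly $3\epsilon$. This is precisely the weakness of public signaling that the private scheme in Claim~\ref{claim2:gap} will overcome by forcing bidder $1$ to bid aggressively despite its low expected valuation.
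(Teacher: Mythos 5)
Your proof is correct and follows the same line of reasoning as the paper: observe that $v_1(\theta) \geq v_2(\theta)$ pointwise, hence $v_1(\sigma) \geq v_2(\sigma)$ under any posterior, so the second-highest conditional expected value never exceeds $v_1(\sigma)$, and the revenue is bounded by bidder $1$'s unconditional expected value $3\epsilon - 2\epsilon^2 < 3\epsilon$. Your write-up is somewhat more explicit in spelling out the summation and tower-property step, but conceptually it is the same argument.
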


\begin{claim}\label{claim2:gap}
	In Example \ref{ex:AggreBid}, there exists a private signaling scheme that achieves revenue at least  $1-\epsilon$ even in the worst Bayes Nash equilibrium.
\end{claim}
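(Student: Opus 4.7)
The plan is to exhibit the concrete private scheme hinted at in the paper and verify revenue at least $1-\epsilon$ in every Bayes Nash equilibrium. The scheme: reveal the realized state to bidders $2$ and $3$, and send an uninformative (constant) signal to bidder $1$. Bidder $1$'s belief remains the prior, so her expected value is $(1-\epsilon)(2\epsilon) + \epsilon \cdot 1 = 3\epsilon - 2\epsilon^2$, while bidders $2$ and $3$ learn their values exactly.

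The first step is to pin down the bids of bidders $2$ and $3$. Because each of them faces a standard second-price auction with a known private value (known after observing the signal), truthful bidding is a dominant strategy regardless of bidder $1$'s behavior. Thus in every BNE, bidder $2$ bids $\epsilon$ or $1-\epsilon$ and bidder $3$ bids $1$ or $\epsilon$, depending on the state.

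The second step is to characterize bidder $1$'s best response to these fixed bids. I would compute her interim expected utility as a function of a pure bid $b$: any $b < 1-\epsilon$ yields utility $0$ (bidder $3$ wins state $1$ at bid $1$ and bidder $2$ wins state $2$ at bid $1-\epsilon$); any $b \in (1-\epsilon,1)$ wins only in state $2$ at price $1-\epsilon$, giving utility $\epsilon \cdot \epsilon = \epsilon^2 > 0$; and any $b \geq 1$ wins state $1$ as well but pays $1$ there, so her expected utility $(1-\epsilon)(2\epsilon-1) + \epsilon \cdot \epsilon$ is strictly negative for small $\epsilon$. Hence the unique best responses are bids strictly between $1-\epsilon$ and $1$ (mixtures among such bids are also best responses, and give the same revenue). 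Consequently, in every BNE, bidder $1$'s bid lies in $(1-\epsilon,1)$ almost surely.

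Finally, I compute revenue. In state $1$ (probability $1-\epsilon$) bidder $3$'s bid of $1$ wins and the second-highest bid is bidder $1$'s, which exceeds $1-\epsilon$; in state $2$ (probability $\epsilon$) bidder $1$ wins and pays $1-\epsilon$. The expected revenue is therefore at least $(1-\epsilon)(1-\epsilon) + \epsilon(1-\epsilon) = 1-\epsilon$, proving the claim. The main obstacle is handling the boundary bids $b = 1-\epsilon$ and $b = 1$ under arbitrary tie-breaking: one must verify that neither lies in any BNE (a small downward deviation from $b=1$ strictly helps, and adding an arbitrarily small $\delta$ to $b=1-\epsilon$ strictly improves utility from at most $\epsilon^2/2$ to $\epsilon^2$), so the equilibrium bid of bidder $1$ is indeed bounded below by $1-\epsilon$ almost surely and the revenue bound is preserved in the worst equilibrium.
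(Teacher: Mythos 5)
Your proposal is correct and follows essentially the same argument as the paper: reveal the state to bidders 2 and 3 (who then bid truthfully), reveal nothing to bidder 1, and show that any bid below $1-\epsilon$ gives bidder 1 zero utility while bids in $(1-\epsilon,1)$ give strictly positive utility, so bidder 1 bids at least $1-\epsilon$ in every BNE, which pins revenue at $\geq 1-\epsilon$ in both states. Your treatment is a bit more thorough than the paper's (you additionally rule out $b\geq 1$ and discuss tie-breaking at the boundaries), but both points are inessential for the revenue bound since any bid $\geq 1-\epsilon$ by bidder 1 already yields revenue $\geq 1-\epsilon$ in each state.
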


\begin{remark}Our earlier Example \ref{ex:curse} shows that a weak bidder may bid as low as possible if less informed, due to the winner's curse. Example \ref{ex:AggreBid} illustrates an interesting ``dual" phenomenon: a strong bidder may bid very aggressively at equilibrium if less informed. 
\end{remark}

\noindent \textbf{Discussion of the Bidding Behavior Assumption.} To obtain any non-trivial revenue guarantee in our worst case analysis, i.e., the worst Bayes Nash Equilibrium (BNE), it is \emph{necessary} to make some bidder behavior assumptions. Otherwise, the auctioneer's revenue in the \emph{worst} BNE of a second-price auction would always be 0 regardless of what signaling scheme she uses since that one bidder bids arbitrarily large and all other bidders bid $0$ is always a BNE. To rule out these trivial equilibria,  we assume throughout this section  that \emph{ any bidder will bid his true value if this is the dominant strategy} (also assumed in Proposition \ref{prop:gap}, Claim \ref{claim1:gap} and \ref{claim2:gap}).   Typically, the dominant strategy equilibrium is the most ideal solution concept to adopt for revenue analysis. Unfortunately, it does not necessarily exist in second price auctions under private signaling.  Our assumption can be viewed as a natural generalization of the dominant strategy equilibrium to the setting where it does not exist  --  i.e., bidders who have a dominant strategy will choose the dominant strategy and on top of that, we take the worst case analysis.

Proposition \ref{prop:gap}  shows the appealing revenue property of private signaling schemes, which motivates the computational question of designing the optimal private signaling scheme. Unfortunately, this turns out to be challenging due to several technical barriers.  The first is the problem of equilibrium selection. Recall that there are usually many equilibria under private signaling schemes, as illustrated in Example \ref{ex:curse}. Since there are no natural equilibrium selection criteria, we choose to analyze the revenue in the worst-case equilibrium, as indicated in Proposition \ref{prop:gap} (under the aforementioned bidding behavior assumption). 
Second, as illustrated by Example \ref{ex:curse} and \ref{ex:AggreBid}, private signaling schemes usually cause complicated bidding behaviors at equilibrium in second-price auctions. Even computing a desired equilibrium is already highly non-trivial \cite{Hausch1987,Syrgkanis2013}, let alone optimizing the revenue over these equilibria. 

Nevertheless, our next result shows that, under mild assumptions, there exists a private signaling scheme that extracts almost the \emph{full surplus} even in the \emph{worst} Bayes Nash equilibrium. Before stating the result formally, we first specify some notations. Let $\rho^* = \max_{\theta,i} v_i(\theta)$ denote the maximum value among all possible valuations of all bidders, and $i^*$ denotes the bidder  who has  value $\rho^*$ (one of them if multiple). Let $\rho^{**} = \max_{\theta, \, i \not = i^*} v_i(\theta)$ be the maximum value among all possible valuations, excluding the values of bidder $i^*$, and $i^{**}$ denote a bidder who has the value $\rho^{**}$. Let $x^+ = \max(x,0)$. We prove the following theorem.

\begin{theorem}\label{thm:private}
	Assume $0 \in \V_i$ for each $i$ and $\lambda$ supports on the \emph{entire} set  $\V$. There exists a private signaling scheme that can be implemented in $\poly(n,\sum_i |\V_i|)$ time  and extract the full surplus, excluding an amount of \, $\sum_{\theta} \lambda_{\theta}[v_{i^*}(\theta)-\rho^{**}]^+ + \epsilon$ for an arbitrarily small $\epsilon$, even in the worst Bayes Nash equilibrium.
\end{theorem}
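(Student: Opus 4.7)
\begin{proofsketchof}{Theorem \ref{thm:private}}
The plan is to construct, state by state, a private scheme that forces a designated ``forced'' bidder to bid near a target level, so that in every Bayes Nash equilibrium the second-price revenue in state $\theta$ is at least $\min(V(\theta), \rho^{**}) - O(\epsilon)$, where $V(\theta) = \max_i v_i(\theta)$. Summing over $\theta$ then yields the claimed extraction, because the identity $V(\theta) - \min(V(\theta), \rho^{**}) = [v_{i^*}(\theta) - \rho^{**}]^+$ holds (since any non-$i^*$ bidder has value at most $\rho^{**}$). In each $\theta$ I designate the natural winner $w(\theta) = \argmax_i v_i(\theta)$ and a forced bidder $a(\theta)$ equal to $i^*$ when $w(\theta) \neq i^*$, and to $i^{**}$ when $w(\theta) = i^*$. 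The winner is sent a signal fully revealing $\theta$ (and thus bids $V(\theta)$ as a dominant strategy), every non-forced, non-winner bidder is sent a signal revealing their own value (and thus bids it truthfully), while the forced bidder receives a pooled signal $\sigma_{a, B}$ that I will design to pin their bid down at a target $B = B(\theta)$ close to $\min(V(\theta), \rho^{**})$.

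The family $\{\sigma_{a, B}\}$ is indexed by $a \in [n]$ and $B \in \V_a$, giving $O(n \sum_i |\V_i|)$ signals in total. For each $(a, B)$, I select a trap profile $v^{a, B} \in \V$ (available by the full-support hypothesis on $\lambda$) with $v_a^{a, B} = B$, a distinct bidder $a' \neq a$ having $v_{a'}^{a, B} = B^- := \max\{v \in \V_{a'} : v < B\}$, and all other coordinates equal to $0$. When the realized state is $v^{a, B}$, the scheme triggers a ``trap behaviour'' with a calibrated probability $\gamma_{a, B}$: it sends $\sigma_{a, B}$ to $a$, the fully informative signal ``value $B^-$'' to $a'$ (so that $a'$ bids $B^-$ truthfully, and crucially no second layer of aggressive signalling is required), and ``value $0$'' to every other bidder. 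I will pick $\gamma_{a, B}$ to be small enough that the cumulative welfare loss from all trap states is at most $\epsilon/2$, yet large enough that the posterior mass $a$ places on the trap after seeing $\sigma_{a, B}$ beats the ``loss risk'' of accidentally winning a main state.

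Next I verify the forced bidder's best response to $\sigma_{a, B}$: bidding $b \leq B^-$ loses the trap (to $a'$) and every main state (whose winner bids $V(\theta) > B$), for utility $0$; bidding $b \in (B^-, B]$ wins only the trap at price $B^-$, for strictly positive expected utility proportional to the trap posterior; bidding $b > V(\theta)$ in any main state wins that state at price $V(\theta) \geq v_a(\theta)$ (since $a \neq w(\theta)$), contributing a strictly non-positive term which by the calibration of $\gamma_{a,B}$ overwhelms any extra trap gain. Hence in every BNE the forced bidder bids at least $B^-$, and in each main state the second price is bounded below by $\max(B^-, \max_{j \text{ passive}} v_j(\theta)) \geq B^-$. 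The per-state revenue loss is therefore at most $(V(\theta) - B) + (B - B^-)$, which sums, together with the trap welfare-loss, to $\sum_\theta \lambda_\theta [v_{i^*}(\theta) - \rho^{**}]^+ + \epsilon$ for suitable choices of $B(\theta)$ and $\gamma_{a, B}$. The scheme runs in $\poly(n, \sum_i |\V_i|)$ time since all signals and trap profiles can be enumerated directly from $\V$.

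The main obstacle will be pinning down the \emph{worst}-case BNE of the forced bidder. Without the truthful competitor $a'$ in each trap, bidder $a$ would be indifferent between bidding $B$ and bidding $0$ (both win the uncontested trap), so the worst BNE could extract nothing; inserting $a'$ with value exactly $B^-$ breaks this indifference, and because $a'$'s signal is fully informative no infinite cascade of aggressive signals arises. A secondary subtlety is that passive bidders with values close to $V(\theta)$ can only raise the second price, so the lower bound $\max(B^-, \max_j v_j(\theta))$ is correct. Finally, both assumptions ($0 \in \V_i$ and $\lambda$ full-support on $\V$) are used critically --- the former so that the ``value $0$'' signal is consistent for every passive bidder inside a trap, and the latter so that every required trap profile $v^{a, B}$ actually occurs in the state space.
\end{proofsketchof}
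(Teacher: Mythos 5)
Your proposal shares the paper's central idea: plant a low-probability ``trap'' whose existence forces the forced/uninformed bidder, in every Bayes Nash equilibrium, to submit a high bid that then serves as the effective second price in the realized main state. Structurally this is the same role as the auxiliary profile $u$ in Lemma~\ref{lem:private}. The choice of the forced bidder ($i^*$, or $i^{**}$ when $i^*$ is the natural winner) also mirrors the paper's case analysis.

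The gap is the discretization of your trap. Because you insist the trap profile $v^{a,B}$ be a \emph{single} point of $\V$, you are constrained to $B \in \V_a$ and $B^- = \max\{v\in\V_{a'}: v<B\}$, and your equilibrium lower bound on revenue in state $\theta$ is only $B^-$. There is no parameter available to push $B^-$ up to $\min(V(\theta),\rho^{**}) - \epsilon$; the loss $\min(V(\theta),\rho^{**}) - B^-$ is controlled entirely by the granularity of $\V_a$ and $\V_{a'}$, which is arbitrary. Concretely, if $\V_1=\{0,100\},\ \V_2=\{0,99\},\ \V_3=\{0,50\}$ and $v(\theta)=(0,0,50)$ so that $a=i^*=1$, then $B\in\{0,100\}$ and the only candidate $B^-$'s are $99$ and $50$: the first exceeds $V(\theta)$, so winning the trap forces the forced bidder to also win the main state at a strict loss (killing the incentive altogether), while the second produces the tie $B^-=V(\theta)$; in neither case do you get a lower bound of the form $V(\theta)-\epsilon$. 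The paper avoids exactly this by not using a single trap state: for each realized $v$ it builds the auxiliary $u$ as a \emph{mixture} of two profiles $w^1,w^2\in\V$ that differ only in the coordinate of one informed bidder (taking the values $v_{[1]}$ and $0$ there), chooses the mixing probability $\epsilon/v_{[1]}$, and pools that coordinate's signal over $\{w^1,w^2\}$ so that that bidder's conditional bid is exactly $v_{[1]}-\epsilon$. This lets the forcing level be placed at any point of the continuum, with $\epsilon$ explicit, rather than at a grid point of $\prod_i \V_i$; it is precisely this second degree of freedom (a second informed bidder whose value is itself being signal-averaged) that your ``no second layer of signalling'' design rules out. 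Until you replace the point trap with such a two-profile mixture (or otherwise argue that the sets $\V_i$ are dense enough near the needed targets, which the theorem's hypotheses do not give you), the claimed additive $\epsilon$ slack is not justified.
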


The intuition underlying the missed amount from the surplus in Theorem \ref{thm:private} is as follows. Recall that $\rho^{**}$ is the largest possible value,  excluding the values of bidder $i^*$.  When $i^*$ has a value $v_{i^*}(\theta) > \rho^{**}$ at state of nature $\theta$, the revenue is at most $\rho^{**}$ in a second price auction since $\rho^{**}$ is the highest possible bid among all bidders other than $i^*$.  Therefore, an amount of $[v_{i^*}(\theta) - \rho^{**}]$  can \emph{not} be extracted from the surplus at such a state. The excluded amount in Theorem \ref{thm:private}, up to the negligible $\eps$, is precisely the sum of all such possible surplus losses weighted by their probabilities. In Example \ref{ex:AggreBid}, $\rho^{*} = \rho^{**} = 1$, therefore the event $v_{i^*}(\theta) > \rho^*$ does not happen and the excluded amount equals to $ \epsilon$, which is indeed the gap between the total surplus and the revenue of our constructed private signaling scheme (Claim \ref{claim2:gap}). 

The proof of Theorem \ref{thm:private} is inspired by Example \ref{ex:AggreBid}, which can be viewed differently from the following perspective.   In particular, the second value profile $(1,1-\epsilon,\eps)$ in Example \ref{ex:AggreBid} can be viewed as an \emph{auxiliary} value profile, which may have arbitrarily small probability but helps the auctioneer to extract almost full surplus from the first value profile $(2\epsilon, \epsilon,1)$.  
To prove Theorem \ref{thm:private}, we first generalize Example \ref{ex:AggreBid} by characterizing a general condition of two value profiles, under which an auxiliary value profile can be leveraged to extract (almost) full surplus from another value profile.  We then show that for any realized value profile, a corresponding auxiliary value profile can always be constructed under  the theorem assumptions. We defer the formal proof  to Appendix \ref{appendix:privateAlg}.

\section{Conclusions and Future Work}
In this paper, we study optimal signaling in second price auctions when there is a large set of possible item realizations. In a well-motivated Bayesian valuation setting, we obtain  preliminary positive and negative results for the optimal public signaling.  We also initiate the study of optimal private signaling in auctions and illustrate various challenges of designing the optimal private signaling scheme due to complicated equilibrium bidding behaviors and equilibrium selection issues. In the  basic known-valuation setting, we exhibit a private signaling scheme that extracts almost the full surplus. 

Our work has left open a rich set of interesting  questions. We conclude by mentioning a few. In the Bayesian valuation setting, we proved a menu-size-complexity-type negative result. However, the computational complexity of the problem is still open. In particular, is there a polynomial-time algorithm for optimal public signaling in the Bayesian valuation setting? Our positive result here relies on the MHR assumption of bidders' valuation distributions. Is it possible to relax this assumption and obtain an approximately optimal public scheme for more general settings? Is there still a ``simple" approximate scheme in the more general setting?

Our result for private signaling is still limited to the (less realistic) known valuation setting, but it  illustrates the surprising power of private schemes in extracting revenue. This opens many possibilities for future research. We wonder whether similar results hold in more realistic auction settings, e.g., the Bayesian valuation setting studied in this work. In particular, to what extent can private signaling schemes outperform public schemes in terms of extracting revenue? Is the auctioneer's revenue from the optimal private scheme close to the total surplus? How to efficiently compute the optimal private signaling scheme?  These descriptive and prescriptive questions are largely open  in more realistic auction settings.  



\bibliographystyle{named}
\bibliography{refer}

\begin{thebibliography}{}

\bibitem[\protect\citeauthoryear{Arieli and Babichenko}{2016}]{Arieli2016}
Itai Arieli and Yakov Babichenko.
\newblock Private bayesian persuasion.
\newblock 2016.

\bibitem[\protect\citeauthoryear{Babaioff \bgroup \em et al.\egroup
  }{2017}]{Babaioff2017}
Moshe Babaioff, Yannai~A. Gonczarowski, and Noam Nisan.
\newblock The menu-size complexity of revenue approximation.
\newblock In {\em Proceedings of the 49th Annual ACM SIGACT Symposium on Theory
  of Computing}, STOC 2017, pages 869--877, New York, NY, USA, 2017. ACM.

\bibitem[\protect\citeauthoryear{Babichenko and Barman}{2017}]{Babichenko2016b}
Yakov Babichenko and Siddharth Barman.
\newblock Computational aspects of private bayesian persuasion.
\newblock {\em 8th Innovations in Theoretical Computer Science (ITCS)}, 2017.

\bibitem[\protect\citeauthoryear{Bergemann and Morris}{2016}]{Bergemann2016}
Dirk Bergemann and Stephen Morris.
\newblock Bayes correlated equilibrium and the comparison of information
  structures in games.
\newblock {\em Theoretical Economics}, 11(2):487--522, 2016.

\bibitem[\protect\citeauthoryear{Bro~Miltersen and Sheffet}{2012}]{Miltersen12}
Peter Bro~Miltersen and Or~Sheffet.
\newblock Send mixed signals: earn more, work less.
\newblock In {\em Proceedings of the 13th ACM Conference on Electronic
  Commerce}, pages 234--247. ACM, 2012.

\bibitem[\protect\citeauthoryear{Campbell and Levin}{2000}]{Campbell2000}
Colin~M Campbell and Dan Levin.
\newblock Can the seller benefit from an insider in common-value auctions?
\newblock {\em Journal of Economic Theory}, 91(1):106--120, 2000.

\bibitem[\protect\citeauthoryear{Cheng \bgroup \em et al.\egroup
  }{2015}]{Cheng2015}
Yu~Cheng, Ho~Yee Cheung, Shaddin Dughmi, Ehsan Emamjomeh-Zadeh, Li~Han, and
  Shang-Hua Teng.
\newblock Mixture selection, mechanism design, and signaling.
\newblock In {\em Foundations of Computer Science (FOCS), 2015 IEEE 56th Annual
  Symposium on}, pages 1426--1445. IEEE, 2015.

\bibitem[\protect\citeauthoryear{Daskalakis \bgroup \em et al.\egroup
  }{2016}]{Daskalakis2016}
Constantinos Daskalakis, Christos Papadimitriou, and Christos Tzamos.
\newblock Does information revelation improve revenue?
\newblock In {\em Proceedings of the 2016 ACM Conference on Economics and
  Computation}, EC '16, pages 233--250, New York, NY, USA, 2016. ACM.

\bibitem[\protect\citeauthoryear{Dughmi and Xu}{2016}]{Dughmi2016}
Shaddin Dughmi and Haifeng Xu.
\newblock Algorithmic bayesian persuasion.
\newblock In {\em Proceedings of the Forty-eighth Annual ACM Symposium on
  Theory of Computing}, STOC '16, pages 412--425, New York, NY, USA, 2016. ACM.

\bibitem[\protect\citeauthoryear{Dughmi and Xu}{2017}]{Dughmi2017}
Shaddin Dughmi and Haifeng Xu.
\newblock Algorithmic persuasion with no externalities.
\newblock In {\em Proceedings of the 2017 ACM Conference on Economics and
  Computation}, EC '17, pages 351--368, New York, NY, USA, 2017. ACM.

\bibitem[\protect\citeauthoryear{Dughmi \bgroup \em et al.\egroup
  }{2014}]{Dughmi2014}
Shaddin Dughmi, Nicole Immorlica, and Aaron Roth.
\newblock Constrained signaling in auction design.
\newblock In {\em Proceedings of the Twenty-Fifth Annual ACM-SIAM Symposium on
  Discrete Algorithms}, pages 1341--1357. Society for Industrial and Applied
  Mathematics, 2014.

\bibitem[\protect\citeauthoryear{Emek \bgroup \em et al.\egroup
  }{2012}]{Emek12}
Yuval Emek, Michal Feldman, Iftah Gamzu, Renato Paes~Leme, and Moshe
  Tennenholtz.
\newblock Signaling schemes for revenue maximization.
\newblock In {\em Proceedings of the 13th ACM Conference on Electronic
  Commerce}, EC '12, pages 514--531, New York, NY, USA, 2012. ACM.

\bibitem[\protect\citeauthoryear{Fang and Parreiras}{2002}]{Fang2002}
Hanming Fang and S{\'e}rgio~O Parreiras.
\newblock Equilibrium of affiliated value second price auctions with
  financially constrained bidders: The two-bidder case.
\newblock {\em Games and Economic Behavior}, 39(2):215--236, 2002.

\bibitem[\protect\citeauthoryear{Feinberg \bgroup \em et al.\egroup
  }{2005}]{Feinberg2005}
Yossi Feinberg, Moshe Tennenholtz, et~al.
\newblock Anonymous bidding and revenue maximization.
\newblock {\em The BE Journal of Theoretical Economics}, 5(1):1--12, 2005.

\bibitem[\protect\citeauthoryear{Fu \bgroup \em et al.\egroup }{2012}]{fu2012}
Hu~Fu, Patrick Jordan, Mohammad Mahdian, Uri Nadav, Inbal Talgam-Cohen, and
  Sergei Vassilvitskii.
\newblock Ad auctions with data.
\newblock In {\em Algorithmic Game Theory}, pages 168--179. Springer, 2012.

\bibitem[\protect\citeauthoryear{Hart and Nisan}{2013}]{Hart2013}
Sergiu Hart and Noam Nisan.
\newblock The menu-size complexity of auctions.
\newblock In {\em Proceedings of the Fourteenth ACM Conference on Electronic
  Commerce}, EC '13, pages 565--566, New York, NY, USA, 2013. ACM.

\bibitem[\protect\citeauthoryear{Hausch}{1987}]{Hausch1987}
Donald~B Hausch.
\newblock An asymmetric common-value auction model.
\newblock {\em The RAND Journal of Economics}, pages 611--621, 1987.

\bibitem[\protect\citeauthoryear{Hummel and McAfee}{2015}]{Hummel15}
Patrick Hummel and R~Preston McAfee.
\newblock When does improved targeting increase revenue?
\newblock In {\em Proceedings of the 24th International Conference on World
  Wide Web}, pages 462--472. ACM, 2015.

\bibitem[\protect\citeauthoryear{Kamenica and Gentzkow}{2011}]{Kamenica2011}
Emir Kamenica and Matthew Gentzkow.
\newblock Bayesian persuasion.
\newblock {\em The American Economic Review}, 101(6):2590--2615, 2011.

\bibitem[\protect\citeauthoryear{Korula \bgroup \em et al.\egroup
  }{2016}]{Korula2016}
Nitish Korula, Vahab Mirrokni, and Hamid Nazerzadeh.
\newblock Optimizing display advertising markets: Challenges and directions.
\newblock {\em IEEE Internet Computing}, 20(1):28--35, 2016.

\bibitem[\protect\citeauthoryear{Milgrom and Weber}{1982a}]{Milgrom1982b}
Paul Milgrom and Robert~J Weber.
\newblock The value of information in a sealed-bid auction.
\newblock {\em Journal of Mathematical Economics}, 10(1):105--114, 1982.

\bibitem[\protect\citeauthoryear{Milgrom and Weber}{1982b}]{Milgrom1982}
Paul~R Milgrom and Robert~J Weber.
\newblock A theory of auctions and competitive bidding.
\newblock {\em Econometrica: Journal of the Econometric Society}, pages
  1089--1122, 1982.

\bibitem[\protect\citeauthoryear{Milgrom}{1979}]{Milgrom1979}
P.R. Milgrom.
\newblock {\em The Structure of Information in Competitive Bidding}.
\newblock Graduate School of Business, Stanford University., 1979.

\bibitem[\protect\citeauthoryear{Perry and Reny}{1999}]{Perry1999}
Motty Perry and Philip~J Reny.
\newblock On the failure of the linkage principle in multi-unit auctions.
\newblock {\em Econometrica}, 67(4):895--900, 1999.

\bibitem[\protect\citeauthoryear{Rothkopf}{1969}]{Rothkopf1969}
Michael~H Rothkopf.
\newblock A model of rational competitive bidding.
\newblock {\em Management Science}, 15(7):362--373, 1969.

\bibitem[\protect\citeauthoryear{Syrgkanis \bgroup \em et al.\egroup
  }{2015}]{Syrgkanis2013}
Vasilis Syrgkanis, David Kempe, and Eva Tardos.
\newblock Information asymmetries in common-value auctions with discrete
  signals.
\newblock {\em ACM EC}, 2015.

\bibitem[\protect\citeauthoryear{Taneva}{2015}]{Taneva2015}
Ina~A Taneva.
\newblock Information design.
\newblock 2015.

\bibitem[\protect\citeauthoryear{Vasserman \bgroup \em et al.\egroup
  }{2015}]{Vasserman2015}
Shoshana Vasserman, Michal Feldman, and Avinatan Hassidim.
\newblock Implementing the wisdom of waze.
\newblock In {\em Proceedings of the 24th International Conference on
  Artificial Intelligence}, pages 660--666. AAAI Press, 2015.

\bibitem[\protect\citeauthoryear{Wang and Tang}{2014}]{Wang2014}
Zihe Wang and Pingzhong Tang.
\newblock Optimal mechanisms with simple menus.
\newblock In {\em Proceedings of the Fifteenth ACM Conference on Economics and
  Computation}, EC '14, pages 227--240, New York, NY, USA, 2014. ACM.

\end{thebibliography}

\appendix
\newpage

\DeclareRobustCommand*{\refa}{\ref{thm:public}}

\section{Proof of Theorem \refa}
\label{appendix:KnownPublic}
We start by exhibiting the linear program used by the algorithm, which uses the empirical distribution $\tilde{\lambda}$ and relaxes the first two constraints of LP \eqref{lp:public} by $\frac{\epsilon}{2n^2}$. 
\begin{figure}[H]
	\centering
	\begin{lp}\label{lp:public:empirical}
		\maxi{\sum_{\sigma^{ij} \in \Sigma}  \sum_{ \theta \in \tilde{\Theta}} \tilde{\lambda}_{\theta} \varphi(\theta,\sigma^{ij}) v_j(\theta) }
		\st 
		\qcon{\sum_{\theta \in \tilde{\Theta}} \tilde{\lambda}_{\theta} \varphi(\theta,\sigma^{ij}) v_i(\theta) \geq \sum_{\theta \in \tilde{\Theta}} \tilde{\lambda}_{\theta} \varphi(\theta,\sigma^{ij}) v_j(\theta) - \frac{\eps}{2n^2} }{j \not = i}
		\qcon{\sum_{\theta \in \tilde{\Theta}} \tilde{\lambda}_{\theta} \varphi(\theta,\sigma^{ij}) v_j(\theta) \geq \sum_{\theta \in \tilde{\Theta}} \tilde{\lambda}_{\theta} \varphi(\theta,\sigma^{ij}) v_k(\theta) - \frac{\eps}{2n^2}  }{k \not = i,j ; \, j \not = i}
		\qcon{\sum_{\sigma^{ij}} \varphi (\theta,\sigma^{ij}) = 1}{\theta \in \tilde{\Theta}}
		\qcon{\varphi (\theta,\sigma^{ij}) \geq 0}{\theta \in \tilde{\Theta}, \sigma^{ij} \in \Sigma}
	\end{lp} 	
	\caption{Relaxed Linear Program using Empirical Distributions}
\end{figure}

The algorithm simply solves Linear Program \eqref{lp:public:empirical} and signals for $\theta$ as suggested by the solution of LP \eqref{lp:public:empirical}. Details are in Algorithm~\ref{alg:public}, which we instantiate with $\eps >0$ and $K =\frac{8n^4}{\epsilon^2} \log \frac{4n^3}{\epsilon}$.  It is important to notice that the signal $\sigma^{ij}$ generated by Algorithm \ref{alg:public} does \emph{not} necessarily correspond to the outcome where bidder $i$ and $j$ have the highest and second-highest values. This is because we have relaxed the constraints in LP \eqref{lp:public:empirical} so that $i,j$ are not necessarily the top two bidders. Consequently, the objective of LP \eqref{lp:public:empirical} does \emph{not} correspond to the revenue of the auction. Nevertheless, Lemma \ref{lem:public:sum} shows that the revenue is close to the expected optimal objective value of LP \eqref{lp:public:empirical}, denoted as $OPT(LP\ref{lp:public:empirical})$. Theorem \ref{thm:public} follows from the following two lemmas.

\begin{algorithm}
	\begin{algorithmic}[1]
		\PARAMETER  $\eps \geq 0$; Integer $K \geq 0$.
		\INPUT State of nature $\theta$ represented by $v(\theta)$
		\INPUT Prior distribution $\lambda$ given as a sampling oracle 
		
		\OUTPUT Signal $\sigma \in \Sigma$, where $\Sigma= \set{\sigma^{ij}}_{i\not = j \in [n]}$.
		\STATE Draw integer $\ell$ uniformly at random from $\set{1,\ldots,K}$, and denote $\theta_{\ell} = \theta$.
		\STATE Sample $\theta_1, \ldots, \theta_{\ell-1}, \theta_{\ell+1}\ldots,\theta_K$ independently from the state of nature, and let $\tilde{\lambda}$ denote the empirical distribution over the multiset $\tilde{\Theta} = \set{\theta_1, \ldots, \theta_K}$.
		\label{step:sample}
		\STATE Solve linear program \eqref{lp:public:empirical}. Let $\tilde{\varphi}: \tilde{\Theta}  \to \Delta(\Sigma)$ be the optimal solution. \label{step:lpempirical}
		\STATE Output signal $\sigma^{ij}$ with probability $ \tilde{\varphi}(\theta_\ell, \sigma^{ij})$.
	\end{algorithmic}
	\caption{Public Signaling Scheme for Known-Valuation Setting}
	\label{alg:public}
\end{algorithm}

\begin{lemma}\label{lem:public:sum}
	Assume $\theta \sim \lambda$, and assume bidders bid their conditional expected value upon receiving a signal.  The expected  revenue  is at least $\Ex_{\tilde{\Theta}} OPT(LP\ref{lp:public:empirical}) - \frac{\eps}{2}$. Both expectations are taken over the random input $\theta$ as well as internal randomness and Monte-Carlo sampling performed by the algorithm.
\end{lemma}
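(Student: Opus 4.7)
The plan is to recast both the expected revenue and $\Ex_{\tilde{\Theta}}[OPT(LP\ref{lp:public:empirical})]$ as sums of linear functionals of the LP solution, and then exploit the fact that the approximate LP feasibility bounds are preserved under expectation. Throughout, I fix the LP solver so that $\tilde{\varphi}_{\tilde{\Theta}}$ depends only on $\tilde{\Theta}$ viewed as a multiset (equivalently, is symmetric in the samples), via any canonical tie-breaking rule.

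First I would identify the effective signaling scheme seen by the bidders. Since bidders observe only the public signal $\sigma$ and not the internal sample $\tilde{\Theta}$, their effective scheme is $\varphi(\theta, \sigma^{ij}) = \Ex_{\theta_2,\ldots,\theta_K \sim \lambda}[\tilde{\varphi}(\theta,\sigma^{ij})]$, where the expectation is over the other $K-1$ samples. For each $\sigma^{ij}$ and each bidder $k$, define $U_k^{ij} := \sum_\theta \lambda_\theta \varphi(\theta,\sigma^{ij}) v_k(\theta)$ and $u_k^{ij} := \sum_\theta \tilde{\lambda}_\theta \tilde{\varphi}(\theta,\sigma^{ij}) v_k(\theta)$. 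Using the symmetry of $\tilde{\varphi}$ and the joint iid-ness of $\theta_1,\ldots,\theta_K$ (which holds because $\theta_\ell = \theta \sim \lambda$ while the remaining $K-1$ samples are iid from $\lambda$), a short calculation gives $U_k^{ij} = \Ex_{\tilde{\Theta}}[u_k^{ij}]$. Since bidders bid their posterior expectation $U_k^{ij}/\alpha_{\sigma^{ij}}$ truthfully, this yields $\Ex[\rev] = \sum_{\sigma^{ij}} \maxs_k[U_k^{ij}]$, while by definition $\Ex_{\tilde{\Theta}}[OPT(LP\ref{lp:public:empirical})] = \sum_{\sigma^{ij}} U_j^{ij}$.

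Next I would transfer the LP's feasibility constraints to the $U_k^{ij}$'s. For every realization of $\tilde{\Theta}$, the LP gives $u_i^{ij} \geq u_j^{ij} - \frac{\eps}{2n^2}$ and $u_j^{ij} \geq u_k^{ij} - \frac{\eps}{2n^2}$ for all $k \neq i,j$. Taking expectations preserves both inequalities, yielding the same bounds for the $U_k^{ij}$'s. A short two-case argument then shows $\maxs_k[U_k^{ij}] \geq U_j^{ij} - \frac{\eps}{2n^2}$: if $U_i^{ij}$ is the maximum, then the second-largest is at least $U_j^{ij}$ directly since $j \neq i$; otherwise the maximizer $k^* \neq i$ satisfies $U_{k^*}^{ij} \leq U_i^{ij} + \frac{\eps}{2n^2}$, so $\maxs \geq U_i^{ij} \geq U_j^{ij} - \frac{\eps}{2n^2}$. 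Summing over the at most $n(n-1)$ signals $\sigma^{ij}$ gives $\Ex[\rev] \geq \Ex_{\tilde{\Theta}}[OPT] - n(n-1) \cdot \frac{\eps}{2n^2} \geq \Ex_{\tilde{\Theta}}[OPT] - \eps/2$, as desired.

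The main subtlety I anticipate is justifying the identity $\Ex[\rev] = \sum_{\sigma^{ij}} \maxs_k[U_k^{ij}]$ rather than the naive $\Ex_{\tilde{\Theta}}[\sum_{\sigma^{ij}} \maxs_k[u_k^{ij}]]$: the two differ because $\maxs$ is not linear, and a Jensen-type gap could in principle go the wrong way. The clean resolution is that bidders condition only on the public signal, so their posterior is determined by the effective scheme $\varphi$, which automatically pushes the sampling expectation inside the $\maxs$. Once this is pinned down, the remainder is LP slack bookkeeping, with the relaxation parameter $\frac{\eps}{2n^2}$ chosen precisely so that the total slack over $\Theta(n^2)$ signals is $O(\eps)$.
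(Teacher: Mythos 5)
Your proof is correct and follows essentially the same route as the paper's: the paper likewise defines $v_k(\sigma^{ij}\mid\tilde{\Theta})$ (your $u_k^{ij}$) and $v_k(\sigma^{ij})=\Ex_{\tilde{\Theta}}\big[v_k(\sigma^{ij}\mid\tilde{\Theta})\big]$ (your $U_k^{ij}$), transfers the relaxed constraints by linearity of expectation, lower-bounds the per-signal revenue by $v_j(\sigma^{ij})-\frac{\eps}{2n^2}$ via $\rev(\sigma^{ij})\ge\min\{v_i(\sigma^{ij}),v_j(\sigma^{ij})\}$, and sums over the $n(n-1)$ signals. Where you add something genuine is precisely the subtlety you flag at the end: the paper writes the identity $v_k(\sigma^{ij})=\Ex_{\tilde{\Theta}}\big[v_k(\sigma^{ij}\mid\tilde{\Theta})\big]$ as a definition without explaining why this really is the bidder's scaled posterior value, whereas you observe that the bidders respond to the \emph{effective} scheme $\varphi(\theta,\sigma)=\Ex[\tilde{\varphi}(\theta,\sigma)]$, so the sampling expectation lands inside $\maxs$ by construction rather than by an inequality -- this is exactly what prevents the Jensen-direction problem. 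As a side note you do not actually need the LP solver to be symmetric in the samples: the uniformly random choice of position $\ell$ together with iid sampling of the remaining $K-1$ states already makes the joint sample exchangeable, so the same expectation identity holds for any deterministic tie-breaking rule. One small wrinkle: in your Case~2, the intermediate claim $U_{k^*}^{ij}\le U_i^{ij}+\frac{\eps}{2n^2}$ is slightly off for $k^*\notin\{i,j\}$ (chaining the two relaxed constraints only gives slack $\frac{\eps}{n^2}$), but this claim is never used -- the chain $\maxs_k U_k^{ij}\ge U_i^{ij}\ge U_j^{ij}-\frac{\eps}{2n^2}$ already follows from $k^*\neq i$ together with the first relaxed constraint alone, which is exactly the paper's $\min$ observation -- so the conclusion stands.
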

\begin{proof}
	Let $\rev$ be the total expected revenue achieved by Algorithm \ref{alg:public}, $\rev(\sigma^{ij})$ denote the expected revenue generated by signal $\sigma^{ij}$, multiplied by the probability of receiving signal $\sigma^{ij}$. Therefore, $\rev = \sum_{\sigma^{ij} \in \Sigma} \rev(\sigma^{ij}) $. Let $v_k(\sigma^{ij})$ denote the bidder $k$'s value on receiving signal $\sigma^{ij}$, also multiplied by the probability of receiving signal $\sigma^{ij}$. Further, we denote $v_k(\sigma^{ij}| \tilde{\Theta}) = \sum_{\theta \in \tilde{\Theta}} \tilde{\lambda}_{\theta} \varphi(\theta,\sigma^{ij}) v_k(\theta)$. Therefore, $ v_k(\sigma^{ij}) = \Ex_{ \tilde{\Theta} } \big[  v_k(\sigma^{ij}| \tilde{\Theta}) \big]$. Note that, by the principle of deferred decision, $\tilde{\Theta} = \{ \theta_1,...,\theta_K \}$ at Step \ref{step:sample} can be viewed as $K$ i.i.d. samples from $\lambda$.   
	
	The first two constraints of LP \eqref{lp:public:empirical} mean $v_i(\sigma^{ij}| \tilde{\Theta}) \geq v_j(\sigma^{ij}| \tilde{\Theta}) - \frac{\eps}{2n^2}$ and $v_j(\sigma^{ij}| \tilde{\Theta}) \geq v_k(\sigma^{ij}| \tilde{\Theta}) - \frac{\eps}{2n^2}$ for all $k \not = i,j$.  Taking expectation over $\Theta$ on both sides of the inequalities, we have
	\begin{equation}\label{eq:signalRev}
		v_i(\sigma^{ij}) = \Ex_{ \tilde{\Theta} } \big[  v_i(\sigma^{ij}| \tilde{\Theta}) \big] \geq \Ex_{ \tilde{\Theta} } \big[  v_j(\sigma^{ij}| \tilde{\Theta}) - \frac{\eps}{2n^2} \big]  = v_j(\sigma^{ij}) - \frac{\eps}{2n^2}. 
	\end{equation}
	Similarly,  $ v_j(\sigma^{ij}) \geq  v_k(\sigma^{ij}) - \frac{\eps}{2n^2}$ for all $k \not = i,j$. Since the revenue of a second-price auction is at least the minimum of any two bidders' values, we have 
	\begin{equation} \label{eq:revBound}
		\rev(\sigma^{ij}) \geq  \min \bigg(  v_i(\sigma^{ij}), v_j(\sigma^{ij})  \bigg) \geq v_j(\sigma^{ij}) - \frac{\eps}{2n^2},
	\end{equation} 
	where the last inequality follows from Inequality \eqref{eq:signalRev}.  Therefore,
	\begin{align*}
		\rev & = \sum_{\sigma^{ij} \in \Sigma} \rev(\sigma^{ij}) \\
		&\geq  \sum_{\sigma^{ij} \in \Sigma} \bigg[ v_j(\sigma^{ij}) - \frac{\eps}{2n^2}  \bigg] \\
		&= \sum_{\sigma^{ij} \in \Sigma} \Ex_{ \tilde{\Theta} } \big[  v_j(\sigma^{ij}| \tilde{\Theta}) \big]  - \frac{\eps}{2} \\
		& =   \Ex_{ \tilde{\Theta} }  \bigg[ \sum_{\sigma^{ij} \in \Sigma}  \sum_{\theta \in \tilde{\Theta}} \tilde{\lambda}_{\theta} \varphi(\theta,\sigma^{ij}) v_j(\theta)  \bigg]  - \frac{\eps}{2} \\ 
		& =  \Ex_{ \tilde{\Theta} }  \, OPT(LP\ref{lp:public:empirical}) -  \frac{\eps}{2}  
	\end{align*}
	
\end{proof}

\begin{lemma}\label{lem:public:apx}
	Let  $OPT$ denote the expected revenue induced by the optimal  signaling scheme for prior distribution $\lambda$. When Algorithm \ref{alg:public} is instantiated with $K = \frac{8n^4}{\epsilon^2} \log \frac{4n^3}{\epsilon}$ and its input  $\theta$ is drawn from $\lambda$,  the expected revenue generated by Algorithm \ref{alg:public} is at least $OPT-\eps$.  Expectation is over the random input $\theta$ as well as the Monte-Carlo sampling performed by the algorithm.
\end{lemma}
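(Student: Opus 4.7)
The plan is to reduce the claim to a concentration comparison between the empirical and true LP values. By Lemma~\ref{lem:public:sum}, the expected revenue of Algorithm~\ref{alg:public} is at least $\Ex_{\tilde{\Theta}}OPT(LP \ref{lp:public:empirical}) - \epsilon/2$, so it suffices to show $\Ex_{\tilde{\Theta}}OPT(LP \ref{lp:public:empirical}) \geq OPT - \epsilon/2$. I will establish this by exhibiting a single candidate feasible solution to LP~\eqref{lp:public:empirical}---obtained by restricting the true optimum of LP~\eqref{lp:public} to $\tilde{\Theta}$---whose objective is close to $OPT$ with high probability over $\tilde{\Theta}$.

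Let $\varphi^*$ be an optimal solution of LP~\eqref{lp:public}, whose value equals $OPT$ (by Lemma~\ref{lem:revelation}, this LP computes the optimal public signaling revenue). Since the input $\theta$ is distributed as $\lambda$ and the remaining $K-1$ samples drawn in Step~\ref{step:sample} are i.i.d.\ draws from $\lambda$ independent of $\theta$, the principle of deferred decisions implies that $\tilde{\Theta}=\{\theta_1,\dots,\theta_K\}$ is a multiset of $K$ i.i.d.\ samples from $\lambda$. Define $\hat{\varphi}$ as the restriction of $\varphi^*$ to $\tilde{\Theta}$. For each triple $(i,j,k)\in[n]^3$ with $j\neq i$, the quantity $\sum_{\theta\in\tilde{\Theta}}\tilde{\lambda}_\theta \hat{\varphi}(\theta,\sigma^{ij}) v_k(\theta) = \tfrac{1}{K}\sum_{\ell=1}^K \varphi^*(\theta_\ell,\sigma^{ij}) v_k(\theta_\ell)$ is an empirical average of $K$ i.i.d.\ random variables bounded in $[0,1]$, whose expectation equals the corresponding ``true'' sum $\sum_{\theta\in\Theta}\lambda_\theta \varphi^*(\theta,\sigma^{ij}) v_k(\theta)$ in LP~\eqref{lp:public}.

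Applying Hoeffding's inequality with deviation parameter $t=\epsilon/(4n^2)$ and $K=\tfrac{8n^4}{\epsilon^2}\log(4n^3/\epsilon)$, each such empirical average deviates from its expectation by more than $t$ with probability at most $2\exp(-2Kt^2)=\epsilon/(2n^3)$. A union bound over the at most $n^3$ relevant triples yields a ``good'' event $E$ of probability at least $1-\epsilon/2$ on which all empirical averages simultaneously lie within $t$ of their true values. Under $E$, $\hat{\varphi}$ is feasible for LP~\eqref{lp:public:empirical}, because each corresponding constraint of LP~\eqref{lp:public} holds with no slack and the two-sided empirical deviation of at most $2t=\epsilon/(2n^2)$ is exactly absorbed by LP~\eqref{lp:public:empirical}'s relaxation slack $\epsilon/(2n^2)$. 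Moreover, the objective value of $\hat{\varphi}$ under LP~\eqref{lp:public:empirical} differs from $OPT$ by at most $n^2 t = \epsilon/4$, since the objective is a sum of at most $n^2$ empirical averages each within $t$ of its true value.

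Taking expectations and using $OPT\leq 1$ (from $v_i(\theta)\in[0,1]$), we obtain $\Ex_{\tilde{\Theta}}OPT(LP \ref{lp:public:empirical})\geq \Pr[E]\cdot(OPT-\epsilon/4)\geq OPT-\epsilon/2$ after absorbing constants into the initial choice of $\epsilon$ (equivalently, by running the algorithm with a suitable constant multiple of $\epsilon$ inside $K$). Combining with Lemma~\ref{lem:public:sum} yields the claimed expected revenue at least $OPT-\epsilon$. The main obstacle is purely bookkeeping: one must carefully match the Hoeffding deviation scale $t$, the LP relaxation slack $\epsilon/(2n^2)$, the $n^3$-way union bound, and the additive objective approximation, all consistently with the prescribed value of $K$; no new ideas beyond standard concentration are required.
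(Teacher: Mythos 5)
Your proof is correct and follows essentially the same approach as the paper's: restrict the true optimum $\varphi^*$ of LP~\eqref{lp:public} to the sampled multiset $\tilde{\Theta}$, apply Hoeffding plus a union bound over the $O(n^3)$ quantities appearing in the constraints and objective to show this restriction is feasible for LP~\eqref{lp:public:empirical} and nearly optimal with high probability, then combine with Lemma~\ref{lem:public:sum}. Your two-sided Hoeffding accounting gives a per-term failure probability of $\eps/(2n^3)$ rather than the paper's stated $\eps/(4n^3)$ (the paper is slightly loose on this factor of $2$), leading to a somewhat weaker final constant which, as you note, is absorbed by a constant-factor rescaling of $\eps$ in $K$.
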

\begin{proof}
	We prove that when $K = \frac{8n^4}{\epsilon^2} \log \frac{4n^3}{\epsilon}$, with probability at least $1 - \frac{\eps}{4}$, the expected optimal objective of LP \eqref{lp:public:empirical} is at least $OPT-\frac{\epsilon}{2}$. This, together with Lemma \ref{lem:public:sum}, implies that the revenue generated by Algorithm \ref{alg:public} is at least $OPT-\epsilon$.
	
	Let $\varphi^*$ be the solution of  LP \eqref{lp:public},  i.e., precisely the optimal signaling scheme of the problem.  At a high level,  we will show that restricting $\varphi^*$ to the set $\tilde{\Theta}$ results in a scheme that is feasible for LP \eqref{lp:public:empirical} and achieves revenue at least $OPT(LP\ref{lp:public:empirical})-\epsilon/2$ with high probability.  Formally, let $\tilde{\varphi}(\theta,\sigma) = \varphi^*(\theta,\sigma)$ for any $\theta \in \tilde{\Theta}, \, \sigma \in \Sigma$, and $\tilde{\lambda}$ be the empirical distribution over $\tilde{\Theta}$.  Observe that fixing any $i,j$, both the left-hand-side and right-hand-side terms in the first and second constraints of LP \eqref{lp:public} can be viewed as the expectation of random variable $\varphi^*(\theta,\sigma^{ij})$ over random  $\theta \sim \lambda$, while the corresponding constraints in LP \eqref{lp:public:empirical} with constructed solution $\tilde{\varphi}$ are  precisely their empirical means over $K$ samples. By the principle of deferred decision, these samples can be viewed as i.i.d. samples from $\theta \sim \lambda$. Therefore, with probability at least $1-e^{-2(\frac{\epsilon}{4n^2})^2 K} > 1 - \frac{\epsilon}{4n^3}$, the empirical mean of any term is within $\frac{\epsilon}{4n^2}$ of its expectation. Since there are $n$ different terms in total for any given $\sigma^{ij}$ and $|\Sigma| = n(n-1)$, by union bound, with probability at least $1 - \epsilon/4$, all the empirical means  are within $\frac{\epsilon}{4n^2}$ of their expectations.  In this case, since the original expectations satisfy the first two constraints of LP \eqref{lp:public}, the empirical means staisfy the first (relaxed) two constraints of LP \eqref{lp:public:empirical}, implying that $\tilde{\varphi}$ is feasible to LP \eqref{lp:public:empirical}. Moreover, the objective value of $\tilde{\varphi}$ in LP \eqref{lp:public:empirical} is  within  $\frac{\epsilon}{4n^2} \cdot n(n-1) \leq \epsilon/4$ of the optimal objective value of LP \eqref{lp:public}, i.e., $OPT$. Therefore, we have found a feasible $\tilde{\varphi}$ for LP \eqref{lp:public:empirical} that achieves objective at least $OPT -\frac{\eps}{4}$ with probability at least $1-\frac{\eps}{4}$. Randomness is over the samples $\tilde{\Theta}$. This implies that $\Ex_{\tilde{\Theta}}\big[ OPT(LP\ref{lp:public:empirical})\big] \geq OPT - \frac{\eps}{2}$, proving the lemma.  
	
\end{proof}

\newpage

\DeclareRobustCommand*{\refa}{\ref{thm:NumSignal}}

\section{Proof of Theorem \refa}
\label{appendix:NumSignalsA}

We construct a problem in which any scheme with $2^{o(n^{1/4})}$ signals must be far from optimality.  In fact, the problem we will construct is almost the simplest instance in the Bayesian valuation setting. 
In particular, consider a second-price auction in the Bayesian-valuation setting with $n$ i.i.d. bidders, described as follows. Any bidder $i$ has a valuation function $V$, which depends on her private type $t_i \in \{\mathbf{A},\mathbf{B}\}$ and a binary item type $\theta_i  \in \{0,1 \}$. Every bidder has the same valuation function $V(t_i,\theta_i)$, whose value table is as follows:   
\begin{table}[H]
	\centering
	\begin{tabular}{|c|c|c|}
		\hline
		& $V(\cdot,0)$ &  $V(\cdot,1)$\\ \hline
		$\mathbf{A}$ & $0$ & $1 $  \\  
		$\mathbf{B}$  & $0$ & $0 $  \\  \hline
	\end{tabular}
\end{table} 
$\theta_i$'s are i.i.d. with $\Pr(\theta_i = 1) = \eps$ for some small constant $\eps$. Therefore, the prior distribution of $\theta$ is a product distribution with $\Pr(\theta) = \epsilon^{|\theta|}(1-\epsilon)^{n - |\theta|}$. Moreover, bidders' private type $t_i$'s are also i.i.d. with $\Pr(t_i = \mathbf{A}) = \frac{1}{\sqrt{n}}$ for each $i$. 

For convenience, we will use $D_0$ to denote the point distribution at $0$ and $D_1$ to denote the  binary distribution which takes value  $1$ with probability $ \frac{1}{\sqrt{n}}$. Observe that in the constructed instance above, $\theta_i = 0$ or $1$ specifies bidder $i$'s valuation distribution is $D_0$ or $D_1$.  We start by showing that the revenue of the optimal public signaling scheme, denoted as $OPT_r$, is lower bounded by a  value arbitrarily close to $1$ for large enough $n$.


\begin{lemma}
	$OPT_r \geq 1 - e^{-\epsilon \sqrt{n}}  - \eps \sqrt{n} \cdot e^{-\epsilon \frac{n-1}{\sqrt{n}}}$. 
\end{lemma}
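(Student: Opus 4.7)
The plan is to lower bound $OPT_r$ by exhibiting a single concrete public signaling scheme: the one that reveals $\theta$ in full, i.e., $\varphi(\theta,\sigma) = \mathbf{1}[\sigma = \theta]$. Under this scheme, upon receiving signal $\theta$, each bidder $i$ knows that her value distribution is $D_{\theta_i}$; combined with her private type $t_i$, she knows her realized value, and truthful bidding is a dominant strategy. The revenue is then the expected second-highest value of $n$ independent draws, where bidder $i$'s draw is $D_{\theta_i}$, averaged over $\theta \sim \lambda$ and the private types.

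Conditional on $\theta$ with $|\theta| = k$, exactly $k$ bidders draw independent Bernoulli$(p)$ values with $p = 1/\sqrt{n}$, while the remaining $n-k$ bidders have value $0$. The second-highest value is $1$ if and only if at least two of these $k$ bidders realize value $1$, so the conditional revenue equals
\[
R(k) \;=\; 1 - (1-p)^k - kp(1-p)^{k-1}.
\]
Thus $OPT_r \geq \mathbb{E}_{K}[R(K)]$ where $K = |\theta| \sim \mathrm{Binomial}(n,\epsilon)$. I would then compute the two resulting moment-generating-function-type expectations:
\[
\mathbb{E}[(1-p)^K] = (1-\epsilon p)^n, \qquad \mathbb{E}[K(1-p)^{K-1}] = n\epsilon (1-\epsilon p)^{n-1},
\]
the second via the standard reindexing $k \binom{n}{k} = n\binom{n-1}{k-1}$. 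Combining,
\[
\mathbb{E}[R(K)] \;=\; 1 - (1-\epsilon p)^n - \epsilon n p\,(1-\epsilon p)^{n-1}.
\]

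Finally, substitute $p = 1/\sqrt{n}$ so that $\epsilon n p = \epsilon \sqrt{n}$ and $\epsilon p = \epsilon/\sqrt{n}$, and apply the elementary inequality $(1-x)^m \leq e^{-xm}$ for $x \in [0,1]$ to both $(1-\epsilon p)^n$ and $(1-\epsilon p)^{n-1}$. This yields
\[
OPT_r \;\geq\; 1 - e^{-\epsilon\sqrt{n}} - \epsilon\sqrt{n}\, e^{-\epsilon(n-1)/\sqrt{n}},
\]
exactly as claimed. There is no real obstacle here: the argument is a direct calculation, and the only ingredient beyond routine algebra is observing that revealing $\theta$ makes truthful bidding dominant, so that the revenue reduces cleanly to the probability that at least two of the $|\theta|$ high-distribution bidders draw value $1$.
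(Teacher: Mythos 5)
Your proof is correct and takes essentially the same route as the paper: lower-bound $OPT_r$ by the full-information scheme, write the conditional revenue given $|\theta|=k$ as the probability that at least two of $k$ Bernoulli$(1/\sqrt{n})$ draws succeed, take expectation over $K\sim\mathrm{Binomial}(n,\epsilon)$ using the standard binomial identities, and finish with $(1-x)^m\leq e^{-xm}$. The only difference is presentational (you phrase the sums as moment-generating-function expectations rather than writing them out term by term).
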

\begin{proof}
	We  consider the scheme of fully revealing $\theta$ to every bidder and show that the revenue of this particular public signaling scheme is already lower bounded by the RHS of the inequality in the lemma. Notice that $\Pr( |\theta| = i)  = C_n^i \epsilon^{|\theta|}(1-\epsilon)^{n - |\theta|}$. Moreover, when $|\theta| = i$, there  are $i$ bidders who have value distribution $D_1$. It is not hard to check that the revenue in this case is $1 - (1 - \frac{1}{\sqrt{n}})^i - \frac{i}{\sqrt{n}} (1 - \frac{1}{\sqrt{n}})^{(i-1)}$. Therefore, the revenue of full information scheme satisfies
	\begin{eqnarray*}
		\rev & = & \sum_{i=0}^{n} C_n^i \epsilon^{i}(1-\epsilon)^{n -i} \cdot \bigg[1 - (1 - \frac{1}{\sqrt{n}})^i  - \frac{i}{\sqrt{n}} (1 - \frac{1}{\sqrt{n}})^{(i-1)} \bigg]  \\
		& = & 1 - \sum_{i=0}^{n} C_n^i \epsilon^{i}(1-\epsilon)^{n - i} \cdot (1 - \frac{1}{\sqrt{n}})^i   - \sum_{i=0}^{n} C_n^i \epsilon^{i}(1-\epsilon)^{n -i} \cdot \frac{i}{\sqrt{n}} (1 - \frac{1}{\sqrt{n}})^{(i-1)}\\
		& = & 1 - \big( 1 - \epsilon + \epsilon - \frac{\epsilon}{\sqrt{n}} \big)^n - \eps \sqrt{n} \sum_{i=0}^{n} C_{n-1}^{i-1} (1-\epsilon)^{n -i} \cdot \epsilon^{i-1}(1 - \frac{1}{\sqrt{n}})^{(i-1)} \\ 
		& = & 1 - \big( 1 - \epsilon + \epsilon - \frac{\epsilon}{\sqrt{n}}  \big)^n - \eps \sqrt{n} \big( 1 - \epsilon + \epsilon - \frac{\epsilon}{\sqrt{n}}  \big)^{n-1} \\
		& \geq & 1 - e^{-\epsilon \sqrt{n}} - \eps \sqrt{n} \cdot  e^{-\frac{\epsilon(n-1)}{ \sqrt{n} } } .
	\end{eqnarray*}
	
	Therefore, $OPT_r \geq \rev \geq 1 - e^{-\epsilon \sqrt{n}} - \eps \sqrt{n} \cdot  e^{-\frac{\epsilon(n-1)}{ \sqrt{n} } } $, as desired. 
\end{proof}

We will now upper bound the revenue of any signaling scheme that uses $N$ signals, and show that the revenue is small when $N = 2^{o(n^1/4)}$.    Unfortunately, directly examining revenue is difficult. We instead consider an obvious upper bound of revenue, i.e., the welfare, and instead upper bound the maximum welfare of any signaling scheme using $N$ signals. 

We say a signaling scheme is \emph{deterministic} if it maps each state of nature deterministically  to one signal. As a result, each signal corresponds to a subset of states and all such subsets form a partition of the space of states of nature. A deterministic signaling scheme with $N$ signals is denoted by $\{S_1,...,S_N \}$, which is a partition of $\{0,1\}^n$.  We therefore also call $S_i$ a signal. The following lemma shows that there always exists a deterministic signaling scheme that achieves the maximum welfare. Intuitively, this is because welfare is a convex function (more precisely, the max function) of bidders' valuations, thus achieves its optimal value at a vertex of the polytope of signaling schemes. 
\begin{lemma}\cite{Dughmi2014}\label{lem:WelDeterministic}
	Let $\X_N $ denote all possible public signaling schemes using  at most  $N$ signals. There always exists a deterministic signaling scheme that achieves the maximum welfare among schemes in $\X_{N}$. Such a deterministic signaling scheme partitions the set $\{0,1\}^n$ into at most $N$ subsets.
\end{lemma}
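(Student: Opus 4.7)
The plan is to fix the signal set $\Sigma$ with $|\Sigma| = N$ and view $\X_N$ as the product polytope $\prod_{\theta \in \Theta} \Delta(\Sigma)$, parametrized by the entries $\{\varphi(\theta,\sigma)\}$. The extreme points of this polytope are exactly those schemes in which each $\varphi(\theta,\cdot)$ is a point mass on a single signal, i.e., the deterministic schemes; each such scheme induces a partition of $\Theta = \{0,1\}^n$ into at most $N$ pre-image subsets. It therefore suffices to prove that welfare $W(\varphi)$ is a \emph{convex} function on this polytope, since a convex function on a polytope attains its maximum at some extreme point.

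For the convexity, I would rewrite the welfare in a form where the $\varphi$-dependence is manifestly a sum of maxima of affine functions. Let $\alpha_\sigma = \sum_\theta \lambda_\theta \varphi(\theta,\sigma)$ be the probability of signal $\sigma$ and $\mu_\sigma(\theta) = \lambda_\theta \varphi(\theta,\sigma)/\alpha_\sigma$ the induced posterior. Under signal $\sigma$, each bidder $i$ of type $t_i$ bids $b_i(t_i,\mu_\sigma) = \sum_\theta \mu_\sigma(\theta) V(t_i,\theta_i)$; since the winner's identity in a second-price auction depends only on $(t,\mu_\sigma)$ and not on the realized $\theta$, the conditional expected welfare given $(\sigma,t)$ collapses (by pushing the expectation over $\theta \sim \mu_\sigma$ inside the argmax) to the winning bid $\max_i b_i(t_i,\mu_\sigma)$. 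The key observation is that multiplying through by the nonnegative scalar $\alpha_\sigma$ returns a quantity linear in $\varphi$, namely
\[
\alpha_\sigma \cdot b_i(t_i,\mu_\sigma) \;=\; \sum_\theta \lambda_\theta\, \varphi(\theta,\sigma)\, V(t_i,\theta_i),
\]
so
\[
\alpha_\sigma \cdot \max_i b_i(t_i,\mu_\sigma) \;=\; \max_i \Bigl[\, \alpha_\sigma\, b_i(t_i,\mu_\sigma) \,\Bigr]
\]
is a pointwise maximum of affine functions of $\varphi$, hence convex in $\varphi$.

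Summing over $\sigma$ (finite sum) and averaging over $t$ (finite expectation) preserves convexity, giving
\[
W(\varphi) \;=\; \Ex_t \sum_{\sigma \in \Sigma} \max_i \Bigl[\, \sum_\theta \lambda_\theta\, \varphi(\theta,\sigma)\, V(t_i,\theta_i) \,\Bigr],
\]
which is convex in $\varphi \in \X_N$. The maximum is therefore attained at a vertex $\varphi^*$, which is a deterministic scheme partitioning $\{0,1\}^n$ into at most $N$ subsets $S_j = \{\theta : \varphi^*(\theta,\sigma_j) = 1\}$ as claimed.

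The main subtlety lies in the second step: collapsing the conditional welfare into $\max_i b_i(t_i,\mu_\sigma)$ requires noting that under a public signal the winner's identity is determined by $(t,\mu_\sigma)$ alone, independently of the realized $\theta$, so that $\Ex_{\theta \sim \mu_\sigma}$ commutes with the argmax; and the seemingly nonlinear dependence through $\mu_\sigma = (\lambda_\theta \varphi(\theta,\sigma)/\alpha_\sigma)$ is neutralized by the prefactor $\alpha_\sigma$, leaving a manifestly affine expression in $\varphi$. Once this structural reduction is in place, the remainder is a standard application of convex analysis on the signaling polytope.
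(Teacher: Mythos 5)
Your proposal is correct and follows exactly the argument the paper sketches (the paper gives only a one-sentence intuition, citing \cite{Dughmi2014}): express welfare as a sum over signals and expectation over types of pointwise maxima of affine functions of $\varphi$, conclude convexity, and hence attain the maximum at a vertex of the product-of-simplices polytope, which is a deterministic scheme. You correctly handle the only real subtlety — that multiplying by $\alpha_\sigma$ linearizes the posterior-dependent expression, and that under a public signal the winner's identity is determined by $(t,\sigma)$ alone so the conditional expectation over $\theta\mid\sigma$ collapses to the winning bid.
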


As a result of Lemma \ref{lem:WelDeterministic}, it is without loss of generality to upper bound the welfare achieved by  deterministic signaling schemes, which are collections of subsets of $\{0,1 \}^n$. For any  $S \subseteq \{0,1\}^n$, we use $\Pr (S) = \sum_{\theta \in S} \lambda_{\theta}$ to denote the probability mass of the subset. At a high level, our idea is to show that any subset of $\{0,1\}^n$ either results in a welfare of $\O(\eps)$ or has  exponentially small probability mass (specifically, $\Pr (S) \leq 2^{-\Theta(n^{1/4})}$). This will imply that a deterministic scheme with  $2^{o(n^1/4)}$ signals (i.e., subsets) must have $\O (\eps)$ welfare. 

For any  $S \subseteq \{0,1\}^n$, let $$\bar{\theta}(S)  = \Ex (\theta| \theta \in S) = \frac{\sum_{\theta \in S} \epsilon^{|\theta|}(1-\epsilon)^{n-|\theta|} \cdot \theta}{\sum_{\theta \in S} \epsilon^{|\theta|}(1-\epsilon)^{n-|\theta|}}$$ denote the expectation of the states in $S$. When $S$ is clear from the context, we simply use $\bar{\theta}$ to denote the expected state. Observe that $\bar{\theta}_i$ is the probability that bidder $i$'s valuation function is $D_1$ conditioned on receiving the signal $S$.  The following lemma shows that if $\bar{\theta}(S)$ have many large components, then $\Pr (S)$ should be small.

\begin{lemma}\label{lem:BoundSetProb}
	For any $S \subseteq \{0,1 \}^n$, let $m$ be the number of entries in $\bar{\theta}(S)$ with values at least $2 \epsilon$. Then we have  $\Pr(S) \leq e^{-2\sqrt{m}}$. 
\end{lemma}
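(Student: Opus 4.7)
The plan is to convert the $\ell_\infty$-type condition on $\bar{\theta}(S)$ into a one-dimensional conditional mean bound and then apply a Chernoff argument against the product prior. Since each $\theta_i$ takes values in $\{0,1\}$, the $i$-th entry of $\bar{\theta}(S)$ equals the conditional marginal $\Pr(\theta_i = 1 \mid \theta \in S)$. Let $I \subseteq [n]$ be the set of $m$ indices with $\bar{\theta}_i(S) \geq 2\epsilon$, and set $Y := \sum_{i \in I}\theta_i$. By linearity of conditional expectation,
\[
\Ex[\, Y \mid \theta \in S\,] \;=\; \sum_{i \in I} \bar{\theta}_i(S) \;\geq\; 2\epsilon m,
\]
whereas under the prior $\lambda$ (a product of $\mathrm{Bernoulli}(\epsilon)$'s) $Y$ has law $\mathrm{Binomial}(m,\epsilon)$, with moment generating function $\Ex[e^{tY}] = (1-\epsilon+\epsilon e^{t})^{m}$ for every $t > 0$.

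The heart of the proof is a Chernoff-style step that transfers the conditional lower bound $\Ex[Y \mid S] \geq 2\epsilon m$ into an upper bound on $\Pr(S)$. For any $t > 0$, Jensen's inequality applied to the convex function $y \mapsto e^{ty}$ under the conditional distribution gives $\Ex[e^{tY}\mid S] \geq e^{t\,\Ex[Y\mid S]} \geq e^{2t\epsilon m}$, hence
\[
\Pr(S)\,e^{2t\epsilon m} \;\leq\; \Ex\!\big[e^{tY}\mathbf{1}_S\big] \;\leq\; \Ex\!\big[e^{tY}\big] \;=\; (1-\epsilon+\epsilon e^{t})^{m},
\]
so that $\Pr(S) \leq \big((1-\epsilon+\epsilon e^{t})\,e^{-2t\epsilon}\big)^{m}$. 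Minimizing the base over $t>0$ (for instance at $t = \ln 2$, the exact minimizer up to lower-order terms in $\epsilon$) bounds it by $e^{-c\epsilon}$, where $c > 0$ is an absolute constant (specifically $c = 2\ln 2 - 1$, which is $D(\mathrm{Bern}(2\epsilon)\,\|\,\mathrm{Bern}(\epsilon))$ to leading order in $\epsilon$), and therefore $\Pr(S) \leq e^{-c\epsilon m}$.

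The main obstacle I foresee is matching the literal constant ``$2$'' in the target $e^{-2\sqrt{m}}$: the Chernoff exponent is linear in $m$, while the claim asks only for $\sqrt{m}$, so as an inequality in $m$ the stated bound is the weaker of the two and follows from the Chernoff estimate once $c\epsilon m \geq 2\sqrt{m}$, i.e., $m \geq 4/(c\epsilon)^{2}$. This is exactly the regime in which Theorem~\ref{thm:NumSignal} invokes the lemma---there $m = \Omega(\sqrt{n})$ and $\epsilon$ is a fixed constant---so for all sufficiently large $n$ the Chernoff estimate immediately yields $\Pr(S) \leq e^{-2\sqrt{m}}$. The two non-trivial manipulations (Jensen's inequality and the splitting $\Ex[e^{tY}\mathbf{1}_S] = \Pr(S)\,\Ex[e^{tY}\mid S]$) only require that $\Pr(S)>0$ and that the prior be a product measure on the coordinates in $I$, both of which hold by assumption.
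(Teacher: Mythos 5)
Your proof is correct and takes a genuinely different and cleaner route than the paper's. The paper treats the constraint on $\bar{\theta}(S)$ as defining a feasible family of sets, relaxes the coordinate-wise condition to a constraint on the sum, characterizes the $\Pr(S)$-maximizing set as a tail set $\{\theta : \sum_{i\in I}\theta_i \ge k\}$, and then must determine the threshold $k$ via a separate technical lemma (their Lemma~\ref{lem:CondExp}, itself requiring the auxiliary Lemma~\ref{lem:boundSC}) before invoking a Hoeffding bound with deviation $m^{3/4}$, which gives exactly the $e^{-2\sqrt{m}}$ form. You instead observe that the entry-wise condition on $\bar{\theta}(S)$ is a constraint on the conditional mean $\Ex[Y\mid S]$, and run a Chernoff/MGF argument directly on $\Pr(S)$: the splitting $\Pr(S)\,\Ex[e^{tY}\mid S] = \Ex[e^{tY}\mathbf{1}_S] \le \Ex[e^{tY}]$ together with Jensen's $\Ex[e^{tY}\mid S]\ge e^{t\,\Ex[Y\mid S]}$ gives $\Pr(S)\le\bigl((1-\epsilon+\epsilon e^t)e^{-2t\epsilon}\bigr)^m$, and optimizing over $t$ produces the tight exponent $m\,D(\mathrm{Bern}(2\epsilon)\,\|\,\mathrm{Bern}(\epsilon))$. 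This sidesteps both the extremal-set characterization and the auxiliary lemma entirely, and yields a strictly stronger, linear-in-$m$ exponent. The one cosmetic mismatch you flag --- that $e^{-c\epsilon m}\le e^{-2\sqrt m}$ only once $m\ge 4/(c\epsilon)^2$ --- is benign: the paper's own proof likewise holds only for $m$ exceeding the unspecified threshold $N$ in Lemma~\ref{lem:CondExp}, and the lemma is invoked in the theorem only with $m=\Omega(\sqrt n)$ and $\epsilon$ a fixed constant, so the regime is identical. In short, your argument is a clean improvement on the paper's for this step; the only thing the paper's approach ``buys'' is a bound stated literally in the form $e^{-2\sqrt m}$ without a visible dependence on $\epsilon$, but as noted that form hides the same large-$m$ requirement.
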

\begin{proof}
	By symmetry, w.l.o.g., we assume the first $m$ entries of $\bar{\theta}$ are at least $2 \epsilon$. We view this as a constrain on the set $S$ and will seek to maximize $\Pr(S)$ over all possible sets $S$ subject to the constraint. This can be abstractly formulated as an optimization problem.
	\begin{figure}[H]
		\centering
		\begin{lp}\label{opt:MaxP}
			\maxi{\Pr(S) = \sum_{\theta \in S} \epsilon^{|\theta|}(1-\epsilon)^{n-|\theta|}}
			\st
			\con{ S:\mbox{ values of first $m$ entries of $\bar{\theta}(S) \geq 2\epsilon$.}}
		\end{lp}%
		Program for Maximizing $\Pr(S)$
	\end{figure}   
	
	Unfortunately, Optimization Program \eqref{opt:MaxP} is difficult to reason about. We instead relax the constraint to be that the first $m$ entries sum up to at least $2m \epsilon$ and the consider the relaxed program as follows. 
	\begin{figure}[H]
		\centering
		\begin{lp}\label{opt:relaxMaxP}
			\maxi{\Pr(S) = \sum_{\theta \in S} \epsilon^{|\theta|}(1-\epsilon)^{n-|\theta|}}
			\st
			\con{\mbox{Sum of first $m$ entries of $\bar{\theta} \geq 2m\epsilon$.}}
		\end{lp}
		Relaxed Program for Maximizing $\Pr(S)$.
	\end{figure} 
	
	It turns out that the optimal solution to Optimization Program \eqref{opt:relaxMaxP} can be explicitly constructed. In particular, the program  seeks to maximize $\Pr(S)$ subject to that the sum of  first $m$ entries have large expected values. Some thoughts reveal that the optimal solution are obtained as follows: add all the $\theta$ with first $m$ entries summing up to $m$, then add all the $\theta$ with first $m$ entries summing up to $m-1$, and so on so forth, until that the expected sum of the first $m$ entries is less than $2m\eps$. Let $k$ denote the sum of the first $m$ entries of the last element added to $S$. Obviously, $k < 2m\eps$. We now argue that $k$ is at least $\eps m + m^{3/4}$. This requires a technical lemma which will be proved at the end of this section.
	\begin{lemma}\label{lem:CondExp}
		Let $X_1,X_2,...,X_m$ be $m$ i.i.d. Bernoulli random variables with $\Ex(X_i) = \eps$ for every $i$ and $X = \sum_{i=1}^m X_i$. 	Let $k = \eps m + m^{3/4} $ and $\eps \in (0, \frac{1}{2})$, there exists $N$ such that  $\Ex(X|X\geq k) = \frac{\sum_{i=k}^m C_m^i \eps^i(1-\eps)^{m-i} \cdot i }{ \sum_{i=k}^m C_m^i \eps^i(1-\eps)^{m-i}  } < 2m \eps$ for any $m \geq N$.  Here $C_m^i = \frac{m!}{i!(m-i)!}$ is the binomial coefficient.   
	\end{lemma}
	
	Lemma \ref{lem:CondExp} shows that $k$ is at least $\eps m + m^{3/4}$, since otherwise the expected sum of the first $m$ entries of $\bar{\theta}(S)$ is less than $2m \eps$, violating the constraint of Program \ref{opt:relaxMaxP}. As a result, $OPT(P\ref{opt:relaxMaxP})$ is upper bounded by $P(k) = \sum_{i=k}^m C_m^i \eps ^i(1-\eps)^{m-i} $, which is the probability that sum of $m$ i.i.d. binary random variables is at least $k$. By the Hoeffding bound, we know that  $P(k) \leq e^{-2\sqrt{m}}$ when $k \geq \eps m + m^{3/4}$. Therefore, $OPT(P \ref{opt:MaxP}) \leq OPT(\ref{opt:relaxMaxP}) \leq P(\eps m + m^{3/4}) \leq e^{-2\sqrt{m}}  $, concluding the proof. 
	
\end{proof}

Utilizing Lemma \ref{lem:BoundSetProb}, we can now formally show that for any signal $S \subseteq \{0,1\}^n$,  either the welfare from $S$ is  $\O(\eps)$ or its probability is exponentially small. 

\begin{lemma}\label{lem:BoundSetVal}
	For any $S \subseteq \{0,1\}^n$, at least one of the following two holds:
	\begin{enumerate}
		\item The conditional welfare of signal $S$ is at most $3\epsilon$. 
		\item $\Pr(S) \leq e^{-2n^{1/4}C}$ where $C =\sqrt{\ln \frac{1 - 2 \epsilon}{1-3\eps}} >0$ is a constant depending on $\eps$.  
	\end{enumerate} 
\end{lemma}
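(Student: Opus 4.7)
The plan is to reduce the conditional welfare of $S$ to a clean function of $m = |\{i : \bar\theta_i(S) \geq 2\eps\}|$, and then dichotomize: when $m$ is small the explicit welfare bound will force the welfare below $3\eps$, and when $m$ is large Lemma \ref{lem:BoundSetProb} already makes $\Pr(S)$ exponentially small. The whole argument hinges on the simple structure of $V(t_i,\theta_i)$, which equals $1$ exactly when $t_i = \mathbf{A}$ and $\theta_i = 1$, together with the independence of the private type $t_i$ from the state of nature $\theta$.

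My first step is to pin down the equilibrium bidding given signal $S$. Conditional on $S$, bidder $i$'s expected value given $t_i = \mathbf{A}$ is $\Pr(\theta_i = 1 \mid S) = \bar\theta_i(S)$ (by the independence of $t$ and $\theta$), while it is $0$ given $t_i = \mathbf{B}$. Since second-price is dominant-strategy truthful, bidder $i$ bids $\bar\theta_i(S)\cdot\mathbf{1}[t_i = \mathbf{A}]$. Writing $A^* = \{i : t_i = \mathbf{A}\}$, the winner $W$ is whichever $i \in A^*$ maximizes $\bar\theta_i(S)$ (ties broken arbitrarily); when $A^* = \emptyset$ the winner has value $0$. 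Using the independence of $t$ and $\theta$ once more, the conditional welfare takes the clean form
\[
\mathrm{welfare}(S) \;=\; \Ex_{A^*}\!\Big[\max_{i \in A^*} \bar\theta_i(S)\,\mathbf{1}[A^* \neq \emptyset]\Big].
\]

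Next I would split on whether $A^*$ meets the ``large'' set $L = \{i : \bar\theta_i(S) \geq 2\eps\}$, whose size is $m$. If $A^* \cap L \neq \emptyset$ the max in the expression above is at most $1$; otherwise it is strictly less than $2\eps$. Because the $t_i$'s are i.i.d.\ Bernoulli$(1/\sqrt{n})$ and independent of $S$, we have $\Pr(A^* \cap L = \emptyset) = (1 - 1/\sqrt{n})^m$, so combining the two cases yields the key inequality
\[
\mathrm{welfare}(S) \;\leq\; 1 - (1 - 2\eps)(1 - 1/\sqrt{n})^m.
\]
This is at most $3\eps$ exactly when $(1 - 1/\sqrt{n})^m \geq \tfrac{1 - 3\eps}{1 - 2\eps} = e^{-C^2}$, and by the elementary estimate $-\ln(1 - x) \leq x/(1-x)$ it suffices to have $m \leq C^2(\sqrt{n} - 1)$. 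In the complementary range $m \geq C^2 \sqrt{n}$ we get $\sqrt{m} \geq C n^{1/4}$, so Lemma \ref{lem:BoundSetProb} immediately delivers $\Pr(S) \leq e^{-2\sqrt{m}} \leq e^{-2 C n^{1/4}}$.

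The only delicate point I foresee is closing the small gap between the welfare threshold $C^2(\sqrt{n} - 1)$ and the probability-decay threshold $C^2 \sqrt{n}$, a discrepancy arising from $|\ln(1 - 1/\sqrt{n})|$ being strictly larger than $1/\sqrt{n}$ at finite $n$. I plan to close this gap either by absorbing the lower-order correction into the constant for sufficiently large $n$, or by a slightly sharper estimate of $(1 - 1/\sqrt{n})^m$ near $m \approx C^2\sqrt{n}$ using $(1 - x) \geq e^{-x - x^2}$ for small $x$. Apart from this book-keeping, the proof is a direct combination of the equilibrium reduction above with the already-proven Lemma \ref{lem:BoundSetProb}.
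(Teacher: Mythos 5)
Your argument is essentially the paper's: you derive the identical welfare upper bound $\mathrm{welfare}(S) \leq 1 - (1-2\eps)(1-1/\sqrt n)^m$ (the paper writes it as $2\eps(1-1/\sqrt n)^m + 1 - (1-1/\sqrt n)^m$, which is the same thing), and then you feed the resulting lower bound on $m$ into Lemma~\ref{lem:BoundSetProb}. So the structure of the proof matches.

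The ``delicate point'' you flag is real, and it is in fact a small error in the paper's own proof. The paper gets $m > \sqrt n\ln\frac{1-2\eps}{1-3\eps}$ by invoking the inequality $\ln(1-1/\sqrt n) \geq -1/\sqrt n$, but that inequality goes the wrong way: for $x\in(0,1)$ one has $\ln(1-x) < -x$. Your careful use of $-\ln(1-x)\leq x/(1-x)$ gives the correct (and slightly weaker) threshold $m > C^2(\sqrt n - 1)$, from which Lemma~\ref{lem:BoundSetProb} yields $\Pr(S) \leq e^{-2C\sqrt{\sqrt n - 1}}$ rather than $e^{-2Cn^{1/4}}$. Neither of your two proposed fixes can recover the exact exponent $2Cn^{1/4}$, because the crossover for welfare $\leq 3\eps$ is genuinely strictly below $m = C^2\sqrt n$ at every finite $n$. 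The clean way to finish is to accept a harmless constant change: for $n\geq 4$ we have $\sqrt n - 1 \geq \sqrt n/2$, so $\sqrt{\sqrt n - 1}\geq n^{1/4}/\sqrt 2$ and $\Pr(S) \leq e^{-\sqrt 2\,C\,n^{1/4}}$. Since the lemma is used only in the final estimate $\wel(\S)\leq 3\eps + N\cdot e^{-\Theta(n^{1/4})}$ to conclude that $N = 2^{o(n^{1/4})}$ signals give $O(\eps)$ welfare, the precise constant in the exponent is immaterial to Theorem~\ref{thm:NumSignal}; the lemma's stated constant $2$ should simply be weakened slightly (or the constant $C$ redefined) to make the statement literally true.

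One further cosmetic note: your derivation of the welfare formula via the equilibrium bids is correct but more detailed than necessary; the paper simply asserts that after receiving $S$, bidder $i$'s (expected) value is $\bar\theta_i$ with probability $1/\sqrt n$ and $0$ otherwise, which is your $\bar\theta_i(S)\cdot\mathbf 1[t_i=\mathbf A]$ in compact form. The two are equivalent, and your version makes the use of the independence $t\perp(\theta,S)$ explicit, which is a nice clarification.
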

\begin{proof}
	Given $\bar{\theta}$,  the (expected) value distribution for bidder $i$ is $\bar{\theta}_i$ with probability $1/\sqrt{n}$ and $0$ otherwise.  Let $m$ denote the number of entries of $\bar{\theta}$ that are at least $2\epsilon$. By independence, the probability that at least one bidder's value is at least $2\epsilon$ is  $1 - (1 - \frac{1}{\sqrt{n}})^m$. 
	Therefore, the conditional welfare $\wel(S)$ is upper bounded $2 \epsilon \cdot (1 - \frac{1}{\sqrt{n}})^m + 1 - (1 - \frac{1}{\sqrt{n}})^m$. We show that if $\wel(S) > 3\eps$, then $m > \sqrt{n} \cdot \ln \frac{1 - 2 \epsilon}{1-3\eps} $, as follows
	\begin{eqnarray*}
		& & \wel(S) > 3 \epsilon \\
		&\Rightarrow & 2 \epsilon \cdot (1 - \frac{1}{\sqrt{n}})^m + 1 - (1 - \frac{1}{\sqrt{n}})^m > 3 \epsilon \\
		& \Rightarrow & 1 - 3 \epsilon > (1-2\eps) \cdot (1 - \frac{1}{\sqrt{n}})^m \\
		& \Rightarrow & \ln \frac{1 - 3 \epsilon}{1-2\eps} > m \ln  (1 - \frac{1}{\sqrt{n}}) \geq m \cdot ( - \frac{1}{\sqrt{n}})\\ 
		& \Rightarrow & m > \sqrt{n} \cdot \ln \frac{1 - 2 \epsilon}{1-3\eps} 
	\end{eqnarray*}
	Therefore, if $\wel(S) > 3 \epsilon$, we have by Lemma \ref{lem:BoundSetProb} that 
	$$\Pr(S) \leq e^{-2\sqrt{m}} \leq e^{-2n^{1/4}C} $$
	for $C=\sqrt{\ln \frac{1 - 2 \epsilon}{1-3\eps}}$. 
\end{proof}
Finally, let $\S = \{S_1,...,S_N \}$ be any deterministic signaling scheme with $N$ signals. Utilizing Lemma \ref{lem:BoundSetVal}, we have 
\begin{eqnarray*}
	\wel(\S) &=& \sum_{S: \, \wel(S) \leq 3 \eps} \wel(S) \cdot \Pr(S) + \sum_{S: \, \wel(S) > 3 \eps}  \wel(S) \cdot \Pr(S)  \\ 
	& \leq & \sum_{S: \, \wel(S) \leq 3 \eps} 3\eps  \cdot \Pr(S) + \sum_{S: \, \wel(S) > 3 \eps}  \wel(S) \cdot  e^{-2n^{1/4}C} \\ 
	& \leq & 3 \eps + N\cdot e^{-2n^{1/4}C},
\end{eqnarray*}
where constant $C =\sqrt{\ln \frac{1 - 2 \epsilon}{1-3\eps}}$. For any $N = 2^{o(n^{1/4})}$, $3 \eps + N\cdot e^{-2n^{1/4}C}$ is upper bounded by $4\eps$ for large enough $n$. On the other hand, the revenue is close to $1$. Since $\eps$ can be an arbitrarily small constant, any signaling scheme with $2^{o(n^{1/4})}$ signals cannot guarantee a constant fraction of the optimal revenue for any constant less than $1$. This concludes our proof of Theorem \ref{thm:NumSignal}. 

\DeclareRobustCommand*{\refa}{\ref{lem:CondExp}}
\subsection*{Proof of the Technical Lemma \refa}
(Note: we use $p$ instead of $\eps$ as in the original lemma statement.)

\noindent {\it Lemma Statement:
	Let $X_1,X_2,...,X_m$ be $m$ i.i.d. Bernoulli random variables with $\Ex(X_i) = p$ for every $i$ and $X = \sum_{i=1}^m X_i$. 	Let $k = p m + m^{3/4} $ and $p \in (0, \frac{1}{2})$, there exists $N$ such that  $\Ex(X|X\geq k) = \frac{\sum_{i=k}^m C_m^i p^i(1-p)^{m-i} \cdot i }{ \sum_{i=k}^m C_m^i p^i(1-p)^{m-i}  } < 2m p$ for any $m \geq N$, where $C_m^i$ is the binomial coefficient.   
}

For convenience, we define the following terms:
\begin{eqnarray*}
	C(m,k;p) &=& \Pr(X=k) = C_m^k p^k(1-p)^{m-k}. \\
	S(m,k;p) &=& \Pr(X \geq k) = \sum_{i=k}^m C_m^i p^i(1-p)^{m-i} \\
	E(m,k;p) &= &  \Ex (X| X \geq k) \cdot \Pr(X \geq k)=  \sum_{i=k}^m C_m^i p^i(1-p)^{m-i} \cdot i 
\end{eqnarray*} 
\begin{lemma}\label{lem:boundSC}
	Fix $p$, there exists large enough $N$ such that for any integer $m \geq N$ and  $k \in  [pm, pm+m^{3/4} + 1]$, we have $$S(m,k+1;p) > C(m,k;p).$$
\end{lemma}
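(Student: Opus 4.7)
The plan is to reduce the inequality to showing that a sum of binomial-ratio terms exceeds $1$, and to establish the latter by bounding many individual terms below by a positive constant. Writing $r_j := C(m,k+j;p)/C(m,k;p) = \prod_{l=0}^{j-1} \frac{(m-k-l)p}{(k+l+1)(1-p)}$, the desired inequality $S(m,k+1;p) > C(m,k;p)$ becomes $\sum_{j \ge 1} r_j > 1$. Since all $r_j \ge 0$, it suffices to exhibit $\Omega(m^{1/4})$ indices $j$ each admitting a uniform positive lower bound.

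The first step is to estimate a single factor inside the product. Set $t := k - pm$, so the hypothesis on $k$ yields $0 \le t \le m^{3/4}+1$, and rewrite
\[
\frac{(m-k-l)p}{(k+l+1)(1-p)} \;=\; \frac{p(1-p)m - p(t+l)}{p(1-p)m + (1-p)(t+l+1)}.
\]
Restricting attention to $j \le L := \lfloor m^{1/4} \rfloor$, we have $t + l \le m^{3/4} + m^{1/4} + 1 = O(m^{3/4})$, while the denominator is $\Theta(m)$. This yields a lower bound of the form $1 - C_1 m^{-1/4}$ on each factor, for a constant $C_1 = C_1(p)$ and all $m$ sufficiently large (depending only on $p$). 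Taking the product over $j \le L$ factors and using $(1 - C_1 m^{-1/4})^{m^{1/4}} \to e^{-C_1}$ as $m \to \infty$ shows $r_j \ge \tfrac{1}{2} e^{-C_1}$ uniformly in $j \in [1,L]$ and in $k \in [pm, pm + m^{3/4} + 1]$, whenever $m \ge N$ for some threshold $N = N(p)$.

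The last step is to sum: $\sum_{j \ge 1} r_j \ge \sum_{j=1}^{L} r_j \ge L \cdot \tfrac{1}{2} e^{-C_1} = \tfrac{1}{2} e^{-C_1} \lfloor m^{1/4} \rfloor$, which diverges in $m$ and therefore exceeds $1$ once $N$ is chosen large enough. This gives the lemma.

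The principal technical obstacle is maintaining the lower bound on each factor uniformly over both $t$ and $l$; individually these can be as large as $m^{3/4}$ and $m^{1/4}$, but both are dominated by $p(1-p)m$, which is why the $1 - O(m^{-1/4})$ estimate survives. The choice $L = \lfloor m^{1/4} \rfloor$ is essentially tight for this scheme: a substantially larger window would let $t+l$ rival $p(1-p)m$ and destroy the uniform bound, while a substantially smaller window would not provide enough terms to overcome $1$.
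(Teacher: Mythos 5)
Your proof is correct, and it rests on the same core observation as the paper's: in the range $k \in [pm, pm + m^{3/4}+1]$ the ratio of consecutive binomial probabilities tends to $1$ as $m \to \infty$, so one can bound it below by a constant and then sum several terms of $S(m,k+1;p)$ to dominate $C(m,k;p)$. Where you diverge is in the choice of parameters, and the two choices trade precision against effort. The paper picks the crude lower bound $r_k \geq 3/4$ for all $m \geq N$ and then observes that just \emph{two} terms already suffice, since $\tfrac{3}{4} + (\tfrac{3}{4})^2 = \tfrac{21}{16} > 1$; no rate of convergence is needed. You instead quantify the rate, showing each factor is $1 - O(m^{-1/4})$, and then take a window of $L = \lfloor m^{1/4} \rfloor$ terms, using $(1 - C_1 m^{-1/4})^{m^{1/4}} \to e^{-C_1}$ to keep the product bounded away from $0$, and concluding that the sum diverges like $m^{1/4}$. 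Your estimate is genuinely stronger — it shows $S(m,k+1;p)/C(m,k;p) \to \infty$ rather than merely $> 1$ — but this extra strength is not used, and it costs you the uniform control of $t+l$ against $p(1-p)m$ and the asymptotic analysis of the product, which you correctly identify as the principal technical obstacle. The paper's version sidesteps that obstacle entirely by not demanding a ratio bound any sharper than a fixed constant above $1/2$. If you only need the qualitative conclusion of the lemma, the two-term argument is the leaner route.
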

\begin{proof}
	We first upper bound the ratio between $C(m,k+1;p)$ and $C(m,k;p)$ for $k \in  [pm, pm+m^{3/4} + 3]$.
	\begin{eqnarray*}
		r_k &=& \frac{C(m,k+1;p)}{C(m,k;p)} \\
		& = & \frac{C_m^{k+1} p^{k+1}(1-p)^{m-k-1}}{C_m^k p^k(1-p)^{m-k}} \\
		& = & \frac{m-k} {k+1}\cdot \frac{p}{1-p} \\
		& \geq &  \frac{m-pm - m^{3/4}-3}{pm + m^{3/4}+4} \cdot \frac{p}{1-p} 
	\end{eqnarray*}
	Note that $\frac{m-pm - m^{3/4}-3}{pm + m^{3/4}+4} \leq \frac{m-pm }{pm } =  \frac{1-p}{p}$, and tends to $\frac{1-p}{p}$ as $m \to \infty$. Therefore, there exists large enough $N$ such that $\frac{n-pm - m^{3/4}-3}{pm + m^{3/4}+4} \geq \frac{3}{4} \cdot \frac{1-p}{p}$, thus $r_k \geq \frac{3}{4}$,  for any $m \geq N$. Therefore, for any $k \in  [pm, pm+m^{3/4}+1]$.
	\begin{eqnarray*}
		S(m,k+1;p) &>& C(m,k+1;p) + C(m,k+2;p) \\
		& = & r_k C(m,k;p) + r_k r_{k+1} C(m,k;p) \\
		& \geq & \bigg[ \frac{3}{4} + (\frac{3}{4})^2 \bigg] \cdot  C(m,k;p) \\
		& \geq & C(m,k;p).
	\end{eqnarray*}
	This shows that $S(m,k+1;p) > C(m,k;p)$ for any $m \geq  N$, as desired. 
\end{proof}

{\bf Proof of the Lemma:}

Note that $\Ex (X| X \geq k) = \frac{E(m,k;p)}{S(m,k;p)}$. 	Let $\Delta(m,k;p)= E(m,k;p) - 2pm S(m,k;p)$. We prove that $\Delta(m,k;p)< 0$. 
\begin{eqnarray*}
	\frac{\Delta(m,k;p)}{m} &=&  \frac{1}{m}\sum_{i=k}^m i \cdot C_m^i p^i(1-p)^{m-i}   - 2p \cdot  \sum_{i=k}^mC_m^i p^i(1-p)^{m-i} \\
	& = & \sum_{i=k}^m C_{m-1}^{i-1} p^i(1-p)^{m-i}  - 2p \cdot \sum_{i=k}^m \bigg[C_{m-1}^{i-1} + C_{m-1}^i\bigg] p^i(1-p)^{m-i} \\
	& = & (1-2p) \cdot \sum_{i=k}^m C_{m-1}^{i-1} p^i(1-p)^{m-i}  - 2p \cdot  \sum_{i=k}^m  C_{m-1}^ip^i(1-p)^{m-i} \\
	& = & (1-2p) \cdot \sum_{i=k-1}^{m-1} C_{m-1}^{i} p^{i+1}(1-p)^{m-i-1}  - 2p \cdot  \sum_{i=k}^{m-1}  C_{m-1}^ip^i(1-p)^{m-i} \\
	& = & (1-2p)p \cdot \sum_{i=k-1}^{m-1} C_{m-1}^{i} p^{i}(1-p)^{m-1-i}  - 2p(1-p) \cdot  \sum_{i=k}^{m-1}  C_{m-1}^ip^i(1-p)^{m-1-i} \\
	& = & (1-2p)p \cdot \bigg[ S(m-1,k;p) + C(m-1,k-1;p) \bigg]  - 2p(1-p) \cdot  S(m-1,k;p) \\
\end{eqnarray*}

Note that $m^{3/4} < (m-1)^{3/4} + 1$ for any $m> 1$, therefor $k-1 = pm -1 + m^{3/4} < p(m-1) + (m-1)^{3/4} + 1$. It is easy to see that $k-1 > p(m-1)$. As a result, $k-1 \in [p(m-1), p(m-1) + (m-1)^{3/4} + 1]$.  By Lemma \ref{lem:boundSC}, there exists $N$ such that $S(m-1,k;p) > C(m-1,k-1;p)$ for any $m \geq N$. In this case, we have 
$$\frac{\Delta(m,k;p)}{m} <  S(m-1,k;p) \bigg[ (1-2p)2p - 2p(1-p) \bigg] \leq 0. $$
Therefore, $\Delta(m,k:p) < 0$ for any $m \geq N$. 

\newpage

\DeclareRobustCommand*{\refa}{\ref{sec:Bayes}}

\section{Omitted Proofs from Section \refa}
\DeclareRobustCommand*{\refa}{\ref{lem:tail_balance}}

\subsection{Proof of Lemma \refa } \label{appendix:BayesAlgLem}
Since any $\theta \in \Theta_1$ can be viewed as a bidder $i$, we will think of $\Theta_1$ as $[n]$ and a bidder $i$ as a state in $\Theta_1 = [n]$. Let $p_i$ denote the probability of state $i$.   W.l.o.g., assume that $p_1 \geq p_2 ... \geq p_n$. We now prove each direction. 

`` $\Rightarrow$": Since $p_1 \cdot \pi(1,i) =  p_i \cdot \pi(i,1)$ for any $i = 1,...n$ (note $\pi(1,1) = 0$), summing over $i$, we obtain $$p_1 = \sum_{i=1}^n p_1 \cdot \pi(1,i) = \sum_{i=1}^n p_i \cdot \pi(i,1) =\sum_{i=2}^n p_i \cdot \pi(i,1)  \leq \sum_{i=2}^n p_i. $$
Since $p_1 \geq p_j$ for any $j$, we have $p_j \leq \sum_{i\not = j} p_i$, as desired. 

`` $\Leftarrow$": We show that a pooling scheme can be explicitly constructed. Observe that the pooling scheme can be viewed as matching bidders with equal probability mass. Particularly, bidder $i$ and $j$ are matched with probability mass $\lambda_i \pi(i,j)$. We first show that if $p_1 = p_2$, there always exists a feasible pooling scheme. This is because we can match bidder $n$ with $n-1$ for probability mass of $p_n$ so that the left probability mass of bidder $n-1$ is $p'_{n-1} = p_{n-1} - p_n$ while bidder $n$ has $0$ probability mass left. We then match bidder $n-1$ with bidder $n-2$ for probability mass $p'_{n-1}$, etc. This process will continue until only bidder $1,2,3$ have non-zero probability masses left, satisfying $p_1 = p_2 \geq p'_3$. Finally, we can match bidder $1$ with bidder $3$ for probability mass $p'_3/2$,  bidder $2$ with bidder $3$ for probability mass $p'_3/2$, and bidder $1$ with bidder $2$ for probability mass $p_1 - p'_3/2$, yielding a feasible pooling scheme.  

When $p_1 > p_2$, we can reduce it to the case with $p_1 = p_2$, as follows. We take probability mass of $p_1 - p_2$  from bidder $1$, and use it to  match bidder $1$ with all other bidders except bidder   $2$, i.e., match bidder $1$ with bidder $3$, bidder $4$, ..., bidder $n$.  Since $p_1 \leq \sum_{i=2}^n p_i$, i.e., $p_1 - p_2 \leq \sum_{i=3}^n p_i$, the probability mass of $p_1 - p_2$ from bidder $1$ will be used up to match bidder $1$ with other bidders except bidder $2$, at which point we arrive at an instance satisfying $p_1 = p_2$.   A pooling scheme for the left probability masses can then be constructed using the earlier procedure.   


\DeclareRobustCommand*{\refa}{\ref{thm:unknownAlgo}}
\DeclareRobustCommand*{\refaa}{\ref{lem:unknownAlgo}}

\subsection{Proof of Theorem \refa \, and Lemma \refaa }
\label{appendix:BayeAlgo}

Recall that the welfare is maximized by revealing full information, and the optimal welfare is the expectation of the welfare at each state of nature. Our proof will lower bound the revenue in terms of the optimal welfare. In fact, we will prove an even stronger result: given any realized state of nature, the revenue generated at that state is at least $1/8$ fraction of the welfare at that state. We prove separately for the states that are fully revealed and these that are pooled. 

Our analysis involves various comparisons of welfare and revenue in the standard second price auction (with no signaling).  The main technical barrier comes from the fact that bidders are \emph{asymmetric} in the sense that they have either the high or low distribution at any given state of nature. This makes it difficult to analytically characterize  revenue and welfare. Our strategy is to first analyze the \emph{symmetric} case in which all bidders have the same value distribution $H(v)$, and then extend the bound to asymmetric bidders. We use $A_k$ to denote a standard second price auction with $k$ i.i.d. bidders, and  $\rev(A_k;H)$ and $\wel(A_k;H)$ denote the revenue and welfare respectively when bidders in $A_k$ have value distribution $H(v)$. Let $A_{n,k}$ denote a standard second price auction with $n$ independent bidders, among whom $k$ bidders have value high distribution $H(v)$ and another $n-k$ bidders have low value distribution  $L(v)$. $\rev(A_{n,k})$ and $\wel(A_{n,k})$ denote the revenue and welfare of the auction $A_{n,k}$, respectively. The notation $v_{[1]}$ and $v_{[2]}$ are reserved for the largest and second largest bidder value in the auction.

In auction $A_k$ with distribution $H(v)$, $\Pr(v_{[1]} \leq v) = H^k(v)$ and $\Pr(v_{[2]} \leq v) = k H^{k-1}(v) - (k-1)H^k(v)$. The welfare and revenue is the expectation of $v_{[1]}$ and $v_{[2]}$ respectively, which can be expressed as follows.  Here, we use constant $C_k = 1 + \frac{1}{2} + ... \frac{1}{k}$ denote the harmonic number.

\begin{eqnarray*}
	& & \mathbf{Wel}(A_k;H) \\
	&=& \int_{0}^{\infty} [1 - H^k(v)] dv \\
	&=& \int_{0}^{\infty} \frac{1 - H(v)}{h(v)} \bigg[1 + H(v) ... + H^{k-1}(v) \bigg] h(v) dv\\
	&=&  \frac{1 - H(v)]}{h(v)} \bigg[H(v) + \frac{H^2(v)}{2} ... + \frac{H^{k}(v)}{k}\bigg]\bigg|_0^{\infty}  - \int_{0}^{\infty} \bigg( \frac{1 - H(v)]}{h(v)} \bigg)' \bigg[H(v) + \frac{H^2(v)}{2} ... + \frac{H^{k}(v)}{k} \bigg]dv \\
	&=&  \lim_{v \to \infty}\frac{1 - H(v)}{h(v)} \cdot C_k  + \int_{0}^{\infty} \bigg(-\frac{1 - H(v)]}{h(v)} \bigg)' \bigg[H(v) + \frac{H^2(v)}{2} ... + \frac{H^{k}(v)}{k} \bigg]dv \\
\end{eqnarray*}
and 
\begin{eqnarray*}
	& & \mathbf{Rev}(A_k;H) \\
	& = & \int_{0}^{\infty} [1 - k H^{k-1}(v) + (k-1)H^k(v)] dv \\
	&=& \int_{0}^{\infty } \frac{1 - H(v)}{h(v)} \bigg[1 + H(v) ... + H^{k-2}(v) - (k-1)H^{k-1}(v) \bigg] h(v) dv\\
	&=&  \lim_{v \to \infty}\frac{1 - H(v)}{h(v)} \cdot (C_k - 1) + \int_{0}^{\infty} \bigg(-\frac{1 - H(v)]}{h(v)} \bigg)' \bigg[H(v) +  \frac{H^2(v)}{2} ... + \frac{H^{k-1}(v)}{k-1} -  \frac{(k-1)H^{k}(v)}{k} \bigg]dv \\
\end{eqnarray*}

The last equality in both expressions comes from integration by parts. For convenience, let $x = H(v) \in [0,1]$. The term in the square bracket of $\rev(A_k;H)$ is denoted as $R_k(x) = x + x^2/2 + ..., x^k/k - x^k $, while the term in the square bracket of $\wel(A_k;H)$ is denoted as $W_k(x) = x + x^2/2 +..., x^k/k$. The following lemma is central to our analysis. 
\begin{lemma}\label{lem:BayesAlgo:bound1}
	For any integer $k,i>0$, if there exists $\rho > 0$ such that 
	$R_k(x) \geq \rho \cdot W_i(x)$ for all $x \in [0,1]$, then $\rev(A_k;H) \geq \rho \cdot  \wel(A_i;H)$. 
\end{lemma}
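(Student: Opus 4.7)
The plan is to apply the integral representations of $\rev(A_k;H)$ and $\wel(A_i;H)$ that were derived immediately before the lemma statement, together with the MHR assumption on $H$ (in force throughout Theorem \ref{thm:unknownAlgo}). Written out, these representations are
\[
\rev(A_k;H) = \lim_{v\to\infty}\frac{1-H(v)}{h(v)}\cdot(C_k-1) + \int_0^{\infty}\Bigl(-\tfrac{1-H(v)}{h(v)}\Bigr)'\, R_k(H(v))\, dv,
\]
\[
\wel(A_i;H) = \lim_{v\to\infty}\frac{1-H(v)}{h(v)}\cdot C_i + \int_0^{\infty}\Bigl(-\tfrac{1-H(v)}{h(v)}\Bigr)'\, W_i(H(v))\, dv.
\]
The key structural fact to exploit is that under MHR, the hazard rate $h/(1-H)$ is non-decreasing, so $(1-H)/h$ is non-increasing, and therefore the derivative $\bigl(-\tfrac{1-H(v)}{h(v)}\bigr)'$ is non-negative almost everywhere. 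The limit $\lim_{v\to\infty}\tfrac{1-H(v)}{h(v)}$ is likewise non-negative.

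The first step would be to handle the boundary (limit) terms. Evaluating the hypothesis $R_k(x) \geq \rho \cdot W_i(x)$ at the endpoint $x=1$ gives $R_k(1) = C_k - 1 \geq \rho\, C_i = \rho\, W_i(1)$. Since the multiplier $\lim_{v\to\infty}\tfrac{1-H(v)}{h(v)}$ is non-negative, the boundary term in the formula for $\rev(A_k;H)$ is at least $\rho$ times the boundary term in the formula for $\wel(A_i;H)$.

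The second step handles the integral terms. Because $H(v)\in[0,1]$ for every $v$, the hypothesis applies pointwise, yielding $R_k(H(v)) \geq \rho\, W_i(H(v))$ for every $v \geq 0$. Multiplying both sides by the non-negative weight $\bigl(-\tfrac{1-H(v)}{h(v)}\bigr)'$ preserves the inequality, and integrating over $[0,\infty)$ shows that the integral term in $\rev(A_k;H)$ dominates $\rho$ times the integral term in $\wel(A_i;H)$. Adding the two contributions yields $\rev(A_k;H)\geq \rho\cdot\wel(A_i;H)$, as desired.

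The only non-routine piece is confirming that the integral representations are valid (that the boundary contributions at $v=0$ vanish, which holds under the standard normalization $H(0)=0$, and that MHR suffices to make $\bigl(-\tfrac{1-H}{h}\bigr)'$ a non-negative measure almost everywhere). Granted these, the argument is a clean reduction of a global inequality between auction statistics to a purely polynomial pointwise inequality on $[0,1]$, which is the whole point of setting up $R_k$ and $W_i$ as in the preamble.
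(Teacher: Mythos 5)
Your proof is correct and follows essentially the same approach as the paper: both rest on the integral representations derived in the preamble, use MHR to ensure the weight $\bigl(-\tfrac{1-H(v)}{h(v)}\bigr)'$ and the limit term are non-negative, compare the boundary terms via the $x=1$ evaluation $R_k(1)=C_k-1 \geq \rho\, C_i = \rho\, W_i(1)$, and compare the integrands pointwise. You simply spell out in full the term-by-term comparison that the paper compresses into ``it is not hard to conclude.''
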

\begin{proof}
	Notice that  $\frac{1-H(v)}{h(v)} \geq 0$ by definition, and $\bigg(-\frac{1 - H(v)]}{h(v)} \bigg)' \geq 0$ by the MHR assumption. Comparing the terms in $\rev(A_k;H)$ and $\wel(A_i;H)$, it is not hard to conclude that the assumed conditions are sufficient to guarantee $\rev(A_k;H) \geq \rho \cdot  \wel(A_i;H)$ (note that $C_k - 1 = R_k(1)$ and $C_k = W_k(1)$). 
\end{proof}
Using Lemma \ref{lem:BayesAlgo:bound1}, we now prove that when $|\theta|_1 \geq 2$ or $|\theta|_1 = 0$, in which case we fully reveal $\theta$ and then run a standard second price auction, the revenue is at least $\frac{1}{8}$ fraction of welfare.
\begin{lemma}\label{lem:BayesAlgo:bound2}
	$\rev(A_k;H) \geq \max\{ \frac{1}{3}, \frac{\ln (k+1) - 1}{\ln (k+1)} \}  \cdot  \wel(A_k;H)$ for any $k \geq 2$. 
\end{lemma}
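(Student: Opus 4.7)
The plan is to apply Lemma \ref{lem:BayesAlgo:bound1} and then reduce the problem to a one-variable polynomial inequality. Since $R_k(x) = W_k(x) - x^k$, the required statement $R_k(x) \geq \rho \cdot W_k(x)$ for all $x \in [0,1]$ is equivalent to $\frac{x^k}{W_k(x)} \leq 1 - \rho$, so I would compute the largest value $\rho^*$ for which this holds, namely
\[
\rho^* = 1 - \max_{x \in [0,1]} \frac{x^k}{W_k(x)}.
\]

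First I would show that $g(x) := x^k / W_k(x)$ is monotone nondecreasing on $(0,1]$. Differentiating, the sign of $g'(x)$ equals the sign of $k\,W_k(x) - x\,W_k'(x)$. Using $W_k'(x) = 1 + x + \cdots + x^{k-1}$, a direct expansion gives
\[
k\,W_k(x) - x\,W_k'(x) = \sum_{i=1}^{k-1} \frac{k-i}{i}\, x^i,
\]
which is nonnegative for $x \in [0,1]$. Hence $g$ attains its maximum at $x=1$, where $g(1) = 1/C_k$ with $C_k = 1 + \tfrac12 + \cdots + \tfrac1k$. Consequently $\rho^* = (C_k - 1)/C_k$, and by Lemma \ref{lem:BayesAlgo:bound1} we obtain $\rev(A_k;H) \geq \frac{C_k - 1}{C_k}\,\wel(A_k;H)$.

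To finish, I would derive the two forms in the $\max$. For the logarithmic bound, the standard harmonic estimate $C_k \geq \ln(k+1)$ (from the integral comparison $\sum_{i=1}^{k} 1/i \geq \int_1^{k+1} dx/x$) gives
\[
\frac{C_k - 1}{C_k} = 1 - \frac{1}{C_k} \geq 1 - \frac{1}{\ln(k+1)} = \frac{\ln(k+1) - 1}{\ln(k+1)}.
\]
For the constant $1/3$ bound, I would note $C_k \geq C_2 = 3/2$ for $k \geq 2$, whence $1 - 1/C_k \geq 1/3$. Combining yields the claimed $\max\{1/3, (\ln(k+1)-1)/\ln(k+1)\}$ lower bound.

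The only mildly nontrivial step is establishing monotonicity of $g(x) = x^k/W_k(x)$; everything else is a direct application of Lemma \ref{lem:BayesAlgo:bound1} plus elementary bounds on the harmonic sum. I do not anticipate any obstacle beyond checking the algebraic identity for $kW_k(x) - xW_k'(x)$.
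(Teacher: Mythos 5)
Your proof is correct and takes essentially the same route as the paper: both reduce to showing $f(x) = 1 - x^k/W_k(x)$ is nonincreasing on $[0,1]$, evaluate at $x=1$ to get the ratio $(C_k-1)/C_k$, and then apply the elementary harmonic bounds $C_k \geq 3/2$ and $C_k \geq \ln(k+1)$. The only difference is that the paper asserts the monotonicity as "easy to check," whereas you supply the derivative computation (the identity $kW_k(x) - xW_k'(x) = \sum_{i=1}^{k-1}\frac{k-i}{i}x^i$), which is the right verification and is indeed correct.
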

\begin{proof}
	It is easy to check   $$f(x) = \frac{R_k(x)}{W_k(x)} = 1 - \frac{x^k}{x+x^2/2 + ... x^k/k}$$ is a  decreasing function for $x \in [0,1]$, therefore, we have  
	$$ f(x) \geq f(1) = \frac{C_k - 1}{C_k}$$
	That is, $R_k(x) \geq \frac{C_k - 1}{C_k}W_k(x)$. Invoking Lemma \ref{lem:BayesAlgo:bound1}, we have 
	$$ \frac{\mathbf{Rev}(A_k; H)}{\wel(A_k; H)} \geq \frac{C_k-1}{C_k} \geq  \frac{ 1}{3} $$ where we used the inequality $C_k=1 + \frac{1}{2} ... + \frac{1}{k} \geq 3/2$ when $k \geq 2$ and $\frac{x-1}{x}$ is increasing in $x$.  Similarly, $\mathbf{Rev}(A_k; H) \geq \frac{\ln (k+1) - 1}{\ln (k+1)} \wel(A_k; H)$ since $C_k \geq \ln(k+1)$. This proves the lemma. 
\end{proof}

Lemma \ref{lem:BayesAlgo:bound2} implies that when $|\theta| = 0$, in which case we reveal full information, the revenue is at least $\max\{ \frac{1}{3}, \frac{\ln (k+1) - 1}{\ln (k+1)} \}$ fraction of the welfare (Part 1 of Lemma \refaa ). 
We now prove similar guarantee for the case $|\theta| \geq 2$. This is proven by extending the above bound to the case with asymmetric bidders.

\begin{lemma}\label{lem:BayesAlgo:full}[Part 2 of Lemma \refaa]
	For $k \geq 2$,	$\rev(A_{n,k}) \geq \max\{ \frac{1}{6}, \frac{\ln (k+1) - 1}{2\ln (k+1)} \} \cdot \wel(A_{n,k})$.
\end{lemma}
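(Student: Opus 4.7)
The strategy is to reduce the asymmetric auction $A_{n,k}$ to the two symmetric sub-auctions covered by Lemma~\ref{lem:BayesAlgo:bound2}: $A_k;H$ on the $k$ high bidders and $A_{n-k};L$ on the $n-k$ low bidders. Let $W_H$ and $W_L$ denote the maximum value within each group, so $\Pr[W_H\le v]=H^k(v)$ and $\Pr[W_L\le v]=L^{n-k}(v)$. First I would establish two elementary structural inequalities: (i) \emph{monotonicity of the runner-up}, which says that for any real numbers the second-largest is at least the second-largest of any sub-collection -- applied once to the high group and once to the low group this gives
\[
\rev(A_{n,k})\;\ge\;\max\bigl\{\rev(A_k;H),\ \rev(A_{n-k};L)\bigr\};
\]
and (ii) \emph{sub-additivity of welfare}, via the pointwise bound $1-H^k(v)L^{n-k}(v)\le(1-H^k(v))+(1-L^{n-k}(v))$ integrated in $v$, which yields
\[
\wel(A_{n,k})\;\le\;\wel(A_k;H)+\wel(A_{n-k};L).
\]

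The constant $\tfrac{1}{6}$ bound then follows almost immediately. Assuming $n-k\ge 2$ so that Lemma~\ref{lem:BayesAlgo:bound2} applies to both sub-auctions,
\[
\rev(A_{n,k})\;\ge\;\tfrac{1}{3}\max\{\wel(A_k;H),\wel(A_{n-k};L)\}\;\ge\;\tfrac{1}{6}\bigl(\wel(A_k;H)+\wel(A_{n-k};L)\bigr)\;\ge\;\tfrac{1}{6}\wel(A_{n,k}).
\]
The corner cases $n-k\in\{0,1\}$ are handled by noting that $\wel(A_{n-k};L)\le\Ex_L[v]\le\Ex_H[v]\le\wel(A_k;H)$ (the last inequality because a max of $k\ge 1$ i.i.d.\ samples of $H$ has expectation at least $\Ex_H[v]$), so $\wel(A_{n,k})\le 2\wel(A_k;H)$ and the one-sided bound $\rev(A_{n,k})\ge\rev(A_k;H)\ge\tfrac{1}{3}\wel(A_k;H)$ already suffices.

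For the logarithmic bound I would case-split on which sub-auction dominates the welfare. If $\wel(A_k;H)\ge\wel(A_{n-k};L)$ then $\wel(A_{n,k})\le 2\wel(A_k;H)$ and Lemma~\ref{lem:BayesAlgo:bound2} yields
\[
\rev(A_{n,k})\;\ge\;\rev(A_k;H)\;\ge\;\tfrac{\ln(k+1)-1}{\ln(k+1)}\wel(A_k;H)\;\ge\;\tfrac{\ln(k+1)-1}{2\ln(k+1)}\wel(A_{n,k}),
\]
as required. The main obstacle is the opposite case: Lemma~\ref{lem:BayesAlgo:bound2} now naturally outputs a log ratio in terms of $n-k$ rather than $k$, and to recover the stated $k$-dependent rate I expect to use MHR together with $\Ex_H[v]>\Ex_L[v]$ to argue that whenever $\wel(A_{n-k};L)$ dominates one must have $n-k\ge k$; the monotonicity of $x\mapsto\tfrac{\ln(x+1)-1}{\ln(x+1)}$ then closes the argument. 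Establishing this "low side can dominate the welfare only if it is the larger group" claim from MHR plus expectation dominance alone is, I anticipate, the technical heart of the proof.
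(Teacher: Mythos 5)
Your structural inequalities (monotonicity of the runner-up and sub-additivity of welfare) are correct and are exactly the ones the paper uses. Your $\frac{1}{6}$ bound is complete and correct, including the handling of the corner cases $n-k\in\{0,1\}$. And your Case A for the logarithmic bound --- when $\wel(A_k;H)\geq\wel(A_{n-k};L)$, use $\rev(A_{n,k})\geq\rev(A_k;H)\geq\frac{\ln(k+1)-1}{\ln(k+1)}\wel(A_k;H)\geq\frac{\ln(k+1)-1}{2\ln(k+1)}\wel(A_{n,k})$ --- is correct and cleaner than what appears in the paper.

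The gap, which you correctly flag as the crux, is the proposed claim that $\wel(A_{n-k};L)>\wel(A_k;H)$ forces $n-k\geq k$. This is \emph{false} under the paper's hypotheses (MHR on each of $H,L$ separately, and $\Ex_H>\Ex_L$). Consider $H$ uniform on $[0.99,1.01]$ and $L$ uniform on $[0,1.5]$: both are MHR, and $\Ex_H=1>0.75=\Ex_L$. Then with $k=4$ high bidders and $m=n-k=3$ low bidders, $\wel(A_4;H)=0.99+0.02\cdot\frac{4}{5}\approx 1.006$ while $\wel(A_3;L)=1.5\cdot\frac{3}{4}=1.125>\wel(A_4;H)$, yet $m=3<4=k$. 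Marginal MHR and a comparison of means impose no stochastic-dominance relation between $H$ and $L$, so a spread-out $L$ can produce a larger order-statistic expectation with fewer samples. The ``low side dominates welfare $\Rightarrow$ low side is the larger group'' implication simply does not hold, so this route cannot be completed.

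The paper avoids this by splitting on $k$ versus $m=n-k$ rather than on which group's welfare dominates. The case $k\leq m$ is handled essentially as you do in your $\frac{1}{6}$ argument, averaging the two sub-revenue bounds and using that $\frac{\ln(x+1)-1}{\ln(x+1)}$ is increasing in $x$ so the $k$-rate is the binding one. For $k>m$ (equivalently $k>n/2$, hence $k\geq 12$ when $n\geq 22$), the paper does \emph{not} fall back on the low group at all; instead it extends the Lemma \ref{lem:BayesAlgo:bound1} machinery to compare the revenue polynomial $R_k(x)$ directly against the \emph{longer} welfare polynomial $W_n(x)$, bounding $\frac{R_k(x)}{W_n(x)}\geq 1-\frac{1+\ln(n-1)-\ln k}{\ln(k+1)}$, and then relates $\wel(A_n;H)$ to $\wel(A_{n,k})$. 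The key idea you are missing is that when the high group is already a majority ($k>n/2$), its revenue alone is comparable to the welfare of $n$ high bidders --- a polynomial-ratio estimate, not a size-comparison between groups.
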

\begin{proof}
	Let $m = n-k$ denote the number of low-distribution bidders, $\bv{u} \in \R^k, \bv{v} \in \R^m$ denote the value profile of the high-distribution bidders and low-distribution bidders respectively. Notice that $\max (\bv{u} ,\bv{v}) \leq \max (\bv{u}) + \max (\bv{v})$ for any $\bv{u} \in \R^k, \bv{v} \in \R^m$,  we therefore have 
	\begin{eqnarray*}
		\wel(A_{n,k}) &=& \Ex_{\bv{u}  \sim H^k} \Ex_{\bv{v} \sim L^m} \max (\bv{u} ,\bv{v}) \\ 
		& \leq & \Ex_{\bv{u}  \sim H^k} \Ex_{\bv{v} \sim L^m} [ \max (\bv{u}) + \max(\bv{v})  ] \\
		& =& \mathbf{Wel}(A_k; H) + \mathbf{Wel}(A_m; L)
	\end{eqnarray*}
	We now lower bound $\rev(A_{n,k})$ in terms of $\wel(A_{n,k})$. Since the auction consists of $k$ high-distribution bidders and  $m$ low-distribution bidders, we have $\rev(A_{n,k}) \geq \rev(A_k;H)$ and $\rev(A_{n,k}) \geq \rev(A_m;L)$. 
When $k \leq m$, we have 
	\begin{eqnarray*}
		\rev(A_{n,k}) &\geq& \frac{1}{2} \big( \mathbf{Rev}(A_m; L) + \mathbf{Rev}(A_k; H) \big) \\
		&\geq &   \frac{1}{2} [ \max\{ \frac{1}{3}, \frac{\ln (m+1) - 1}{\ln (m+1)} \} \cdot \mathbf{Wel}(A_m; L) +  \max\{ \frac{1}{3}, \frac{\ln (k+1) - 1}{\ln (k+1)} \}  \cdot \mathbf{Wel}(A_k; H)] \\
		&\geq&  \max\{ \frac{1}{6}, \frac{\ln (k+1) - 1}{2\ln (k+1)} \} \mathbf{Wel}(A_{n,k}).
	\end{eqnarray*} 
	where the second inequality follows from Lemma \ref{lem:BayesAlgo:bound2}. 
	
	When $k > m$ or equivalently $k > \frac{1}{2}n$, we apply Lemma \ref{lem:BayesAlgo:bound1} to get $\mathbf{Rev}(A_k; H) \geq \max\{ \frac{1}{6}, \frac{\ln (k+1) - 1}{2 \ln (k+1)} \} \mathbf{Wel}(A_n; H)$. To do so, we lower bound the following function: 
	\begin{eqnarray*}
	f(x) &=& \frac{R_{k}(x)}{W_n(x)} \\
	&=& \frac{x+x^2/2 + \cdots + x^{k}/k - x^{k}}{x+x^2/2 + \cdots + x^n/n} \\
	& = & 1 - \sum_{i = k+1}^n \frac{x^{i}/i}{x+x^2/2 + \cdots + x^n/n}  - \frac{ x^{k}}{x+x^2/2 + \cdots +x^n/n} \\
	& \geq &  1 - \sum_{i = k+1}^n \frac{x^{i}/i}{x+x^2/2 + \cdots + x^i/i}  - \frac{ x^{k}}{x+x^2/2 + \cdots +x^k/k} \\
	& \geq &  1 - \sum_{i = k+1}^n \frac{1}{ i \cdot C_i}  - \frac{ 1}{C_k}
	\end{eqnarray*}
where the last inequality uses the fact that $ \frac{x^{i}}{x+x^2/2 + \cdots + x^i/i}$ achieves maximum at $x = 1$. As a result, $f(x)$ can be further lower bounded as follows:
$$ f(x) \geq 1 - \frac{1}{C_k} \cdot \bigg( 1 + \sum_{i = k+1}^n \frac{1}{ i}   \bigg) \geq 1 - \frac{1+\ln(n-1) - \ln(k)}{\ln(k+1)} \geq  \max\{ \frac{1}{6}, \frac{\ln (k+1) - 1}{2\ln (k+1)} \},$$
where the second inequality is due to the fact $\sum_{i = k+1}^n \frac{1}{ i}   \leq \ln(n-1) - \ln(k)$ and the last inequality is easy to verify to be true for any $k \geq 12$ (since $n \geq 22$ by assumption of Theorem \ref{thm:unknownAlgo}). 

As a result, when $k \geq m$, we have $$ \mathbf{Rev}(A_{n,k}) \geq  \max\{ \frac{1}{6}, \frac{\ln (k+1) - 1}{2 \ln (k+1)} \} \wel (A_{n},H) \geq \max\{ \frac{1}{6}, \frac{\ln (k+1) - 1}{2 \ln (k+1)} \} \wel (A_{n,k}).$$ 
\end{proof}

So far we have shown that at the states that are fully revealed, the revenue is at least $\frac{1}{6}$ fraction of the welfare. We now prove similar guarantees for the states that are pooled. Note that when Algorithm \ref{alg:publicBayes} outputs a pool signal $(\theta',pool)$ where $\theta' \in \Theta_2$, the two bidders $i,j \in \theta'$ are equally likely (by definition of the $1$-tail balanced pooling scheme), therefore each  bidder in $\theta'$ has a high distribution with probability $1/2$. Assume bidders bid in expectation, the revenue in this case is at least $\frac{1}{2} \cdot \rev(A_{n,2})$. The following lemma shows that $\frac{1}{2}\rev(A_{n,2}) \geq \frac{1}{8} \wel(A_{n,1})$. This implies that in Algorithm \ref{alg:publicBayes},  the revenue at any pooled state is at least $\frac{1}{8}$ fraction of the welfare of that state. 
\begin{lemma}\label{lem:BayesAlgo:pool}[Part 3 of Lemma \refaa]
	When $n \geq 22$, we have	$\frac{1}{2} \cdot \rev(A_{n,2}) \geq \frac{1}{8} \cdot \wel(A_{n,1})$.
\end{lemma}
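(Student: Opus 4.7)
The plan is to bracket the asymmetric quantities $\rev(A_{n,2})$ and $\wel(A_{n,1})$ by symmetric ones and then apply Lemma~\ref{lem:BayesAlgo:bound1} on each side separately, exactly in the style of the proof of Lemma~\ref{lem:BayesAlgo:full}. First I would upper bound the welfare using the decoupling inequality $\max(\bv u,\bv v)\le\max(\bv u)+\max(\bv v)$ to get
\[
\wel(A_{n,1})\ \le\ \wel(A_1;H)+\wel(A_{n-1};L),
\]
and then lower bound the revenue by noting that the second-highest bid in $A_{n,2}$ dominates both the second-highest bid among the two $H$-bidders alone and the second-highest bid among the $n-2$ $L$-bidders alone, so that averaging the two gives $\rev(A_{n,2})\ge\tfrac{1}{2}[\rev(A_2;H)+\rev(A_{n-2};L)]$.

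Next I would relate each matched pair of terms via Lemma~\ref{lem:BayesAlgo:bound1}. For the ``high'' pair, $R_2(x)/W_1(x)=(x-x^2/2)/x=1-x/2\ge 1/2$ on $[0,1]$, hence $\rev(A_2;H)\ge\tfrac{1}{2}\wel(A_1;H)$. For the ``low'' pair, I would exploit the identity $R_{n-2}(x)=W_{n-1}(x)-x^{n-2}-x^{n-1}/(n-1)$, which yields
\[
\frac{R_{n-2}(x)}{W_{n-1}(x)}\ =\ 1-\frac{x^{n-2}}{W_{n-1}(x)}-\frac{x^{n-1}}{(n-1)\,W_{n-1}(x)}.
\]
Combining $W_{n-1}(x)\ge W_k(x)$ for $k\le n-1$ with the elementary monotonicity fact $x^k/W_k(x)\le 1/C_k$ on $[0,1]$ (already used in the proof of Lemma~\ref{lem:BayesAlgo:full}), I would bound the two subtracted terms by $1/C_{n-2}$ and $1/[(n-1)C_{n-1}]$ respectively, giving
\[
\frac{R_{n-2}(x)}{W_{n-1}(x)}\ \ge\ 1-\frac{1}{C_{n-2}}-\frac{1}{(n-1)\,C_{n-1}}.
\]
For $n\ge 22$ the right-hand side is comfortably above $1/2$ (numerically about $0.71$ at $n=22$, and monotonically larger for larger $n$), so Lemma~\ref{lem:BayesAlgo:bound1} yields $\rev(A_{n-2};L)\ge\tfrac{1}{2}\wel(A_{n-1};L)$.

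Assembling the three ingredients gives
\[
\rev(A_{n,2})\ \ge\ \tfrac{1}{4}\bigl[\wel(A_1;H)+\wel(A_{n-1};L)\bigr]\ \ge\ \tfrac{1}{4}\wel(A_{n,1}),
\]
which is equivalent to the claim $\tfrac{1}{2}\rev(A_{n,2})\ge\tfrac{1}{8}\wel(A_{n,1})$.

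The main obstacle is the uniform-in-$x$ ratio bound $R_{n-2}(x)/W_{n-1}(x)\ge 1/2$. Unlike the ratio $R_k(x)/W_k(x)$ analyzed in Lemma~\ref{lem:BayesAlgo:bound2}, this one is \emph{not} minimized at $x=1$, so one cannot simply plug in $x=1$. My fix is to peel off the two ``extra'' terms $x^{n-2}$ and $x^{n-1}/(n-1)$ and bound each individually using the monotonicity $x^k/W_k(x)\le 1/C_k$; this produces an $x$-uniform lower bound at the cost of needing $C_{n-2}$ to be large enough, which is precisely what forces the explicit threshold $n\ge 22$ stated in the lemma.
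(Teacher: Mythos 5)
Your proposal is correct and follows the same route as the paper: the same decoupling inequalities $\wel(A_{n,1})\le\wel(A_1;H)+\wel(A_{n-1};L)$ and $\rev(A_{n,2})\ge\tfrac12[\rev(A_2;H)+\rev(A_{n-2};L)]$, the same ratio bound $R_2(x)/W_1(x)\ge 1/2$, and the same peel-off-the-extra-terms technique to prove $R_{n-2}(x)/W_{n-1}(x)\ge 1/2$ via Lemma~\ref{lem:BayesAlgo:bound1}. The only cosmetic difference is in that last step: the paper first collapses $x^{n-1}/(n-1)+x^{n-2}\le(1+\tfrac{1}{n-1})x^{n-2}$ and then uses $x^{n-2}/W_{n-2}(x)\le 1/C_{n-2}$, whereas you bound the two terms separately by $1/C_{n-2}$ and $1/[(n-1)C_{n-1}]$, yielding the same $n\ge 22$ threshold.
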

\begin{proof}
	We first prove $\rev(A_2; H) \geq \frac{1}{2} \wel(A_1; H)$.  This is because $C_2 -1 = \frac{1}{2} = \frac{1}{2}C_1$ and  $\frac{R_2(x)}{ W_1(x)} =  \frac{x  - \frac{x^2}{2} }{ x} \geq \frac{1}{2}$ for any $x \in [0,1]$.  Invoking Lemma \ref{lem:BayesAlgo:bound1}, we obtain the conclusion.

	Now we argue that 	when $n \geq 22$, $\rev(A_{n-2};H) \geq \frac{1}{2}\wel(A_{n-1}; H) $. Note that $C_{n-1}  \leq 2[C_{n-2}-1]$ for any $n \geq 22$. By Lemma \ref{lem:BayesAlgo:bound1}, we only need to prove $W_{n-1}(x) \leq 2R_{n-2}(x)$ for any $x \in [0,1]$ and $n \geq 22$, as follows:
	\begin{eqnarray*}
		\frac{R_{n-2}(x)}{W_{n-1}(x)} &=& \frac{x + x/2 + ..., x^{n-2}/(n-2) - x^{n-2} }{ x + x/2 +..., x^{n-1}/(n-1) } \\
		&=& 1 - \frac{x^{n-1}/(n-1) +  x^{n-2} }{ x + x/2 +..., x^{n-1}/(n-1) } \\
		& \geq & 1 - \frac{(\frac{1}{n-1}  + 1)x^{n-2} }{ x + x/2 +..., x^{n-2}/(n-2) } \\
		& \geq & \frac{C_{n-2} - 1 -\frac{1}{n-1}  }{C_{n-2}} \\
		& \geq & \frac{1}{2}
	\end{eqnarray*}

	Similar to the proof in Lemma \ref{lem:BayesAlgo:full},  we have $\wel(A_{n,1}) \leq \wel(A_1; H) + \wel(A_{n-1}; L)$ and 
	$\rev (A_{n,2}) \geq \frac{1}{2}(\rev(A_2; H) + \rev(A_{n-2}; L))$.
	Therefore 
	$$
	\rev (A_{n,2})	\geq    \frac{1}{2} \cdot  [ \frac{1}{2} \wel(A_1; H) + \frac{1}{2}  \wel(A_{n-1}; L)] 
	\geq  \frac{1}{4} \wel(A_{n,1}). 
	$$
\end{proof}

\newpage

\DeclareRobustCommand*{\refa}{\ref{sec:PublicPrivate}}

\section{Omitted Proofs from Section \refa}
\DeclareRobustCommand*{\refa}{\ref{claim1:gap}}
\DeclareRobustCommand*{\refaa}{\ref{claim2:gap}}
\subsection{Proof of Claim \refa \, and Claim \refaa}

{\bf Proof of Claim \refa}. Note that in public signaling schemes, bidders always bid their expected values by our behavior assumption. In Example \ref{ex:AggreBid}, bidder $1$'s valuation is higher than bidder $2$'s at any state, therefore bidder $1$'s bid is always greater than bidder $2$'s, conditioned on any signal. This implies that bidder $1$'s bid is either the highest or the second highest given any signal. As a result, the revenue is at most bidder $1$'s bid, which in expectation is less than $3\eps$.

\vspace{3mm}
\noindent {\bf Proof of Claim \refaa}. Consider the following private signaling scheme: the auctioneer informs bidder 2 and 3 the true state  but reveals no information to bidder 1.  Since bidder 2 and 3 know the exact state, bidding the true value is their dominant strategy.  We claim that the equilibrium bid of bidder $1$ lies in the interval $[1 - \epsilon,1)$.  To see this, observe that any bid less than $1-\epsilon$ does not place bidder 1 in the winning position in neither state, thus results in expected utility $0$ to bidder 1. However, any bid  $ b \in (1-\epsilon, 1)$ helps bidder $1$ to win the auction at the second state, which results in a strictly positive utility for bidder 1.  Therefore, bidder 1 bids at least $1-\eps$ in any equilibrium,  and  the auctioneer's revenue is also at least $1-\eps$, concluding the proof.

\DeclareRobustCommand*{\refa}{\ref{thm:private}}
\subsection{Proof of Theorem \refa}
\label{appendix:privateAlg}

Example \ref{ex:AggreBid} illustrates a situation in which the auctioneer can use a private signaling scheme to extract revenue of at least $1-\eps$, which is close to the total surplus $1$. Interestingly,  this example can be thought differently from the perspective of signaling schemes. In particular, imagine we are given any realized value profile $(2\epsilon, \epsilon, 1)^T$, Example \ref{ex:AggreBid} and Claim \ref{claim2:gap} show that  by ``mixing" this value profile with another value profile $(1,1-\epsilon,\eps)^T$ of  arbitrarily small probability,  we can extract almost full surplus from the given value profile $(2\epsilon, \epsilon, 1)^T$ using private signaling schemes. If we can do such mixing for any realized value profile, the aggregated revenue should be close to the full surplus. It turns out that this insight can be made precise.

We start by generalizing Example \ref{ex:AggreBid} and characterizing the structure of value profiles that  leads to high revenue.  Throughout the proof,  we will use $u_i$ to denote the $i$'th entry of any vector $u$ and $u_{[i]}$ to denote the $i$'th largest entry of $u$. The following lemma plays a key role in our poof. 
\begin{lemma}\label{lem:private}
	Consider a second-price auction in the known-valuation setting with two possible value profiles $v,u \in \RR^n$ that have the following structure: 
	\begin{table}[H]
		\centering
		\begin{tabular}{|c|c|c|c|c|}
			\hline
			\multirow{2}{*}{} & \multicolumn{2}{c|}{Probability} \\
			\hhline{~----}
			& $v: \mbox{ prob. }(1-\delta)$ & $u: \mbox{ prob. }\delta$ \\
			\hline
			...  & ...      & ...  \\
			bidder $j$  & ...        &  $u_{[1]}\, (>u_{[2]})$ \\ 
			...  & ...      & ... \\
			bidder $k$  & ...                  & $u_{[2]} \,(= v_{[1]} - \eps) $   \\
			...  & ...                 & ... \\
			bidder $i$  & $v_{[1]}$                  & ... \\
			...  & ...                  & ... \\
			\hline
		\end{tabular}
	\end{table}
	
	If $j \not = i$ (but $k$ could equal $i$) and $v_{[1]},u_{[1]}, u_{[2]}$ satisfy the conditions in the above table , then for any $\delta >0$, the following private signaling scheme extracts revenue at least $(v_{[1]} - \epsilon)$ even in the worst Bayes Nash equilibrium: reveal no information to bidder $j$ and reveal full information to every other bidder.  
	
\end{lemma}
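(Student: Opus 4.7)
The plan is to leverage the structural constraint that every bidder other than $j$ receives full information under the proposed scheme. By the bidder-behavior assumption of this section, each such bidder $\ell \neq j$ has a dominant strategy of bidding her realized value and plays it; the problem therefore reduces to analyzing the best response of the single uninformed bidder $j$ against these fixed opponent bids, together with a mechanical check of the resulting second-price revenue in each state.

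First I would characterize bidder $j$'s expected utility $U(b)$ as a function of her bid $b$. In state $v$ (probability $1-\delta$), bidder $i \neq j$ holds value $v_{[1]}$, so the maximum opponent bid is $v_{[1]}$ and bidder $j$ can win only by paying $v_{[1]}$, yielding $v_j - v_{[1]} \le 0$ since $v_{[1]}$ is the profile maximum. In state $u$ (probability $\delta$), bidder $k$ (possibly equal to $i$) holds $u_{[2]} = v_{[1]} - \eps$, so any bid $b > v_{[1]} - \eps$ wins and collects the strictly positive amount $u_{[1]} - (v_{[1]} - \eps)$. Hence $U(b) = 0$ for $b < v_{[1]} - \eps$; $U(b) = \delta(u_{[1]} - v_{[1]} + \eps) > 0$ for $b \in (v_{[1]} - \eps, v_{[1]})$; and any bid $b \ge v_{[1]}$ is weakly dominated by that middle range because the additional wins in state $v$ contribute non-positive marginal utility. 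Consequently any bid strictly below $v_{[1]} - \eps$ is strictly dominated by, say, $v_{[1]} - \eps/2$, so in every BNE the support of bidder $j$'s strategy lies in $[v_{[1]} - \eps, \infty)$.

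Next I would lower bound the second-price revenue in each state under the constraint $b_j \ge v_{[1]} - \eps$. In state $v$, bidder $i$'s truthful bid is $v_{[1]}$ and bidder $j$'s bid is at least $v_{[1]} - \eps$; whichever of these is not the overall maximum contributes at least $v_{[1]} - \eps$ to the second-highest bid, so the revenue is at least $v_{[1]} - \eps$. In state $u$, bidder $k$'s truthful bid is $v_{[1]} - \eps$ and bidder $j$'s bid is at least $v_{[1]} - \eps$; once again the second-highest bid is at least $v_{[1]} - \eps$. Taking expectation over the two states yields the desired revenue bound of $v_{[1]} - \eps$ in every BNE.

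The main subtlety to watch is tie-breaking at $b_j = v_{[1]} - \eps$ in state $u$. Under pessimistic tie-breaking the precise best-response set is the open interval $(v_{[1]} - \eps, v_{[1]})$ rather than the closed interval, so the infimum is not attained and some care is needed to phrase the conclusion as ``revenue at least $v_{[1]} - \eps$'' rather than a strict inequality; this is easily handled because the only fact used for the revenue bound is that bidder $j$ never best-responds strictly below $v_{[1]} - \eps$. A minor corner case is $v_j = v_{[1]}$ (bidder $j$ tied with $i$ at the top of $v$), where bids $\ge v_{[1]}$ also become best responses, but the revenue bound above still holds uniformly, so no separate argument is needed.
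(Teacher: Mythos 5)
Your proof is correct and takes essentially the same approach the paper intends: the paper explicitly omits this proof, stating it ``follows a similar argument as that of Claim \ref{claim2:gap},'' which is precisely the argument you give --- fix the informed bidders at their (dominant) truthful bids, show the single uninformed bidder $j$'s best response must be at least $v_{[1]}-\eps$ because any lower bid forfeits the strictly positive utility available in state $u$, and then read off the second-price revenue in each state. Your explicit characterization of $U(b)$ and the discussion of the tie-breaking boundary at $b=v_{[1]}-\eps$ fill in the details the paper leaves implicit; nothing here deviates from the intended route.
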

The proof of Lemma \ref{lem:private} follows a similar argument as that of Claim \ref{claim2:gap}, thus omitted here.  It is important to notice that Lemma \ref{lem:private} does not depend on the probability $\delta$ so long as it is strictly positive. Therefore, given any realized value profile v, if we can  mix it with a value profile $u$ of arbitrarily small probability that satisfied the properties in Lemma \ref{lem:private}, then the corresponding private scheme will force bidder $j$ -- the bidder with highest value in value profile $u$ -- to bid at least $u_{[2]} = v_{[1]} - \eps$. Such a bid will help the auctioneer leverage almost full surplus from $v$.  

For convenience, we will call $u$ the \emph{auxiliary vector} of $v$.  We now show that for any realized value profile $v \in \V$, such an auxiliary vector $u$ can be constructed. We divide the argument into four cases. Recall that we use $\rho^* = \max_{\theta,i} v_i(\theta)$ to denote the maximum value among all bidders' values at all states, and $i^*$ denotes the bidder who has  value $\rho^*$. Similarly, $\rho^{**} = \max_{i \not = i^*; \, \theta} v_i(\theta)$ is the maximum among all possible valuations, excluding bidder $i^*$'th, and $i^{**} \not = i^*$ denote a bidder who has the value $\rho^{**}$.

\begin{itemize}
	\item {\bf Case 0}: the bidder of  the largest value is not unique.  Let $I$ denote the set of bidders with the largest value $v_{[1]}$ in profile $v$. So $|I| > 1$.  Construct $w^1, w^2 \in \RR^n$ as follows. Let $w^1 =v$; Let $\tilde{i}$ be the largest index in $I$ and define $w^2_{i}=v_i$ for all $i \not \in I$, $w^2_{\tilde{i}} = v_{\tilde{i}}$ and $w^2_{i} = 0$ for all $i \in I, i \not = \tilde{i}$. Note that $w^1, w^2 \in \V = \prod_{i=1}^n \V_i$, thus have strictly positive probabilities by our assumption that $\lambda$ supports on the entire set $\V$.   It is easy to verify that a mixture of $w^1$ with probability $ 1 -\frac{\epsilon}{v_{[1]}}$ and $w^2$ with probability $\frac{\epsilon}{v_{[1]}}$ results in an expected value profile $u$ such that $u_i = v_i$ for all  $i \not \in  I$ and  $u_{\tilde{i}} = v_{[1]}$, $u_i = v_{[1]} - \epsilon$ for all $i \in I, i \not = \tilde{i}$. Such a $u$ satisfies the conditions in Lemma \ref{lem:private} for any arbitrarily small $\eps$. For the following analysis, we assume the vector $v$ has a unique maximum value. 
	
	\item {\bf Case 1}: $v_{i^*} \leq \rho^{**}$ and  $v_{i^*}$ is \emph{not} the largest value in $v$. Let $\tilde{i}$ be the bidder with the largest value $v_{[1]}$, thus $v_{\tilde{i}} = v_{[1]}>v_{i^*}$. Construct $w^1, w^2 \in \RR^n$ as follows. Let $w^1_i=v_i$ for all $i \not =  i^*,\tilde{i}$, and $w^1_{\tilde{i}} = v_{[1]}$, $w^1_{i^*} = \rho^*$; let $w^2_i=v_i$ for all $i \not = \tilde{i},i^*$, and $w^2_{\tilde{i}} = 0$, $w^2_{i^*} = \rho^*$.  It is easy to see that a mixture of $w^1$ with probability $ 1 -\frac{\epsilon}{v_{[1]}}$ and $w^2$ with probability $\frac{\epsilon}{v_{[1]}}$ results in an expected value profile $u$ such that $u_i = v_i$ for all  $i \not = \tilde{i},i^*$ and  $u_{\tilde{i}} = v_{[1]} - \epsilon$, $u_{i^*}= \rho^*$. 

	\item {\bf Case 2}: $v_{i^*} \leq \rho^{**}$ and  $v_{i^*}$ is  the largest value in $v$, thus $v_{i^*} = v_{[1]}$. Construct $w^1, w^2 \in \RR^n$ as follows. Let $w^1_i=v_i$ for all $i \not = i^{**}, i^*$, and $w^1_{i^{**}} = \rho^{**}$, $w^1_{i^*} = \rho^*$; let $w^2_i =v_i$ for all $i \not = i^{**},i^*$, and $w^2_{i^{**}} = \rho^{**}$, $w^2_{i^*} = 0$. A mixture of $w^1$ with probability $ \frac{ v_{[1]}-\epsilon}{\rho^*}$ and $w^2$ with probability $ \frac{ \rho^*-v_{[1]}+\epsilon}{\rho^*} $ results in an expected auxiliary vector $u$ such that $u_i = v_i$ for all  $i \not = i^{**},i^*$ and  $u_{i^{**}} = \rho^{**}$, $u_{i^*} = v_{[1]} - \epsilon$.  

	\item {\bf Case 3}: $v_{i^*} > \rho^{**}$, which happens with probability $\sum_{\theta: v_{i^*}(\theta)>\rho^{**}} \lambda_{\theta}$. This implies that bidder $i^*$ must have the highest value in $v$. Construct $w^1, w^2 \in \RR^n$ as follows.  Let $w^1_i=v_i$ for all $i \not = i^{**}, i^*$, and $w^1_{i^{**}} = \rho^{**}$, $w^1_{i^*} = \rho^*$; let $w^2_i=v_i$ for all $i \not = i^{**},i^*$, and $w^2_{i^{**}} = \rho^{**}$, $w^2_{i^*} = 0$. A mixture of $w^1$ with probability  $ \frac{ \rho^{**}-\epsilon}{\rho^{*} }$ and $w^2$  with probability $\frac{ \rho^* - \rho^{**} + \epsilon}{\rho^{*} }$ results in an expected auxiliary vector $u$ such that $u_i = v_i$ for all  $i \not = i^{**},i^*$ and  $u_{i^{**}} = \rho^{**}$, $u_{i^*} = \rho^{**} - \epsilon$.  Therefore, the second highest value in $u$ is $\rho^{**} - \epsilon$, which is the  revenue at the worst equilibrium (Lemma \ref{lem:private}). Since the full surplus  is $v_{i^*}$,  the revenue is at most $v_{i^*} - \rho^{**} + \epsilon$ away from  the full surplus. The aggregated surplus loss from this case is thus $\sum_{\theta:v_{i^*}(\theta)>\rho^{**}} \lambda_{\theta}[v_{i^*}(\theta) - \rho^{**} +\eps]$, which is precisely the excluded term in Theorem \ref{thm:private}, up to an $\eps$. 
\end{itemize}

To sum up,  the auxiliary vector $u$ can be constructed for any value profile $v$. Moreover, the construction corresponds to a signaling scheme that mixes two properly chosen value profiles $w^1, w^2$, as in the above case analysis. We can then use the private signaling scheme in Lemma \ref{lem:private} -- i.e., revealing no information to the bidder of the highest value in  the auxiliary vector $u$ and full information to all other bidders -- to extract almost full surplus from $v$. Note that the excluded term in Theorem \ref{thm:private} is mainly due to Case $3$.  Notice that the probability mass of the auxiliary vector $u$ can be arbitrarily small so long as it is positive, therefore the fact that a certain value profile is used for constructing auxiliary  vectors has negligible effects when we consider the revenue extracted from that value profile. 

The above argument only shows the existence of a private signaling scheme that extracts almost the full surplus. Next, we prove that this  scheme can be implemented efficiently. In particular, given any realized value profile $v(\theta)$, we will need to efficiently decide which signal to send with what probability. The main issue here is, when $v(\theta)$ is needed for constructing auxiliary vectors, we have to identify all the value profiles that use $v(\theta)$ to construct their auxiliary vectors, and signal accordingly. Our next lemma shows that this can be done efficiently, thus completing the proof of Theorem \ref{thm:private}.

\begin{lemma}
	The private signaling scheme above can be implemented in $\poly(n,\sum_i |\V_i|)$  time. 
\end{lemma}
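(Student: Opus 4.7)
The plan is to give an explicit procedure that, on input $\theta$ (equivalently, on input the realized profile $v(\theta)\in\V$), produces the private signals to each bidder, and runs in $\poly(n,\sum_i|\V_i|)$ time. The key difficulty is that the scheme constructed in the proof of Theorem~\ref{thm:private} is described in terms of \emph{pairs} (main profile $v$, auxiliary $u = (1-p(v))\,w^1(v)+p(v)\,w^2(v)$), so a realized profile could play two distinct roles: either it is the main profile of its own pair, or it is one of the auxiliary vectors $w^1(v^*)$ or $w^2(v^*)$ associated with some other main profile $v^*$. To implement the scheme consistently we must, at runtime, identify and weigh all such $v^*$.

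First I would observe that the constructions in Cases~0--3 modify $v$ in at most two coordinates, drawn from a structured set: namely $\{\tilde i,i^*,i^{**}\}$, and the new values are drawn from $\V_{i}\cup\{0\}$ for those coordinates. As a preprocessing step computable in $O(n\sum_i|\V_i|)$ time, I would identify $\rho^*,\rho^{**},i^*,i^{**}$ from the support $\V$. Then for each $v\in\V$, the case that applies is decided in $O(n)$ time from the relative order of $v$'s coordinates and these constants; $w^1(v),w^2(v)$ and the mixing probabilities $p_1(v),p_2(v)$ are then computed in $O(n)$ time.

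Second I would argue that for any candidate realized profile $v$, the set of main profiles $v^*$ with $v\in\{w^1(v^*),w^2(v^*)\}$ has size at most $\poly(n,\sum_i|\V_i|)$ and can be enumerated in that much time. This is a preimage enumeration: since $w^k(v^*)$ agrees with $v^*$ outside at most two distinguished coordinates, inverting $w^k$ amounts to (i) guessing which of Cases~0--3 was applied (constant), (ii) guessing which at most two coordinates of $v^*$ were modified (at most $n^2$ choices, in fact only $O(n)$ once the case is fixed since $i^*,i^{**}$ are global), and (iii) guessing the pre-modification values in the appropriate $\V_i$ (at most $(\max_i|\V_i|)^2$ choices). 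Each candidate $v^*$ is then verified in $O(n)$ time against the case conditions and against the equality $w^k(v^*)=v$, and only those passing are retained.

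Third, having identified the roles of $v(\theta)$, I would realize the overall scheme as a convex combination. For each $v\in\V$ I reserve a tiny fraction $\delta(v)>0$ of the mass $\lambda_v$ to serve as auxiliary, splitting that fraction among all main profiles $v^*$ for which $v$ is an auxiliary (with weights proportional to $\lambda_{v^*}\cdot p_k(v^*)$, using whichever of $w^1,w^2$ equals $v$), and use the remaining $1-\delta(v)$ fraction to play the main role for $v$ itself. On input $\theta$, the algorithm enumerates the polynomially many roles as above, samples one with the prescribed weight, and then outputs the signals dictated by Lemma~\ref{lem:private} for that pair (no information to the designated bidder $j$, full information to every other bidder).

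The main obstacle I anticipate is the bookkeeping for the probability allocation: we must choose $\delta(\cdot)$ so that (a) the auxiliary distribution actually seen by each main profile $v^*$ is $(1-p(v^*))w^1(v^*)+p(v^*)w^2(v^*)$ as required by Lemma~\ref{lem:private}, (b) all $\delta(v)$ are strictly positive but small enough that the revenue loss is within the target additive $\eps$, and (c) the total mass drawn from each $v$ matches $\lambda_v$. Since each $v$ is used as auxiliary by only polynomially many $v^*$ and since Lemma~\ref{lem:private} succeeds for \emph{arbitrarily small} positive auxiliary probability, a concrete choice such as $\delta(v) = \eps\,\lambda_v / (n\sum_i|\V_i|)$ (followed by a final renormalization across the roles of $v$) provides enough slack to satisfy all three requirements while keeping the total runtime polynomial.
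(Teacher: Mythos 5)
Your proposal follows the same overall strategy as the paper: invert the Cases~0--3 construction to enumerate the main profiles $v^*$ for which the realized $v(\theta)$ could serve as $w^1(v^*)$ or $w^2(v^*)$, then split the realized state's probability mass across these roles and signal according to Lemma~\ref{lem:private}. The paper asserts that this preimage enumeration yields only $\poly(\sum_i|\V_i|)$ candidates and dismisses the argument as a ``tedious discussion''; you attempt to actually carry it out, and that is where the gap appears.

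Your step (ii) claims that $w^k(v^*)$ differs from $v^*$ in at most two coordinates drawn from the structured set $\{\tilde i, i^*, i^{**}\}$. This holds in Cases~1--3, but it fails in Case~0: there $w^2(v^*)$ zeroes out \emph{every} coordinate in $I\setminus\{\tilde i\}$, where $I$ is the set of bidders tied at the maximum of $v^*$. If $|I|>3$, more than two coordinates are modified, and they come from the tied set $I$ rather than from $\{\tilde i,i^*,i^{**}\}$. Consequently the Case-0 preimages of a given $v$ under $w^2$ are indexed by the choice of a subset $I\subseteq \{\tilde i\}\cup\{i<\tilde i : v_i=0,\ v_{\tilde i}\in\V_i\}$ containing $\tilde i$ with $|I|\ge 2$ (the remaining coordinates of $v^*$ are then forced), and that family can be exponentially large in $n$. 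So the ``$n^2\cdot(\max_i|\V_i|)^2$ candidates, each verified in $O(n)$'' bound breaks for this case, and the enumeration step as you state it does not go through. To repair it you would need a dedicated argument for Case~0 --- for instance, noting that all of these exponentially many preimages demand the \emph{same} private signal (``reveal nothing to bidder $\tilde i$, full information to everyone else''), so their mass demands on $v$ can be aggregated in closed form without enumerating them, or modifying the Case-0 construction itself so that $w^2$ perturbs only a bounded number of coordinates while still leaving $u$ with a unique maximum. As written, though, the claim ``at most two coordinates'' is simply false for Case~0 and the preimage bound does not follow.
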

\begin{proof}
	We will show that given any realized value profile, we can explicitly check whether it is used for constructing auxiliary vectors, and if so, of which value vectors.  For convenience, we call the value profiles that are used for constructing auxiliary vectors the \emph{basic vectors}.  A key observation is that for any given  $v(\theta)$, we can explicitly identify all value profiles (if any) that use  $v(\theta)$ to construct their auxiliary vectors, and there are only  $\poly(\sum_i |\V_i|)$ of them.  This follows a tedious discussion, which essentially reverse the above case analysis. We omit the details here.

	With the above observation, we can now easily construct the desired private signaling scheme as follows.  Given any realized profile $v(\theta)$, let $u$ be its auxiliary value profile as we constructed and $i$ be the bidder of the highest value in $u$. If $v(\theta)$ is not a basic vector, we simply reveal $v(\theta)$ to all the bidders except that we tell bidder $i$ the value profile is $v(\theta)$ with probability $1-\delta$ and $u$ with probability $\delta$, for some preset small enough $\delta$. If $v(\theta)$ is a basic vector. We first find all the ($\poly(\sum_i |\V_i|)$ many) value profiles that use $v(\theta)$ to construct auxiliary vectors.  We can compute the probability needed from $v(\theta)$ for each such value profile, and send the signal with corresponding probabilities by referring to the case analysis above. For the left probability mass, we simply treat $v(\theta)$ as a basic vector and signal accordingly.
\end{proof}

\end{document}